\documentclass[11pt]{article}
\usepackage{verbatim,url,enumerate,color,paralist}
\usepackage{amsmath,amsfonts}
\usepackage{epsfig,amssymb,amstext,xspace,theorem}
\usepackage{algorithm}
\usepackage{algorithmicx,algpseudocode}
\usepackage{fullpage}
\usepackage{cite,wrapfig}

\newcommand{\qed}{\hspace*{\fill}$\Box$}
\newtheorem{theorem}{Theorem}[section]
\newtheorem{lemma}[theorem]{Lemma}
\newtheorem{corollary}[theorem]{Corollary}
\newtheorem{definition}[theorem]{Definition}

\newtheorem{claim}[theorem]{Claim}

\allowdisplaybreaks

\newenvironment{proof}[1][Proof. ]{\noindent {\bf #1 }}{\qed}

\newcommand{\argmin}{{\rm argmin}}

\newcommand{\Xomit}[1]{}


\begin{document}

\title{Submodular Approximation: 
Sampling-based Algorithms and Lower Bounds\thanks{This work supported in part 
by NSF grant CCF-0728869. A preliminary version of this paper has appeared in the Proceedings of the 49th Annual IEEE Symposium on Foundations of Computer Science.
} }

\author{\vspace{2mm}{Zoya Svitkina\thanks{Department of Computing Science, University of Alberta, Canada. 
} \hspace{2cm} Lisa Fleischer\thanks{Department of Computer Science, Dartmouth, USA.
}}
}

\date{May 31, 2010}

\maketitle

\begin{abstract}
We introduce several generalizations of classical computer science problems 
obtained by  
replacing  simpler objective functions with general submodular functions.
The new problems include
submodular load balancing, which generalizes load balancing or minimum-makespan 
scheduling, submodular sparsest cut and submodular
balanced cut, which generalize their respective graph cut problems, as well as 
submodular function minimization with a cardinality lower bound. 
We establish upper and lower bounds for the approximability of these problems 
with a polynomial number of queries to a function-value oracle.
The approximation guarantees for most of our algorithms are of the order of
$\sqrt{{n}/{\ln n}}$. We show that this is the inherent difficulty of the problems
by proving matching  lower bounds.

We also give an improved lower bound for the problem of approximating 
a monotone submodular function everywhere.  In addition, we present an algorithm for 
approximating submodular functions with special structure, whose 
guarantee is close to the lower bound. Although quite restrictive, the class 
of functions with this structure  
includes the ones that are used for lower bounds both by us and in 
previous work.  This demonstrates that if there are significantly stronger lower 
bounds for this problem, they rely on more general submodular functions.
\end{abstract}

\section{Introduction}
A function $f$ defined on subsets of a ground set $V$ is called \emph{submodular} if  
for all subsets $S, T\subseteq V$,  $f(S) + f(T) \geq f(S\cup T) + f(S\cap T)$.
Submodularity is a discrete analog of convexity.
It also shares some nice properties with concave functions, as it captures
decreasing marginal returns.  Submodular functions generalize
cut functions of graphs and rank functions of matrices and matroids,
and arise in a variety of applications including facility location,
assignment, scheduling, and network design.

In this paper, we introduce and study several generalizations of classical 
computer science problems. These new problems have a general submodular 
function in their objectives, in place of much simpler functions in the 
objectives of their classical counterparts. 
The problems include 
\emph{submodular load balancing}, which generalizes 
load balancing or minimum-makespan scheduling, and \emph{submodular 
minimization with cardinality lower bound}, which generalizes the minimum 
knapsack problem. In these two  
problems, the size of a collection of items, instead of being just a sum of 
their individual sizes, is now a submodular function. Two other
new problems are \emph{submodular sparsest cut} and \emph{submodular 
balanced cut}, which generalize their respective graph cut problems. Here, a 
general submodular function replaces the graph cut function, which itself is a 
well-known special case of a submodular function.  The last problem that we study is \emph{approximating a submodular function everywhere}.
All of these problems are defined on a set $V$ of $n$ elements with
a nonnegative submodular function $f:2^V\rightarrow \mathbb{R}_{\geq 0}$. 
Since the amount of information necessary to convey a general submodular
function may be exponential in $n$,
we rely on value-oracle access to $f$ to develop algorithms with running time 
polynomial in $n$. A \emph{value oracle} for $f$ is
a black box that, given a subset $S$, returns the value $f(S)$.
%
The following are formal definitions of the  problems.

\textbf{Submodular Sparsest Cut (SSC):}
Given a set of unordered pairs 
$\{\{u_i, v_i\}~|~ u_i,v_i\in V\}$, 
each with a demand $d_i>0$, find a subset $S\subseteq V$ minimizing
${f(S)}/{\sum_{i:|S\cap \{u_i,v_i\}|=1}d_i}$.
The denominator is the amount of demand separated by the ``cut'' 
$(S,\bar{S})$\footnote{For any set $S\subseteq V$, we use  
$\bar{S}$ to denote its complement set, $V\setminus S$.}.
In \emph{uniform} SSC, all pairs of nodes have demand equal to 
one, so the objective function is ${f(S)}/{|S||\bar{S}|}$.
Another special case is the 
\emph{weighted} SSC problem, in which each  element $v\in V$ has a non-negative 
weight $w(v)$,  and the demand between any pair of 
elements $\{u,v\}$ is equal to the product $w(u)\cdot w(v)$. 

\textbf{Submodular $b$-Balanced Cut (SBC):}
Given a weight function 
$w:V\rightarrow \mathbb{R}_{\geq 0}$, a cut $(S, \bar{S})$ is called 
$b$-balanced (for $b\leq \frac{1}{2}$) if $w(S)\geq b\cdot w(V)$ and 
$w(\bar{S})\geq b\cdot w(V)$, where $w(S)=\sum_{v\in S}w(v)$. 
The goal of the problem is to find a $b$-balanced cut $(S, \bar{S} )$ that 
minimizes $f(S)$. In the \emph{unweighted} special case, the weights of all elements are equal to one.

\textbf{Submodular Minimization with Cardinality Lower Bound (SML):}
For a given $W\geq 0$, find a subset $S\subseteq V$ with $|S|\geq W$ that minimizes 
$f(S)$. A generalization with 0-1 weights $w:V\rightarrow \{0,1\}$ is to 
find $S$ with $w(S) \geq W$ minimizing $f(S)$.  

\textbf{Submodular Load Balancing (SLB):}
The \emph{uniform} version is to find, given a monotone\footnote{ A function 
$f$ is monotone if $f(S) \leq f(T)$ whenever $S\subseteq T$.} submodular 
function $f$ and a positive integer $m$, a partition of $V$ into  
$m$ sets, $V_1,\dots, V_m$ (some possibly empty), so as to 
minimize $\max_i f(V_i)$.  
The \emph{non-uniform} version is to find, for $m$ monotone submodular functions
$f_1, \ldots, f_m$ on $V$, a partition 
$V_1,\dots, V_m$ that minimizes $\max_i f_i(V_i)$.  

\textbf{Approximating a Submodular Function Everywhere:}
Produce a function $\hat{f}$ (not necessarily submodular) that for all 
$S\subseteq V$ satisfies $\hat{f}(S) \leq f(S) \leq \gamma(n) \hat{f}(S)$,
with approximation ratio $\gamma(n) \geq 1$ as small as possible. 
We also consider the special case of 
monotone two-partition functions, which we define as follows. 
A submodular function $f$ on a ground set $V$ 
is a \emph{two-partition} (2P) function if there is a set $R\subseteq V$ 
such that for all sets $S$, the value of $f(S)$ depends only on the sizes 
$|S\cap R|$ and $|S \cap \bar{R}|$.

\subsection{Motivation}
Submodular functions arise in a variety of contexts, often in
optimization settings. 
The problems that we define in this paper use submodular functions to 
generalize some of the best-studied problems in computer science.  
These generalizations capture many variants of their corresponding classical 
problems. For example, the submodular sparsest and balanced cut problems 
generalize not only graph cuts, but also hypergraph cuts. In addition, they may 
be useful as subroutines for solving other problems, in the same way that  
sparsest and balanced cuts are used for approximating graph problems, such as 
the minimum cut linear arrangement, often as part of divide-and-conquer 
schemes.   
  The SML problem can model a 
scenario in which  costs  follow economies of scale, and a certain number 
of items has to be bought at the minimum total cost. An example application 
of SLB is compressing and storing files on multiple hard drives or servers 
in a load-balanced way. Here the size of a compressed collection of files may 
be much smaller than the sum of individual file sizes, and modeling it by a 
monotone submodular function is reasonable considering that the entropy 
function is known to be monotone and submodular \cite{fujishige:polymatroid}.


\subsection{Related work}
Because of the relation of submodularity to cut functions and
matroid rank functions, and their exhibition of 
decreasing marginal returns, there has been substantial
interest in optimization problems involving submodular functions.
Finding the set that has the minimum function value is
a well-studied problem that was first shown to be polynomially
solvable using the ellipsoid method~\cite{GLS81,GLS88}.  Further
research has yielded several more combinatorial 
approaches~\cite{FleischerIwata03,Iwata03,Iwata08,iwata:fleischer:fujishige,iwata:orlin,Orlin07,Queyranne98,schrijver:combinatorial}.

Submodular functions arise in
facility location and assignment problems, and this has
spawned interest in the problem of finding the set with
the maximum function value.  Since this is NP-hard,
research has focused on approximation algorithms
for maximizing monotone or non-monotone submodular functions, perhaps subject to cardinality or other constraints~\cite{calinescu:submod,feige-maxsubmod,kulik,lee:nonmon,lee:sviridenko:vondrak,nemhauser:wolsey:fisher,sviridenko:note}.
A general approach for deriving inapproximability results for such maximization problems is presented in \cite{vondrak:symmetry}.

Research on other optimization problems that involve submodular functions includes \cite{calinescu:zelikovsky,chekuri:pal:recursive,hayrapetyan:swamy:tardos,svitkina:tardos:facility,SwamySW07,wolsey:analysis}. 
Zhao et al.\ \cite{ZhaoNagamochiIbaraki} study a submodular multiway 
partition problem, which is similar to our SLB problem, except that the subsets are required to be non-empty and the objective is the sum of function values on the  subsets, as opposed to the maximum. 
Subsequent to the publication of the preliminary version of this paper, generalizations of other  combinatorial problems to submodular costs have been defined, with upper and lower bounds derived for them. These include the set cover problem and its special cases vertex cover and edge cover, studied in \cite{iwata:nagano}, as well as vertex cover, shortest path, perfect matching, and spanning tree studied in \cite{goel:karande}. In \cite{goel:karande}, extensions to the case of multiple agents (with different cost functions) are also considered.

Since it is impossible to learn a general 
submodular function exactly without looking at the function
value on all (exponentially many) subsets~\cite{Cunningham85}, 
there has been recent interest in approximating submodular functions everywhere with a polynomial number of value oracle queries.  
Goemans et al.\ \cite{goemans:learning} give an algorithm that approximates an arbitrary monotone submodular function to a factor $\gamma(n)=O(\sqrt{n} \log n)$, and approximates a rank function of a matroid to a factor $\gamma(n)=\sqrt{n+1}$. 
A lower bound of $\Omega \left(\frac{\sqrt{n}}{\ln n}\right)$ for this problem on monotone functions and an improved lower bound of 
$\Omega \left(\sqrt{\frac{n}{\ln n}}\right)$ for 
non-monotone functions were obtained 
in \cite{goemans:harvey:kleinberg:mirrokni,goemans:learning}. These lower bounds apply to all algorithms that make a polynomial number of value-oracle queries.

All of the optimization problems that we consider in this
paper are known to be NP-hard even when the objective
function can be expressed compactly as a linear or
graph-cut function.  While there is an FPTAS for the
minimum knapsack problem~\cite{gens:levner}, the
best approximation for load balancing
on uniform machines is a PTAS~\cite{HochbaumShmoys87}, and
on unrelated machines the best possible upper and lower bounds are 
constants~\cite{LenstraST90}. The best approximation known
for the sparsest cut problem is 
$O(\sqrt{\log n})$~\cite{AroraHK04,arora:rao:vazirani:sqrt-logn}, and the 
balanced cut problem is approximable to a factor of $O(\log n)$ 
\cite{racke:optimal}. 
For the special case of SML on graphs, introduced in 
\cite{svitkina:tardos:multiway-cuts}, an $O(\log n)$ approximation is possible 
using the recent results of  R\"acke \cite{racke:optimal}.

\subsection{Our results and techniques}
We establish upper and lower bounds for the approximability of the problems 
listed above. Surprisingly, these factors are 
quite high. Whereas the corresponding classical problems are approximable to 
constant or logarithmic factors, the guarantees that we prove for most 
of our algorithms are of the order of $\sqrt{\frac{n}{\ln n}}$. We  show that 
this is the inherent difficulty of these problems by proving matching (or, in 
some cases, almost matching) lower 
bounds. Our lower bounds are unconditional, and rely on the 
difficulty of distinguishing different submodular functions by performing only 
a polynomial number of queries in the oracle model. The proofs are based on 
the techniques in \cite{feige-maxsubmod,goemans:learning}. To prove the upper bounds, we present randomized approximation algorithms which use their randomness for 
sampling subsets of the ground set of elements. We show that with relatively 
high probability (inverse polynomial), a sample can be obtained such that its 
overlap with the optimal set is significantly higher than expected. 
Using the samples, the algorithms employ submodular function minimization to 
find   
candidate solutions. This is done in such a way that if the sample does indeed 
have a large overlap with the optimal set, then the solution 
satisfies the algorithm's guarantee.

For SSC and uniform SLB, we show that they can be approximated to a 
$\Theta \left(\sqrt{\frac{n}{\ln n}} \right)$ factor. For SBC, we use the 
weighted SSC as a subroutine, which allows us  to obtain a bicriteria 
approximation in a similar way as Leighton and Rao \cite{leighton:rao} do for 
graphs. For SML, we also consider bicriteria results. For $\rho \geq 1$ and 
$0<\sigma \leq 1$, a $(\rho,\sigma)$-approximation for SML is an algorithm that 
outputs a set $S$ such that $f(S)\leq \rho B $ and $w(S)\geq \sigma W$, 
whenever the input instance contains a set $U$ with $f(U)\leq B$ and 
$w(U)\geq W$. We present a lower bound showing that there is no   
$(\rho,\sigma)$ approximation for any $\rho$ and $\sigma$ with 
$\frac{\rho}{\sigma} = o\left(\sqrt{\frac{n}{\ln n}}\right)$.  For 0-1 
weights, we  obtain a $\left(5 \sqrt{\frac{n}{\ln n}}, \frac{1}{2}\right)$  
approximation. This  algorithm can be used to obtain an $O(\sqrt{n \ln n})$ 
approximation for non-uniform SLB. 

We briefly note here that one can consider the problem of minimizing a 
submodular function with an \emph{upper} bound on cardinality (i.e., 
minimize $f(S)$ subject to $|S|\leq W$). 
For this problem, 
a $(\frac{1}{\alpha},\frac{1}{1-\alpha})$ bicriteria approximation is possible  
for any $0 < \alpha < 1$, using techniques in 
\cite{hayrapetyan:kempe:pal:svitkina}. For non-bicriteria algorithms, a hardness result of $\Omega\left(\sqrt{\frac{n}{\ln n}}\right)$ follows by reduction from SML, using the submodular function $\bar{f}$, defined as $\bar{f}(S) = f(\bar{S})$, and a cardinality bound $\overline{W} = n-W$.

For  approximating \emph{monotone} submodular functions everywhere, our lower bound is $\Omega\left(\sqrt{\frac{n}{\ln n}}\right)$, 
which improves the bound for monotone functions 
in~\cite{goemans:harvey:kleinberg:mirrokni,goemans:learning}, 
and matches the lower bound for arbitrary submodular functions, also 
in~\cite{goemans:harvey:kleinberg:mirrokni,goemans:learning}.
Our lower bound proof for this problem, as well as the earlier ones, 
use 2P functions, and thus still hold for this special case. We  show that monotone 2P functions can be approximated within a factor $O(\sqrt{n})$.   
Besides leaving a relatively small gap between the upper and lower bounds, 
this shows that if much stronger lower bounds for the approximation problem exist, 
they rely on more general submodular functions. 

For the problems studied in this paper, our lower bounds  show 
the impossibility of constant or even polylogarithmic approximations  in the 
value oracle  model. This means that in order to obtain better results for 
specific applications, one has to resort to more restricted models, 
avoiding the full generality of arbitrary submodular functions.

\section{Preliminaries}
In the analysis of our algorithms, we repeatedly use the facts that the sum of  
submodular  functions is submodular, and that submodular functions can be 
minimized in polynomial time.  
For example, this allows us to minimize (over $T\subseteq V$) expressions like 
$f(T)-\alpha\cdot|T\cap S|$, where $\alpha$ is a constant and $S$ is a fixed 
subset of $V$.


We present our algorithms by providing 
a \emph{randomized relaxed decision procedure} for each of the problems.  Given
an instance of a minimization problem, a target value $B$, and a probability 
$p$, this procedure either declares that the problem is infeasible (outputs \emph{fail}),
or finds a solution to the instance with objective value
at most $\gamma B$, where $\gamma$ is the approximation factor. 
We say that an instance is feasible if it has a solution with cost strictly less than 
$B$ (we use strict inequality for technical reasons; this can be avoided by 
adding a small value $\varepsilon>0$ to $B$). 
The guarantee provided with each decision procedure is that for any feasible 
instance, it outputs a $\gamma$-approximate solution with probability at 
least $p$. On an infeasible instance, either of the two outcomes is allowed.
Randomized relaxed decision procedures can be turned into randomized 
approximation  
algorithms by finding upper and lower bounds for the optimum and performing 
binary search. Our algorithms run in time polynomial in $n$ and 
$\ln\frac{1}{1-p}$. 

Let us say that an algorithm \emph{distinguishes} two functions $f_1$ and $f_2$ 
if it produces different output if given (an oracle for) $f_1$ as input than 
if given (an oracle for) $f_2$. The following result is used for obtaining 
all of our lower bounds.

\begin{lemma}\label{lem:alg}
Let $f_1$ and $f_2$ be two set functions, with $f_2$, but not $f_1$, 
parametrized by a string of random bits $r$. If for any set $S$, chosen 
without knowledge of $r$, the probability (over $r$) that $f_1(S)\neq f_2(S)$ is 
$n^{-\omega(1)}$, then 
any algorithm that makes a polynomial number of oracle queries has 
probability at most $n^{-\omega(1)}$ of distinguishing $f_1$ and $f_2$.
\end{lemma}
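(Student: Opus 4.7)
The plan is to reduce to the deterministic case, then couple the two executions and argue that each individual query has only inverse-superpolynomial probability of being the first point where the oracle answers diverge.

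First I would fix the algorithm's internal coins by averaging: it suffices to show the bound for an arbitrary deterministic algorithm $A$, because then the same bound holds in expectation over $A$'s random bits and hence for at least one fixing. So assume $A$ is deterministic and let $k = \text{poly}(n)$ be an upper bound on the number of queries it makes. Run $A$ once with oracle $f_1$; because $A$ is deterministic and $f_1$ does not depend on $r$, this produces a fixed transcript of queries $S_1^{(1)}, S_2^{(1)}, \ldots, S_k^{(1)}$ together with a fixed output $y_1$, neither depending on $r$.

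Next I would couple this with the execution of $A$ on oracle $f_2$, producing a (random, $r$-dependent) transcript $S_1^{(2)}, S_2^{(2)}, \ldots$ and output $y_2$. Define the stopping time $\tau$ to be the first index $i$ at which $f_1(S_i^{(1)}) \neq f_2(S_i^{(1)})$. Since $A$ is deterministic, as long as the first $i-1$ answers under $f_2$ agree with those under $f_1$, the algorithm makes exactly the same $i$-th query: $S_i^{(2)} = S_i^{(1)}$. A straightforward induction then shows that on the event $\{\tau > k\}$ both transcripts coincide completely and therefore $y_1 = y_2$; equivalently, $A$ fails to distinguish $f_1$ from $f_2$ on that event.

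It remains to bound $\Pr_r[\tau \le k]$. The crucial observation is that $S_i^{(1)}$ is a fixed set, chosen (by construction) without any knowledge of $r$. Hence the hypothesis applies directly:
\[
\Pr_r\!\left[f_1(S_i^{(1)}) \neq f_2(S_i^{(1)})\right] \le n^{-\omega(1)}
\quad\text{for each } i = 1, \ldots, k.
\]
A union bound over the $k = \text{poly}(n)$ queries gives $\Pr_r[\tau \le k] \le k \cdot n^{-\omega(1)} = n^{-\omega(1)}$, which is exactly the claimed distinguishing bound.

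The only subtle step, which I expect to be the main (though still minor) obstacle, is getting the coupling right: one must resist the temptation to analyze the queries $S_i^{(2)}$ made against $f_2$, since those genuinely depend on $r$ and the hypothesis would not apply to them. Anchoring the analysis on the $r$-independent transcript $S_i^{(1)}$ and only switching to $f_2$ at the first divergence sidesteps this issue cleanly, and the rest is a union bound.
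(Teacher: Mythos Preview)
Your proposal is correct and follows essentially the same approach as the paper: reduce to a deterministic algorithm, anchor the analysis on the $r$-independent transcript produced when the oracle is $f_1$, apply the hypothesis to each of those fixed query sets, and finish with a union bound over the polynomially many queries. Your coupling/stopping-time formulation is slightly more explicit than the paper's ``computation path'' language, but the underlying argument is identical.
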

\begin{proof}
We use reasoning similar to \cite{feige-maxsubmod}. Consider first a 
deterministic algorithm and the computation path that it follows if it 
receives the values of $f_1$ as answers to all its oracle queries. 
Note that this is a single computation path that does not depend on $r$, 
because $f_1$ does not depend on $r$. On this path the algorithm makes some
polynomial number of oracle queries, say $n^a$. 
Using the union bound, we 
know that the probability that $f_1$ and $f_2$ differ on any of these $n^a$ sets is 
at most $n^a\cdot n^{-\omega(1)}=n^{-\omega(1)}$. 
So, with probability at least $1-n^{-\omega(1)}$, if given either $f_1$ or 
$f_2$ as input, the algorithm only queries sets for which $f_1=f_2$, and 
therefore stays on the same computation path, producing the same  answer in
both cases.

A randomized algorithm can be viewed as a distribution over a set of 
deterministic algorithms.  Since, by the discussion above, each of these 
deterministic algorithms has probability at most $n^{-\omega(1)}$ of 
distinguishing $f_1$ and $f_2$, the randomized algorithm as a whole also has 
probability at most $n^{-\omega(1)}$ of distinguishing these two functions. 
\end{proof}

\medskip

The following theorem about random sampling is used for bounding 
probabilities in the analyses of our algorithms. We use the constant 
$c=1/(4\sqrt{2\pi})$ throughout the paper.

\begin{theorem} \label{thm:gen}
Suppose that $m$ elements are selected independently, with probability $0<q<1$ 
each. Then for 
$0 \leq \varepsilon < \frac{1-q}{q}$, 
the probability that exactly $\lceil qm(1+\varepsilon) \rceil$ elements are selected is at least 
${cq}\cdot {m^{-\frac{3}{2}}}\cdot \exp\left[\frac{-\varepsilon^2qm}{1-q}\right]$. 
\end{theorem}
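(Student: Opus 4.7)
The plan is to write the probability as $P(X=k) = \binom{m}{k} q^k (1-q)^{m-k}$ with $k = \lceil qm(1+\varepsilon)\rceil$, and attack it via the two-sided Stirling bounds $\sqrt{2\pi n}(n/e)^n \le n! \le \sqrt{2\pi n}(n/e)^n\, e^{1/(12n)}$. Applying the lower bound to $m!$ and the upper bound to $k!$ and $(m-k)!$ and collecting terms yields
\[
P(X=k)\;\ge\;\frac{e^{-1/6}}{\sqrt{2\pi m\,p(1-p)}}\,\exp\bigl[-mD(p\|q)\bigr],
\]
where $p=k/m$ and $D(p\|q)=p\ln(p/q)+(1-p)\ln((1-p)/(1-q))$ is the binary KL divergence. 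The rest of the proof then consists of an upper bound on $mD(p\|q)$ and a lower bound on the polynomial prefactor.

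For the KL divergence, I would apply the elementary inequality $\ln x\le x-1$ to each of the two logarithms in $D(p\|q)$; after the linear-in-$(p-q)$ terms cancel, this collapses to $D(p\|q)\le (p-q)^2/(q(1-q))$. Since the ceiling forces $p - q \le q\varepsilon + 1/m$, expanding the square produces the targeted exponent $mq\varepsilon^2/(1-q)$ plus two ``ceiling-slack'' contributions of order $\varepsilon/(1-q)$ and $1/(mq(1-q))$. For the prefactor, the crude estimate $p(1-p)\le 1/4$ already gives $1/\sqrt{2\pi m\,p(1-p)} \ge \sqrt{2/(\pi m)}$, which exceeds the target $cq/m^{3/2}$ (recall $c=1/(4\sqrt{2\pi})$) by a factor of order $m/q$; this surplus is designed to absorb the $e^{-1/6}$ Stirling constant together with the ceiling-slack terms in the exponent.

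The delicate part, and the main obstacle, is making this absorption uniform in $\varepsilon\in[0,(1-q)/q)$: near the upper endpoint the slack $\varepsilon/(1-q)$ can be as large as $1/q$, which is not controlled by a merely logarithmic surplus. To handle that regime I would sharpen the divergence estimate to $D(p\|q)\le (p-q)^2/(2q(1-q))$ by integrating the identity $D''(p\|q)=1/(p(1-p))$ twice and using $t(1-t)\ge q(1-q)$ on $[q,p]$ when $p\le 1-q$; in the remaining regime (where $\varepsilon$ is forced to be of constant size and $p$ approaches $1$), the claimed lower bound is already small enough that the cruder argument combined with the $m/q$ prefactor slack suffices. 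Finally, Stirling requires $k\ge 1$ and $m-k\ge 1$; both follow from $0<q<1$ and $\varepsilon<(1-q)/q$ once $qm\ge 1$, with the tiny-$m$ boundary cases verified by direct computation.
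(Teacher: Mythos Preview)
Your KL-divergence framework is sound, and the bound $D(p\|q)\le (p-q)^2/(q(1-q))$ is exactly what the paper obtains (in disguise) via $1+x\le e^x$. The gap is in how you handle the ceiling. Writing $p=k/m$ with $k=\lceil qm(1+\varepsilon)\rceil$ and $\delta=k-qm(1+\varepsilon)\in[0,1)$, the difference $mD(p\|q)-mq\varepsilon^2/(1-q)$ contains the cross term $2\varepsilon\delta/(1-q)$, which can be as large as $2/q$ since $\varepsilon$ may approach $(1-q)/q$. Your prefactor surplus is only $\Theta(m/q)$, i.e.\ $O(\ln(m/q))$ in the exponent, and this cannot absorb $e^{2/q}$ once $q=o(1/\ln m)$. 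Concretely, take $q=1/\sqrt{m}$ and $qm(1+\varepsilon)=m-3/2$: then $k=m-1$, $\delta=1/2$, $\varepsilon\approx\sqrt{m}$, and the excess in the exponent is $\approx\sqrt{m}$ against a merely logarithmic surplus. Your sharpened bound $D(p\|q)\le (p-q)^2/(2q(1-q))$ does not rescue this: it requires $p\le 1-q$, and the regime $p>1-q$ is precisely where $\varepsilon$ is near its upper endpoint and the cross term is worst. The assertion that ``the cruder argument \ldots\ suffices'' there is not justified and fails on the example above. A secondary issue: your claim that $m-k\ge 1$ whenever $qm\ge 1$ is incorrect, since $q(1+\varepsilon)$ can lie in $(1-1/m,\,1)$, forcing $k=m$; this is not a ``tiny-$m$'' boundary case.

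The paper sidesteps all of this by decoupling the ceiling from the exponential estimate. It first proves, for \emph{integer} $\lambda=qm(1+\varepsilon)$, the stronger bound $\Pr[X=\lambda]\ge (c/\sqrt{m})\exp[-\varepsilon^2 qm/(1-q)]$, with no slack because $p=q(1+\varepsilon)$ exactly. For fractional $\lambda$ it applies this to $\lfloor\lambda\rfloor=qm(1+\varepsilon')$ with $\varepsilon'\le\varepsilon$ (so the exponential can only improve), and then pays a single multiplicative factor $\ge q/m$ via the ratio $\Pr[X=\lceil\lambda\rceil]/\Pr[X=\lfloor\lambda\rfloor]$; that factor is exactly what degrades $c/\sqrt{m}$ to $cq/m^{3/2}$. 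The trick, in short, is to round \emph{down} before bounding the exponential and push the rounding cost entirely into the polynomial prefactor.
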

\begin{proof}
Let $\lambda = qm(1+\varepsilon)$. 
First we consider the case that $\lambda$ is integer. 
For convenience, let $\kappa = q(1+\varepsilon)$, and note that $\kappa<1$.
Using an approximation that 
$\sqrt{2\pi n} \left(\frac{n}{e}\right)^n \leq n! \leq 
2 \sqrt{2\pi n} \left(\frac{n}{e}\right)^n$,  
which is derived from Stirling's formula \cite[p.\ 55]{clrs}, we obtain the bound 
\begin{eqnarray*}
{m \choose m\kappa} ~=~ \frac{m!}{(m\kappa)! (m-m\kappa)!} &\geq&
\frac{\sqrt{2\pi}}{(2\sqrt{2\pi})^2} \cdot
\frac{\sqrt{m}}{\sqrt{m\kappa}\sqrt{m-m\kappa}}\cdot
\frac{  (m/e)^m}{  (m\kappa/e)^{m\kappa}  ((m-m\kappa)/e)^{m-m\kappa}}\\
&\geq& \frac{1}{4\sqrt{2\pi}} \cdot\frac{1}{\sqrt{m}} \cdot
\frac{1}{\kappa^{m\kappa} (1-\kappa)^{m-m\kappa}}.
\end{eqnarray*}
Let $X$ be the number of elements selected in the random experiment. 
Then
\begin{eqnarray*}
\Pr[X=m\kappa] 
~=~
{m \choose {m\kappa}} q^{m \kappa} (1-q)^{m-m\kappa}
&\geq& 
\frac{c}{\sqrt{m}} \cdot \frac{q^{m \kappa}\cdot (1-q)^{m-m \kappa}}
{\kappa^{m \kappa} \cdot \left(1-\kappa\right)^{m-m\kappa}}\\ 
&=&
\frac{c}{\sqrt{m}} \cdot 
\left(\frac{1}{1+\varepsilon}\right)^{m \kappa}\cdot
\left(\frac{1-q}{1-q(1+\varepsilon)}\right)^{m-m\kappa} \\
&=&
\frac{c}{\sqrt{m}} \cdot 
\frac{1}{(1+\varepsilon)^{m\kappa}}\cdot
\frac{1}{\left(1-\frac{\varepsilon q}{1-q}\right)^{m-m\kappa}}\\ 
&\geq&
\frac{c}{\sqrt{m}} \cdot 
\exp\left[{-\varepsilon m \kappa + \frac{\varepsilon q}{1-q}m(1-\kappa)}\right],
\end{eqnarray*}
where we have used the inequality that $1+x\leq e^x$ for all $x$. The 
assumption that $\varepsilon < \frac{1-q}{q}$ ensures that the  
denominator ${1-q(1+\varepsilon)}$ is positive.
Now, the exponent of $e$ is equal to 
$${-\varepsilon qm(1+\varepsilon) + \frac{\varepsilon q}{1-q}m(1-q-\varepsilon q)} ~=~ 
-\varepsilon qm - \varepsilon^2 qm + \varepsilon qm - \frac{\varepsilon^2q^2m}{1-q}
~=~  \frac{-\varepsilon^2 qm}{1-q}.$$
Noting that $c\cdot m^{-\frac{1}{2}} \geq cq\cdot m^{-\frac{3}{2}}$ concludes the proof for the case that $\lambda$ is integer.

If $\lambda$ is fractional, then $\lceil \lambda \rceil = \lfloor \lambda \rfloor + 1$. Then
\begin{equation} \label{eq:rat}
\frac{\Pr[X=\lceil \lambda \rceil]}{\Pr[X=\lfloor \lambda \rfloor]}
~=~ \frac{{m \choose \lfloor \lambda \rfloor+1} \ q^{\lfloor \lambda \rfloor+1} \ (1-q)^{m-\lfloor \lambda \rfloor-1}}{{m \choose \lfloor \lambda \rfloor} \ 
q^{\lfloor \lambda \rfloor} \  (1-q)^{m-\lfloor \lambda \rfloor}}
~=~ \frac{(m-\lfloor \lambda \rfloor) \ q}{(\lfloor \lambda \rfloor+1) \ (1-q)}.
\end{equation}
As $\varepsilon\geq 0$, we have $\lambda\geq qm$. Now consider the case that 
$\lfloor \lambda \rfloor \leq qm$. As $qm$ is the expectation of $X$, either
$\lceil \lambda \rceil$ or $\lfloor \lambda \rfloor$ is the most likely value of $X$, having probability of at least $\frac{1}{m+1}$. In the first case, 
$\Pr[X=\lceil \lambda \rceil]\geq \frac{1}{m+1} \geq \frac{c}{m}$, and we are done. In the second case, using sequentially (\ref{eq:rat}), 
$\lfloor \lambda \rfloor \leq qm$, and 
$\lfloor \lambda \rfloor +1 = \lceil \lambda \rceil \leq m$ 
(which is implied by $\kappa <1$ above), we obtain the result:
$$\Pr[X=\lceil \lambda \rceil] 
~\geq~ \frac{1}{m+1} \cdot 
\frac{(m-\lfloor \lambda \rfloor) \ q}{(\lfloor \lambda \rfloor+1) \ (1-q)}
~\geq~ \frac{1}{m+1} \cdot \frac{mq}{\lfloor \lambda \rfloor+1}
~\geq~ \frac{cq}{m}.
$$

The remaining case is that $\lfloor \lambda \rfloor > qm$. Define $\varepsilon'>0$ to be such that $qm(1+\varepsilon') = \lfloor qm(1+\varepsilon) \rfloor = \lfloor \lambda \rfloor$. Note that $\varepsilon' \leq \varepsilon$. 
Applying the proof that we used for integer $\lambda$, we obtain that 
$$\Pr[X=\lfloor \lambda \rfloor] 
~\geq~ \frac{c}{\sqrt{m}} \cdot \exp\left[\frac{-\varepsilon'^2qm}{1-q}\right] 
~\geq~ \frac{c}{\sqrt{m}} \cdot \exp\left[\frac{-\varepsilon^2qm}{1-q}\right],
$$ 
where we also used monotonicity of the exponential function. Using the fact that 
$\lfloor \lambda \rfloor \leq m-1$, we simplify equation (\ref{eq:rat}) to obtain that ${\Pr[X=\lceil \lambda \rceil]}/{\Pr[X=\lfloor \lambda \rfloor]} \geq \frac{q}{m}$. Together with the above inequality, this gives the desired result.
\end{proof}


\section{Submodular sparsest cut and submodular balanced cut}
\label{sec:ssc}

\subsection{Lower bounds}
Let $\varepsilon>0$ be such that 
$\varepsilon^2 = \frac{1}{n}\cdot \omega(\ln n)$, let  
$\beta=\frac{n}{4}(1+\varepsilon)$, and let $R$ be a subset of $V$ of size $\frac{n}{2}$, with parameters such that $n$ is even and $\beta$ is an integer. We define the following two functions, and show that they are submodular and hard to distinguish. Moreover, these functions are symmetric\footnote{A function $f$ is symmetric if $f(S)=f(\bar{S})$ for all $S$.}.
\begin{eqnarray*}
f_1(S) &=& \min\left(|S|,~\frac{n}{2}\right)-\frac{|S|}{2}\\
f_2(S) &=& \min\left(|S|,~\frac{n}{2},~ \beta+|S\cap R|,~ \beta+|S\cap \bar{R}| \right) -\frac{|S|}{2}
\end{eqnarray*}

\begin{lemma}
Functions $f_1$ and $f_2$ defined above are nonnegative, submodular, and symmetric.
\end{lemma}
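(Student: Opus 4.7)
The plan is to verify the three properties separately, with the submodularity of $f_2$ being the substantive step; the key insight is a rewriting that exhibits $f_2(S)+|S|/2$ as the rank function of a truncated partition matroid, after which everything else reduces to standard closure properties.

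First, for $f_1$, writing $f_1(S) = \min(|S|,n/2) - |S|/2 = \frac{1}{2}\min(|S|,\, n-|S|)$ makes both symmetry and nonnegativity immediate. For submodularity, $\min(|S|,n/2)$ is the rank function of the uniform matroid $U_{n,n/2}$ and $|S|/2$ is modular, so $f_1$ is submodular as the difference of a submodular and a modular function.

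For $f_2$, the central claim I would prove is the identity
\[
\min\bigl(|S|,\; n/2,\; \beta+|S\cap R|,\; \beta+|S\cap \bar{R}|\bigr) \;=\; \min\Bigl(n/2,\; \min(|S\cap R|,\beta) + \min(|S\cap \bar{R}|,\beta)\Bigr),
\]
by case analysis on whether $|S\cap R|$ and $|S\cap \bar{R}|$ exceed $\beta$. The only nontrivial case is when both do, where the left side simplifies to $\beta+\min(|S\cap R|,|S\cap\bar{R}|)$ and the right side to $2\beta$; both are at least $n/2$ because $\beta\ge n/4$ forces $\beta+|S\cap R|>n/2$ and $2\beta>n/2$, so the outer truncation to $n/2$ equalizes the two. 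The right-hand side is the rank function of the $n/2$-truncation of the partition matroid on $V$ with classes $R$ and $\bar{R}$, each of capacity $\beta$; as a truncation of a matroid, it is itself a matroid rank function and hence submodular. Subtracting the modular $|S|/2$ yields submodularity of $f_2$.

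Nonnegativity of $f_2$ then reduces to the elementary inequality $\min(a,\beta)\ge a/2$ for $a\le n/2$, which holds because $\beta\ge n/4$; applied to $a=|S\cap R|$ and $a=|S\cap \bar{R}|$ and summed, this shows the right-hand side of the identity is at least $|S|/2$, and $n/2\ge |S|/2$ is trivial. For symmetry of $f_2$, I would subtract $|S|/2$ inside the min of the original definition to obtain
\[
f_2(S) \;=\; \min\Bigl(\tfrac{1}{2}\min(|S|,\,n-|S|),\; \beta - \tfrac{1}{2}\bigl||S\cap R|-|S\cap\bar{R}|\bigr|\Bigr),
\]
in which each argument of the outer minimum is visibly invariant under $S\mapsto\bar{S}$: $|\bar{S}|=n-|S|$, and using $|R|=|\bar{R}|=n/2$ we have $|\bar{S}\cap R|-|\bar{S}\cap\bar{R}|=-(|S\cap R|-|S\cap\bar{R}|)$. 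The expected obstacle is the matroid identity underlying submodularity of $f_2$; once that identity is in hand, both symmetry and nonnegativity follow with essentially no further work.
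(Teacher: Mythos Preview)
Your proof is correct and genuinely different from the paper's. The paper handles submodularity of $f_2$ by a direct (and rather lengthy) contradiction argument using the marginal-returns characterization: it assumes there exist $a,b$ with $f(S\cup\{a,b\})-f(S\cup\{b\})=1$ but $f(S\cup\{a\})-f(S)=0$, and then rules out each combination of which of the four terms in the minimum is active. Your route---rewriting $f_2(S)+|S|/2$ as $\min\bigl(n/2,\,\min(|S\cap R|,\beta)+\min(|S\cap\bar R|,\beta)\bigr)$ and recognizing this as the rank function of the $n/2$-truncation of a partition matroid---replaces that case analysis with a single structural observation and standard closure properties. The identity does need the hypothesis $\beta>n/4$ (in the case $|S\cap R|,|S\cap\bar R|>\beta$, the two sides differ before the outer truncation and only agree because both exceed $n/2$), and you use that correctly. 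Your treatment of symmetry, via the rewriting $f_2(S)=\min\bigl(\tfrac12\min(|S|,n-|S|),\,\beta-\tfrac12\bigl||S\cap R|-|S\cap\bar R|\bigr|\bigr)$, is also cleaner than the paper's coordinate-by-coordinate verification. The trade-off is that the paper's argument is fully self-contained, whereas yours invokes that truncations of matroids are matroids; but this is standard, and overall your approach is more conceptual and shorter.
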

\begin{proof}
The first function can be written as $f_1(S)=\frac{1}{2} \min(|S|, |\bar{S}|)$, which makes it easy to see that it is nonnegative and symmetric. 
It suffices to show that $f(S)=\min(|S|,\frac{n}{2})$ is submodular, since $-\frac{|S|}{2}$ is modular\footnote{A modular function is one for which the submodular inequality is satisfied with equality.}. We use an alternative definition of submodularity: $f$ is submodular if 
for all $S\subset V$ and $a,b\in V\setminus S$, with $a\neq b$, it holds that 
$f(S\cup\{a,b\}) - f(S\cup \{b\}) \leq f(S\cup\{a\}) - f(S)$.
The only way that this inequality can be violated for our function is if $f(S\cup\{a,b\}) - f(S\cup \{b\})=1$ and $f(S\cup\{a\}) - f(S)=0$. But this is a contradiction, since the second part implies that $|S|\geq n/2$, and the first one implies that $|S\cup \{b\}|<n/2$.

To see that $f_2(S)$ is nonnegative, we note that $\beta +|S\cap R|-\frac{|S|}{2} \geq \frac{n}{4} +|S\cap R| - \frac{|S\cap R|}{2} - \frac{|S\cap \bar{R}|}{2} \geq 0$, since $|S\cap \bar{R}|\leq \frac{n}{2}$.
A similar calculation shows that $\beta +|S\cap \bar{R}|-\frac{|S|}{2}\geq 0$, and thus $f_2(S)\geq 0$ for all $S$. 
To show symmetry, we use the fact that $|R|=\frac{n}{2}$, and thus 
$$|S\cap R|-\frac{|S|}{2} = \frac{n}{2} - |\bar{S}\cap {R}| -\frac{|S|}{2} = \frac{|\bar{S}|}{2}- |\bar{S}\cap {R}| = -\frac{|\bar{S}|}{2} + |\bar{S}| - |\bar{S}\cap {R}| = |\bar{S}\cap\bar{R}| - \frac{|\bar{S}|}{2}.$$ 
Analogously, 
$|S\cap \bar{R}|-\frac{|S|}{2} = |\bar{S}\cap {R}|- \frac{|\bar{S}|}{2}$. 
Thus, we have that 
\begin{eqnarray*}
f_2(S) &=& \min\left( \frac{|S|}{2},~ \frac{|\bar{S}|}{2},~ 
\beta + |S\cap R|-\frac{|S|}{2},~ 
\beta +|S\cap \bar{R}|-\frac{|S|}{2} \right) \\
&=& \min\left( \frac{|S|}{2},~ \frac{|\bar{S}|}{2},~ 
\beta + |\bar{S}\cap\bar{R}| - \frac{|\bar{S}|}{2},~ 
\beta + |\bar{S}\cap {R}|- \frac{|\bar{S}|}{2} \right) ~=~ f_2(\bar{S}).
\end{eqnarray*}

For submodularity of $f_2$, we focus only on $f(S)=\min\left(|S|, \frac{n}{2}, \beta+|S\cap R|, \beta+|S\cap \bar{R}| \right)$.  Suppose for the sake of contradiction that for some $a,b \in V$, we have $f(S\cup \{a,b\})-f(S\cup \{b\})=1$ but $f(S\cup \{a\})-f(S)=0$. We assume that $a\in R$ (the case that $a\in \bar{R}$ is similar).
First consider the case that $b$ is also in the set $R$. In this case $f(S\cup \{a\})=f(S\cup \{b\})$. The fact that the function value does not increase when $a\in R$ is added to $S$ means that the minimum is achieved by one of the terms that do not depend on $|S\cap R|$, namely $f(S)=\min(\frac{n}{2}, \beta+|S\cap \bar{R}|)$. But then the minimum would also not increase when the second element of $R$ is added, and we would have $f(S\cup \{a,b\})=f(S\cup \{b\})$, contradicting the assumption.

The remaining case is that $a\in R$ and $b\in \bar{R}$. As before, 
$f(S)=\min(\frac{n}{2}, \beta +|S\cap \bar{R}|)$. But if $f(S)=\frac{n}{2}$, then 
$f(S\cup \{a,b\})=\frac{n}{2}$, which contradicts our assumptions. So 
$f(S)=\beta +|S\cap \bar{R}|$. Now, $f(S\cup \{b\})$ increases from the addition of $a\in R$, which means that its minimum is achieved by a term that depends 
on $|S\cap R|$: 
$f(S\cup \{b\})=\min(|S|+1, \beta + |S\cap R|)$. Suppose that $f(S\cup \{b\})=|S|+1$. This means that $|S|+1 \leq \beta + |(S\cup \{b\})\cap \bar{R}| = \beta +|S\cap \bar{R}|+1$. But we also know that 
$\beta + |S\cap \bar{R}|\leq |S|$ (from the fact that 
$f(S)=\beta +|S\cap \bar{R}|$). Thus, $|S|=\beta + |S\cap \bar{R}|$ and 
$f(S\cup \{b\})=\beta + |S\cap \bar{R}|+1=\beta + |(S\cup \{b\})\cap \bar{R}|$. But this term does not depend on $|S\cap R|$, so adding $a\in R$ to $S\cup \{b\}$ would not change the function value, a contradiction. Finally, suppose that $f(S\cup \{b\})=\beta + |S\cap R|$. As $f(S)=\beta+|S\cap \bar{R}|$, we know that $\beta+|S\cap \bar{R}|\leq |S|$, and therefore $\beta \leq |S\cap R|$. So $f(S\cup \{b\})=\beta + |S\cap R| \geq 2\beta > \frac{n}{2}$, by the definition of $\beta$. But this is a contradiction, as the value of $f$ is always at most $\frac{n}{2}$.
\end{proof}

\medskip

To give a lower bound for SSC and SBC, we prove the following result and then apply Lemma \ref{lem:alg} to show that the functions $f_1$ and $f_2$ above are hard to distinguish.

\begin{lemma}\label{lem:disting}
Fix an arbitrary subset $S\subseteq V$, and then let $R$ be a random subset of $V$ of size $\frac{n}{2}$. Then the probability (over the choice of $R$) that $f_1(S)\neq f_2(S)$ is at most $n^{-\omega(1)}$.
\end{lemma}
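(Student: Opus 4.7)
The plan is to reduce the event $f_1(S) \neq f_2(S)$ to a large-deviation event for the random variable $X := |S \cap R|$, and then bound its probability by a standard tail inequality for sampling without replacement.

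First, I would unpack what $f_1(S) \neq f_2(S)$ actually means. Writing $k = |S|$, the subtracted term $-|S|/2$ is common, so the two functions differ precisely when one of the extra arguments of the minimum in $f_2$ is strictly smaller than $\min(k, n/2)$. Since $|S \cap \bar R| = k - X$, this reduces to the single inequality
\[
\beta + \min(X,\, k-X) \;<\; \min(k,\, n/2).
\]

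Next, I would split into two cases according to whether $k \leq n/2$ or $k > n/2$ and, in each, show that the displayed inequality forces $|X - k/2| > n\varepsilon/4$. For $k \le n/2$ the condition becomes $\min(X,k-X) < k - \beta$, i.e.\ either $X < k-\beta$ or $X > \beta$. Substituting $\beta = n(1+\varepsilon)/4$ and using $k/2 \le n/4$, each alternative implies $|X - k/2| > n\varepsilon/4$. For $k > n/2$ the condition becomes $\min(X, k-X) < n/2 - \beta = n(1-\varepsilon)/4$, and since $k/2 > n/4$, the symmetric computation again yields $|X - k/2| > n\varepsilon/4$.

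Finally, because $R$ is a uniformly random subset of $V$ of size $n/2$, the random variable $X = |S \cap R|$ is hypergeometric with mean $k/2$. By a standard Chernoff/Hoeffding tail bound for sampling without replacement,
\[
\Pr\!\left[\,|X - k/2|\, >\, n\varepsilon/4\,\right] \;\le\; 2\exp\!\left(-\Omega(n\varepsilon^2)\right),
\]
and the hypothesis $\varepsilon^2 = \omega(\ln n)/n$ makes the right-hand side $n^{-\omega(1)}$. Combined with the case analysis, this bounds $\Pr[f_1(S) \neq f_2(S)]$ by $n^{-\omega(1)}$, as required.

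The main obstacle is the case analysis: one must carefully verify that the threshold $n\varepsilon/4$ is always forced, for every $k$ (including the regimes where $k$ is near $n/2$, near $\beta$, or near $n$). Once the deviation threshold is established, invoking concentration is routine.
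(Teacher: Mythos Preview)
Your argument is correct. Both your proof and the paper's reduce the event $f_1(S)\neq f_2(S)$ to a large-deviation statement about $|S\cap R|$ and finish with a Chernoff-type bound, but the routes differ in two respects. The paper first argues that the ``bad'' probability $p(S)$ is monotone in $|S|$ on each side of $n/2$, so it suffices to bound the single case $|S|=n/2$; it then converts the hypergeometric variable to an i.i.d.\ Bernoulli one (at the cost of a harmless $(n+1)$ factor) before applying the standard Chernoff bound. You instead keep $k=|S|$ arbitrary and show by a direct case analysis that the bad event always forces the uniform deviation $|X-k/2|>n\varepsilon/4$, after which you invoke Hoeffding's inequality for the hypergeometric distribution directly. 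Your approach avoids both the reduction to the extremal $k$ and the Bernoulli coupling, at the price of verifying four sub-cases; the paper's approach trades the case analysis for a short monotonicity argument. Both yield the same $\exp(-\Omega(n\varepsilon^2))=n^{-\omega(1)}$ bound.
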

\begin{proof}
We note that $f_1(S)\neq f_2(S)$ if and only if 
$\min(\beta+|S\cap R|, \beta+|S\cap \bar{R}|) < \min(|S|,\frac{n}{2})$.
This happens if either $\beta+|S\cap {R}| < \min(|S|,\frac{n}{2})$ or $\beta+|S\cap \bar{R}| < \min(|S|,\frac{n}{2})$. The probabilities of these two events are equal, so let us denote one of them by $p(S)$. If 
we show that $p(S) = n^{-\omega(1)}$, then the lemma follows by an application of the union bound. 

First, we claim that $p(S)$ is maximized when 
$|S|=\frac{n}{2}$. For this, suppose that $|S|\geq \frac{n}{2}$. Then 
$p(S)=\Pr[\beta+|S\cap {R}| < \frac{n}{2}]$. But this probability can only increase if an element is removed from $S$. 
Similarly, in the case that $|S|\leq \frac{n}{2}$,
$p(S)=\Pr[\beta+|S\cap {R}| < |S|] = \Pr[\beta < |S\cap \bar{R}|]$. But this probability can only increase if an element is added to $S$.

For a set $S$ of size $\frac{n}{2}$, 
$p(S)=\Pr[\beta+|S\cap {R}| < \frac{n}{2}] = \Pr[|S\cap {R}| < \frac{n}{4} (1-\varepsilon)]$.
If instead of choosing $R$ as a random subset of $V$ of size $\frac{n}{2}$, we consider a set $R'$ for which each element is chosen independently with probability $\frac{1}{2}$, then $p(S)$ becomes 
\begin{eqnarray*}
p(S) &=& \Pr\left[|S\cap 
R'| < \frac{n}{4} (1-\varepsilon) ~\left|~ |R'|=\frac{n}{2} \right.\right] \\
&=& \frac{\Pr\left[|S\cap R'| < \frac{n}{4} (1-\varepsilon) \wedge |R'|=\frac{n}{2} \right]}
{\Pr\left[|R'|=\frac{n}{2}\right] }\\
&\leq& (n+1) \cdot \Pr\left[|S\cap R'| < \frac{n}{4} (1-\varepsilon)\right]. 
\end{eqnarray*}
This allows us to make a switch to independent variables, so that we can use Chernoff bounds \cite{motwani:raghavan}.  
The expectation $\mu$ of 
$|S\cap R'|$ is equal to  $|S|/2 = n/4$, so 
$$\Pr\left[|S\cap R'|<(1-\varepsilon)\mu\right] ~<~ 
e^{-\mu \varepsilon^2/2} ~=~ e^{-\omega(\ln n)} ~=~ 
{n^{-\omega(1)}},$$
remembering that $\varepsilon^2 = \frac{1}{n} \cdot \omega(\ln n)$.  This gives
$p(S) \leq (n+1) \cdot  {n^{-\omega(1)}} = {n^{-\omega(1)}}$.
\end{proof}

\begin{corollary}\label{cor:disting}
Any algorithm that makes a polynomial number of oracle queries has 
probability at most $n^{-\omega(1)}$ of distinguishing the functions $f_1$ and $f_2$.
\end{corollary}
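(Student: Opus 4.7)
The plan is to obtain Corollary \ref{cor:disting} as an essentially immediate consequence of Lemmas \ref{lem:alg} and \ref{lem:disting}. The strategy is to identify the random string $r$ in the hypothesis of Lemma \ref{lem:alg} with the random choice of the subset $R\subseteq V$ of size $n/2$ used to define $f_2$.

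First, I would verify the setup required by Lemma \ref{lem:alg}. The function $f_1(S)=\min(|S|,n/2)-|S|/2$ depends only on $|S|$ and on $n$, and in particular does not reference $R$; so $f_1$ is unparametrized. By contrast, $f_2$ is fully determined by the auxiliary set $R$, which we now view as the random string: we can encode $R$ as a string of random bits by, say, drawing a uniformly random size-$n/2$ subset of $V$. Thus $f_2$ plays the role of the $r$-parametrized function.

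Next, I would quote Lemma \ref{lem:disting}, which shows precisely that for any subset $S\subseteq V$ chosen without knowledge of $R$, the probability (over $R$) that $f_1(S)\neq f_2(S)$ is $n^{-\omega(1)}$. This is exactly the hypothesis needed by Lemma \ref{lem:alg}. Applying that lemma then yields the conclusion: any algorithm issuing only a polynomial number of value-oracle queries distinguishes $f_1$ from $f_2$ with probability at most $n^{-\omega(1)}$.

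There is no real obstacle here; the corollary is purely a bookkeeping step that composes the two preceding results. The only mild subtlety worth mentioning is that the queries made by the algorithm may adaptively depend on previously received answers, but Lemma \ref{lem:alg} already addresses this by fixing the computation path induced by $f_1$'s answers and invoking the union bound over its (polynomially many) queries; no additional argument is needed in the corollary itself.
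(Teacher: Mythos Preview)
Your proposal is correct and matches the paper's approach: the corollary is stated there without a separate proof precisely because it is the immediate composition of Lemma~\ref{lem:alg} (with $r$ identified with the random choice of $R$) and Lemma~\ref{lem:disting}. There is nothing to add.
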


We now use these results to establish the hardness of the SSC and SBC 
problems. For concreteness, assume that the output of an approximation algorithm for one of these 
problems consists of a set $S\subseteq V$ as well as the value of the 
objective function on this set.

\begin{theorem} 
\label{thm:ssclb}
The uniform SSC  and the unweighted SBC problems 
(with balance $b=\Theta(1)$) cannot be approximated to 
a ratio  $o\left(\sqrt{\frac{n}{\ln n}}\right)$ 
in the oracle model with polynomial number of queries, even in the case of symmetric functions.
\end{theorem}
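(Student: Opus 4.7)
The plan is to leverage the pair of symmetric submodular functions $f_1, f_2$ already constructed, together with Corollary~\ref{cor:disting}, which says that no polynomial-query algorithm can distinguish them with more than negligible probability when $R$ is drawn at random. I will exhibit a multiplicative gap of $\Omega(1/\varepsilon)$ between the optimal objective values on the two instances under both the uniform SSC and the unweighted SBC objective, and then argue that any $\alpha$-approximation with $\alpha = o(\sqrt{n/\ln n})$ would have to produce outputs whose reported objective values separate the two instances, contradicting the corollary.

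First I would compute the optimum for $f_1$. Since $f_1(S) = \frac{1}{2}\min(|S|,|\bar S|)$, the SSC ratio $f_1(S)/(|S|\cdot|\bar S|)$ simplifies to $1/(2\max(|S|,|\bar S|))$, so the uniform SSC optimum for $f_1$ is $\Theta(1/n)$ (attained at singletons). For unweighted $b$-balanced cut with $b = \Theta(1) \le 1/2$, every balanced $S$ has $\min(|S|,|\bar S|) \ge bn$, so the SBC optimum is at least $bn/2 = \Theta(n)$. Next I would upper bound the optimum for $f_2$ using the planted set $R$: a direct evaluation gives $f_2(R) = \beta - n/4 = \varepsilon n/4$, which yields an SSC ratio of $\varepsilon/n$ and a valid SBC value of $\varepsilon n/4$ (since $|R| = n/2 \ge bn$). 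In both cases the ratio of optima is at least $\Omega(1/\varepsilon)$.

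Given a putative $\alpha$-approximation algorithm with $\alpha = o(\sqrt{n/\ln n})$, I would set $\varepsilon := 1/(4\alpha)$, so that $\varepsilon^2 n = n/(16\alpha^2) = \omega(\ln n)$, which is exactly the hypothesis needed for Lemma~\ref{lem:disting} and hence for Corollary~\ref{cor:disting}. By the standing assumption that the algorithm returns a set $S$ together with its objective value (so we may without loss of generality assume $S$ is queried), on any pair of inputs the algorithm fails to distinguish, the reported objective values must agree. The approximation guarantee on the $f_2$ instance forces this common value to be at most $\alpha \cdot \mathrm{OPT}(f_2)$, while on the $f_1$ instance it must be at least $\mathrm{OPT}(f_1)$; since my choice of $\varepsilon$ makes $\mathrm{OPT}(f_1) > \alpha\cdot\mathrm{OPT}(f_2)$, the algorithm succeeds on $f_2$ only when it in fact distinguishes $f_1$ from $f_2$, contradicting Corollary~\ref{cor:disting}.

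The main obstacle I expect is the quantifier management in the final step: $\varepsilon$ must be chosen as a function of the purported ratio $\alpha(n)$ so that simultaneously (i) $\varepsilon^2 n = \omega(\ln n)$ holds so that indistinguishability kicks in and (ii) the gap $\Omega(1/\varepsilon)$ strictly dominates $\alpha(n)$. The choice $\varepsilon = 1/(4\alpha)$ threads this needle, but one should verify that $\beta = (n/4)(1+\varepsilon)$ can be rounded to an integer and $R$ taken of size exactly $n/2$, which can be arranged by restricting to a suitable subsequence of values of $n$. Handling randomization and boosting the algorithm's success probability above the negligible distinguishing probability from the corollary is then routine.
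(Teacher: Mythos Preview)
Your proposal is correct and follows essentially the same approach as the paper: use the hard pair $f_1,f_2$, exhibit the gap via the planted set $R$, and choose $\varepsilon$ as a function of the purported approximation ratio so that indistinguishability (Corollary~\ref{cor:disting}) applies while the gap strictly exceeds the ratio. One small point: your single choice $\varepsilon=1/(4\alpha)$ gives $\alpha\cdot f_2(R)=n/16$ for SBC, which beats $\mathrm{OPT}(f_1)=bn/2$ only when $b>1/8$; for general constant $b$ you should set $\varepsilon$ proportional to $b/\alpha$ (the paper takes $\varepsilon=2b/(\delta\gamma)$), but this is a trivial adjustment.
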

\begin{proof}
Suppose for the sake of contradiction that there is a 
polynomial-time $\gamma$-approximation   algorithm for the uniform SSC 
problem, for some $\gamma=o\left(\sqrt{\frac{n}{\ln n}}\right)$, that succeeds 
with high probability.   
We set $\varepsilon=\frac{1}{2\gamma\delta}$ with some $\delta>1$ such that $\beta=\frac{n}{4}(1+\varepsilon)$ is integer. This satisfies 
$\varepsilon^2 = \frac{1}{n}\cdot \omega(\ln n)$.
One feasible solution for the uniform SSC on $f_2$ is the set $R$, with ratio 
$\frac{\beta-n/4}{n^2/4} = \frac{\varepsilon}{n}$. So if the algorithm is given function $f_2$ as input, then with high probability it has to output a set $S$ with ratio $f_2(S)/|S||\bar{S}| \leq \frac{\gamma \varepsilon}{n} = \frac{1}{2\delta n} < \frac{1}{2n}$.
However, for the function $f_1$, the ratio of any set 
is $1/{2\max(|S|,|\bar{S}|)}>\frac{1}{2n}$. So if the algorithm is given $f_1$ as input, its output value differs from the case of $f_2$.
But this contradicts Corollary~\ref{cor:disting}.

For the lower bound to the submodular balanced cut problem, we consider the same two functions $f_1$ and $f_2$ and unit weights. Assuming that there is a $\gamma$-approximation algorithm for SBC, we set $\varepsilon = \frac{2b}{\delta\gamma}$, with $\gamma>1$ ensuring the integrality of $\beta$. This satisfies 
$\varepsilon^2 = \frac{1}{n}\cdot \omega(\ln n)$ if $\gamma=o\left(\sqrt{\frac{n}{\ln n}}\right)$ and $b$ is a constant. Since 
one feasible $b$-balanced cut on $f_2$ is the set $R$, whose function value is 
$\frac{n\varepsilon}{4}$, the algorithm outputs a $b$-balanced set $S$ with 
$f_2(S)\leq {\gamma n \varepsilon}/{4} = bn/2\delta < bn/2$. However, for any $b$, the optimal $b$-balanced cut on $f_1$ is a set of size $bn$, whose function value is ${bn}/{2}$. Thus, given $f_1$, the algorithm would produce a different output, leading to a contradiction.
\end{proof}

\subsection{Algorithm for submodular sparsest cut} \label{subsec:ssc}

Our algorithm for SSC uses a random set $S$ to assign weights to  nodes 
(see Algorithm~\ref{alg:ssc}).  
For each demand pair separated by the set $S$, we add a positive weight equal to its demand $d_i$ to the node that is in $S$, and a negative weight of $-d_i$ to the node that is outside of $S$. This biases the subsequent function minimization to 
separate the demand pairs that are on different sides of  $S$. 

\begin{algorithm}[ht]
  \caption{~~Submodular sparsest cut. ~Input: $V$, $f$, 
$d$, $B$, $p$} \label{alg:ssc} 
\begin{algorithmic}[1] 
\For {$\frac{8n^3}{c} \ln(\frac{1}{1-p})$ iterations}
\State Choose a random set $S$ by including each node $v\in V$ 
independently with probability $\frac{1}{2}$
\State {\bf for} each $v\in V$, initialize a weight $w(v)=0$
\For {each pair  $\{u_i, v_i\}$ with $|\{u_i,v_i\}\cap S|=1$} 
\State Let $s_i \in \{u_i,v_i\}\cap S$ and $t_i \in \{u_i,v_i\}\setminus S$
\Comment{name the unique node in each set}
\State Update weights  $w(s_i) \leftarrow w(s_i) + d_i$; ~$w(t_i) \leftarrow w(t_i) - d_i$
\EndFor
\State Let $\alpha=4\sqrt{\frac{n}{\ln n}}\cdot B$
\State Let $T$ be a subset of $V$ minimizing 
$f(T)-\alpha\cdot \sum_{v\in T} w(v)$ \label{line:sscexp}
\State {\bf if} ~$f(T)-\alpha\cdot \sum_{v\in T} w(v) < 0$,~ {\bf return} 
$T$
\EndFor
\State {\bf return} \emph{fail}
\end{algorithmic}
\end{algorithm}


\begin{lemma} 
\label{lem:negexpr2}
If for some set $T\subseteq V$,  it holds that 
$f(T)-\alpha\cdot \sum_{v\in T} w(v) < 0$, then
$$\frac{f(T)}{\sum_{i:|T\cap \{u_i,v_i\}|=1}d_i} ~<~ \alpha.$$
\end{lemma}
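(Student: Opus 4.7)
The plan is to expand $\sum_{v\in T} w(v)$ pair by pair and show it is bounded above by the total demand cut by $T$, then combine with the hypothesis and nonnegativity of $f$.

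First I would rewrite $\sum_{v\in T} w(v)$ as a sum of contributions, one per demand pair $\{u_i,v_i\}$. The only pairs that contribute anything are those with $|\{u_i,v_i\}\cap S|=1$, since all other pairs leave the weights $w$ untouched. For such a pair, the contribution to $\sum_{v\in T}w(v)$ is $d_i\cdot \mathbf{1}[s_i\in T] - d_i\cdot\mathbf{1}[t_i\in T]$, which equals $+d_i$ or $-d_i$ when $T$ cuts the pair (depending on which side of $S$ the $T$-side endpoint lies on), and $0$ when $T$ does not cut the pair (both endpoints in $T$ cancels as $d_i-d_i=0$; both out gives $0$). In particular, each pair contributes at most $d_i$ if $T$ cuts it and $0$ otherwise.

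Summing over all pairs yields the key inequality
\[
\sum_{v\in T} w(v) \;\leq\; \sum_{i:\,|T\cap\{u_i,v_i\}|=1} d_i .
\]
Next, I would use the hypothesis $f(T)-\alpha\sum_{v\in T}w(v)<0$ together with $f(T)\geq 0$ and $\alpha>0$ to deduce $\sum_{v\in T}w(v)>0$, so in particular the denominator of the claimed ratio is strictly positive (ensuring the division is well-defined and the cut is nontrivial). Dividing the hypothesis by $\alpha\sum_{v\in T}w(v)>0$ gives $f(T)<\alpha\sum_{v\in T}w(v)$, and chaining with the bound above yields $f(T)<\alpha\sum_{i:|T\cap\{u_i,v_i\}|=1} d_i$, which rearranges to the lemma's conclusion.

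The only mildly subtle step is the case analysis that establishes the upper bound on $\sum_{v\in T}w(v)$: one must be careful that contributions from pairs with $|\{u_i,v_i\}\cap S|=1$ but which are not cut by $T$ exactly cancel, and that pairs with $|\{u_i,v_i\}\cap S|\in\{0,2\}$ simply contribute nothing since no weight was ever assigned for them. Once this bookkeeping is done, the rest is a one-line rearrangement and the proof is complete.
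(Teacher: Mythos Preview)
Your proposal is correct and follows essentially the same approach as the paper: both arguments establish the key inequality $\sum_{v\in T} w(v) \leq \sum_{i:|T\cap\{u_i,v_i\}|=1} d_i$ by analyzing contributions pair by pair, then use nonnegativity of $f$ to ensure the denominator is positive before rearranging. The paper's write-up is a bit more compact (a one-line algebraic chain rather than an explicit case split), but the logic is identical.
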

\begin{proof}
We have
\begin{eqnarray*}
\sum_{v\in T} w(v) ~=~ 
\sum_{i:s_i\in T}d_i - \sum_{i:t_i\in T}d_i ~= 
\sum_{i:s_i\in T, t_i\notin T}d_i - \sum_{i:t_i\in T, s_i\notin T}d_i ~\leq
\sum_{i:s_i\in T, t_i\notin T}d_i ~\leq \sum_{i:|T\cap \{u_i,v_i\}|=1}d_i
\end{eqnarray*}
  Now using  the assumption of the lemma we have
\begin{equation}\label{eqn:neg1}
f(T)-\alpha \sum_{i:|T\cap \{u_i,v_i\}|=1}d_i ~~\leq~~ f(T)-\alpha\sum_{v\in T} w(v) ~~<~~ 0.
\end{equation}
Since the function $f$ is non-negative, it must be that 
$\sum_{i:|T\cap \{u_i,v_i\}|=1}d_i > 0$. Rearranging the terms, we get
${f(T)}/{\sum_{i:|T\cap \{u_i,v_i\}|=1}d_i}< \alpha$. 
\end{proof}

\medskip 

Assuming that the input instance is feasible, let $U^*$ be a set
with size $m=|U^*|$, separated demand
$D^* = {\sum_{i:|U^*\cap \{u_i,v_i\}|=1}d_i}$, and value $f(U^*)/D^*<B$.

\begin{lemma}
\label{lem:sscprob}
In one iteration of the outer loop of Algorithm \ref{alg:ssc}, the probability 
that \linebreak 
$\sum_{v\in U^*} w(v) \geq D^* \cdot \frac{1}{4}\sqrt{\frac{\ln n}{n}}$ 
~is at least $\frac{c}{8 n^3}$.
\end{lemma}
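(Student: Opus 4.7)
I plan to express $\sum_{v\in U^*} w(v)$ as $A - B$, where $A$ and $B$ are independent sums over disjoint slices of $S$. A non-crossing demand pair contributes nothing to $\sum_{v\in U^*} w(v)$, while a crossing pair $\{u_i,v_i\}$ with (say) $u_i \in U^*$ and $v_i \notin U^*$ contributes $+d_i$ when $u_i \in S, v_i \notin S$ and $-d_i$ when $u_i \notin S, v_i \in S$. Grouping by vertex yields
\[
\sum_{v \in U^*} w(v) \;=\; A - B,
\qquad A := \sum_{v \in U^*} a_v\, \mathbf{1}[v \in S], \qquad B := \sum_{v \notin U^*} b_v\, \mathbf{1}[v \in S],
\]
where $a_v$ (resp.\ $b_v$) is the total demand of crossing pairs incident to $v \in U^*$ (resp.\ $v \notin U^*$), so $\sum_v a_v = \sum_v b_v = D^*$. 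Since $A$ and $B$ are functions of disjoint subsets of the independent indicators $\{\mathbf{1}[v \in S]\}_{v \in V}$, they are independent; each has mean $D^*/2$ and a distribution symmetric about its mean.

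With $t := (D^*/4)\sqrt{\ln n/n}$, the target event $\{A - B \geq t\}$ is implied by the conjunction $\{A \geq D^*/2 + t/2\} \wedge \{B \leq D^*/2 - t/2\}$, and by the symmetry of $B$ about its mean the two one-sided tails have the same form. So it suffices to lower bound $\Pr[A \geq D^*/2 + t/2]$ and $\Pr[B \geq D^*/2 + t/2]$ each by $\Omega(n^{-3/2})$ and multiply using independence. For each tail I would condition on the size of the relevant slice, e.g.\ $|S \cap U^*|$, targeting $k^* := \lceil (m/2)(1 + \varepsilon)\rceil$ with $\varepsilon = \Theta(\sqrt{\ln n/n})$ chosen so that the conditional mean $k^*\, D^*/m$ exceeds $D^*/2 + t/2$ by a margin of $\Omega(t)$. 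Theorem~\ref{thm:gen} with $q = 1/2$ lower bounds $\Pr[|S \cap U^*| = k^*]$ polynomially in $1/n$; conditional on this size, $S \cap U^*$ is uniform over $k^*$-subsets of $U^*$, and a Cantelli- or median-type argument on the uniform subset sum $A$ lower bounds the conditional probability of $A$ staying within the margin of its conditional mean. The analogous conditioning on $|S \cap (V \setminus U^*)|$ handles the $B$ side.

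The main obstacle is the conditional anti-concentration step: the conditional variance of $A$ given $|S \cap U^*|$ depends in an uncontrolled way on the weight distribution $(a_v)_{v \in U^*}$, which may be either spread across many vertices (yielding Gaussian-like behavior) or concentrated on a few (calling for a direct atom-based argument on the dominant vertex). The proof must treat both regimes and track constants carefully through Theorem~\ref{thm:gen} and the conditional tail bound, on both the $A$ and $B$ sides, to obtain the precise factor $c/(8n^3)$ claimed.
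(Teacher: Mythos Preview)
Your decomposition $\sum_{v\in U^*} w(v)=A-B$ is correct, and the independence and symmetry observations are valid. However, the paper takes a different and much simpler route that sidesteps exactly the obstacle you identify.

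The paper conditions only on the event $\mathcal{A}=\{|U^*\cap S|\ge \frac{m}{2}(1+\varepsilon)\}$ with $\varepsilon=\sqrt{\ln n/n}$, whose probability is at least $\frac{c}{2}n^{-5/2}$ by Theorem~\ref{thm:gen}. Conditioned on $\mathcal{A}$, for each crossing pair $\{u_i,v_i\}$ with $u_i\in U^*$ one has $\Pr[u_i\in S\mid\mathcal{A}]\ge(1+\varepsilon)/2$ while $\Pr[v_i\in S\mid\mathcal{A}]=1/2$ (the latter unaffected since $\mathcal{A}$ involves only $U^*$), and these are independent. Hence the expected contribution of pair $i$ is at least $d_i\varepsilon/2$, giving $\mathrm{E}\bigl[\sum_{v\in U^*}w(v)\mid\mathcal{A}\bigr]\ge D^*\varepsilon/2$. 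The crucial trick is that $\sum_{v\in U^*}w(v)\le D^*$ \emph{deterministically} (each crossing pair contributes at most $d_i$), so $Y:=D^*-\sum_{v\in U^*}w(v)$ is nonnegative and Markov's inequality yields $\Pr\bigl[\sum_{v\in U^*}w(v)\ge D^*\varepsilon/4\mid\mathcal{A}\bigr]\ge\varepsilon/4\ge 1/(4\sqrt{n})$. Multiplying gives $c/(8n^3)$.

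Your plan, by contrast, tries to control the two tails $\Pr[A\ge D^*/2+t/2]$ and $\Pr[B\le D^*/2-t/2]$ separately, and for each you would need a conditional anti-concentration bound for a weighted subset sum with arbitrary weights $(a_v)$ or $(b_v)$. You correctly flag this as the main obstacle, but the sketch does not close it: the Cantelli bound degrades when the conditional variance is large relative to the margin, which happens precisely when a few weights are moderately heavy (say $\max_v a_v\approx D^*/\log n$), a regime not covered by either your ``Gaussian-like'' or ``dominant-atom'' case. It is plausible that a careful case split eventually works, but none is given, and it would be considerably more delicate than the paper's Markov argument, which handles all weight profiles uniformly by exploiting the deterministic upper bound $\sum_{v\in U^*}w(v)\le D^*$.
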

\begin{proof}
Let $\varepsilon = \sqrt{\frac{\ln n}{n}}$. We denote by
${\mathcal A}$ the event that 
$|U^* \cap S| \geq \frac{m}{2}\left(1+\varepsilon\right)$, where $S$ is the 
random set chosen by Algorithm \ref{alg:ssc}, and bound the above
probability by the following product: 

$$\Pr\left[\sum_{v\in U^*} w(v) \geq \frac{\varepsilon}{4} D^* \right] ~\geq~ 
\Pr\left[\sum_{v\in U^*} w(v) \geq \left.\frac{\varepsilon}{4} D^* \, \right| {\mathcal A} \, \right] 
\cdot \Pr[{\mathcal A}].$$
We observe that by Theorem \ref{thm:gen}, the  probability of ${\mathcal A}$ is at least $\frac{c}{2} n^{-5/2}$. All the probabilities and expectations in the rest of the proof are conditioned on the event ${\mathcal A}$.  

Let us now consider the expected value of $\sum_{v\in U^*} w(v)$.
Fix a particular demand pair $\{u_i, v_i\}$ that is separated by the optimal solution, and assume without loss of generality that $u_i\in U^*$ and $v_i\notin U^*$. Let $p_u$ be the probability that $u_i\in S$, and $p_v$ be the probability that 
$v_i\in S$.  Then $p_u = \frac{|U^* \cap S|}{|U^*|}\geq (1+\varepsilon)/2$,
$p_v = \frac{1}{2}$, and the two events are independent. So
\begin{eqnarray*}
\Pr[u_i = s_i] &=& \Pr[u_i\in S \wedge v_i\notin S]
~=~ p_u\cdot (1-p_v) ~\geq~ (1+\varepsilon)/4, \\
\Pr[u_i = t_i] &=& \Pr[u_i\notin S \wedge v_i\in S] ~=~ (1-p_u)\cdot p_v ~\leq~ 
(1-\varepsilon)/4.
\end{eqnarray*}
Then the expected contribution of this demand pair to $\sum_{v\in U^*} w(v)$ is 
equal to 
$$\Pr[u_i=s_i]\cdot d_i + \Pr[u_i=t_i]\cdot (-d_i) ~\geq~ d_i\cdot \frac{\varepsilon}{2}.$$
By linearity of expectation,
$${\rm E}\left[\sum_{v\in U^*} w(v)\right] ~\geq ~ D^* \cdot \frac{\varepsilon}{2}.$$
We now use Markov's inequality \cite{motwani:raghavan} to bound the desired 
probability.  For this we define a nonnegative random  variable 
$Y=D^*-\sum_{v\in U^*} w(v)$. Then  
${\rm E}[Y]\leq (1-\varepsilon/2) D^*$. So 
$$\Pr\left[\sum_{v\in U^*} w(v) \leq \frac{\varepsilon}{4}D^* \right] ~= ~
\Pr\left[Y\geq (1-\frac{\varepsilon}{4})D^*\right] ~\leq~
 \frac{{\rm E}[Y]}{(1-\varepsilon/4) D^*} ~\leq~
 \frac{1-\varepsilon/2}{1-\varepsilon/4} ~=~
1-\frac{\varepsilon}{4-\varepsilon} ~\leq~ 1-\frac{\varepsilon}{4}
$$
It follows that 
$$\Pr\left[\sum_{v\in U^*} w(v) \geq \frac{\varepsilon}{4}D^* \right] ~\geq~ 
\frac{\varepsilon}{4} ~=~ \frac{1}{4}\sqrt{\frac{\ln n}{n}} ~\geq~ 
\frac{1}{4\sqrt{n}},$$ 
concluding the proof of the lemma.
\end{proof}

\begin{theorem}
For any feasible instance of SSC problem, Algorithm \ref{alg:ssc} returns a 
solution of cost at most 
$4 \sqrt{\frac{n}{\ln n}}\cdot B$, with probability at least $p$. 
\end{theorem}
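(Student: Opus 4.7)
The plan is to combine the two preceding lemmas, using Lemma~\ref{lem:sscprob} to argue that a single iteration succeeds with decent probability, and Lemma~\ref{lem:negexpr2} to translate that success into the desired approximation guarantee. I then amplify the per-iteration success probability to $p$ via the chosen number of iterations.

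First I would fix a feasible instance and let $U^*$ with $|U^*|=m$ and separated demand $D^*$ be a set witnessing $f(U^*)/D^* < B$. Focusing on a single iteration of the outer loop, Lemma~\ref{lem:sscprob} tells me that with probability at least $\frac{c}{8n^3}$, the random weighting $w$ satisfies $\sum_{v\in U^*} w(v) \geq D^* \cdot \tfrac14 \sqrt{\tfrac{\ln n}{n}}$. On this event, plugging $U^*$ into the objective on line~\ref{line:sscexp} and using $\alpha = 4\sqrt{n/\ln n}\cdot B$ gives
\[
f(U^*) - \alpha \sum_{v\in U^*} w(v) \;<\; B\cdot D^* \; -\; 4\sqrt{\tfrac{n}{\ln n}}\cdot B \cdot D^* \cdot \tfrac14 \sqrt{\tfrac{\ln n}{n}} \;=\; 0.
\]
Since $T$ is chosen to \emph{minimize} $f(T) - \alpha \sum_{v\in T} w(v)$, the minimum value is also negative, the ``if'' test fires, and the algorithm returns $T$. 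Lemma~\ref{lem:negexpr2} then immediately guarantees $f(T)/\sum_{i:|T\cap\{u_i,v_i\}|=1} d_i < \alpha = 4\sqrt{n/\ln n}\cdot B$, which is precisely the claimed bound.

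For the probability amplification, I would observe that the outer loop iterations are independent, each succeeding with probability at least $q := \tfrac{c}{8n^3}$. Running $N = \tfrac{8n^3}{c}\ln\!\tfrac{1}{1-p} = \tfrac{1}{q}\ln\!\tfrac{1}{1-p}$ iterations and using $(1-q)^{1/q} \leq e^{-1}$ yields failure probability at most $(1-q)^N \leq e^{-\ln(1/(1-p))} = 1-p$, so the algorithm succeeds with probability at least $p$, completing the proof. There is no real obstacle here; the subtlety has already been absorbed into Lemma~\ref{lem:sscprob}, and the remaining work is simply to verify that the constants $4$ in $\alpha$ and $\tfrac14$ in the weight bound cancel so that the key expression is strictly negative for $U^*$.
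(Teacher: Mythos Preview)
Your proposal is correct and follows essentially the same approach as the paper: invoke Lemma~\ref{lem:sscprob} to get the good event in a single iteration, show the minimized expression is negative on that event by plugging in $U^*$, apply Lemma~\ref{lem:negexpr2} to the returned $T$, and amplify via independent repetitions. The paper's write-up is slightly terser on the amplification step, but the argument is the same.
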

\begin{proof}
By Lemma \ref{lem:sscprob}, the inequality 
$\sum_{v\in U^*} w(v) \geq D^* \cdot \frac{1}{4}\sqrt{\frac{\ln n}{n}}$
holds with probability at least $c/8n^3$ in each iteration. Then the 
probability that it holds in  any of the $\frac{8n^3}{c} \ln(\frac{1}{1-p})$ 
iterations is at least $p$. Now,  assuming that it does   hold, 
the algorithm finds a set $T$ such that 
$$ f(T)-\alpha\cdot \sum_{v\in T} w(v) ~\leq ~  
f(U^*)-\alpha\cdot \sum_{v\in U^*} w(v) ~\leq~ 
 f(U^*)- \left( 4\sqrt{\frac{n}{\ln n}} \cdot B \right) \left(D^* \cdot 
\frac{1}{4}\sqrt{\frac{\ln n}{n}}\right) ~<~ 0.$$
Applying Lemma \ref{lem:negexpr2}, we get that
${f(T)}/{\sum_{i:|T\cap \{u_i,v_i\}|=1}d_i} < \alpha =
  4\sqrt{\frac{n}{\ln n}} \cdot B$, 
which means that $T$ is the required approximate solution.
\end{proof}

\subsection{Submodular balanced cut}
For submodular balanced cut, we use as a subroutine
the weighted SSC problem that  can be approximated to a factor 
$\gamma = O\left(\sqrt{\frac{n}{\ln n}}\right)$ using Algorithm~\ref{alg:ssc}.
This allows us  to obtain a bicriteria 
approximation for SBC in a similar way that Leighton and Rao 
\cite{leighton:rao} use their algorithm for sparsest cut on 
graphs to approximate balanced cut on graphs. 
Leighton and Rao 
present two versions of an algorithm for the balanced 
cut problem on graphs --- one for undirected graphs, and one for directed 
graphs. The algorithm for undirected graphs has a better balance guarantee.
We describe adaptations of these algorithms to the submodular version of the 
balanced cut problem. Our first algorithm extends the one for undirected 
graphs, and it works for symmetric submodular functions. For a given $b'\leq 1/3$, it 
finds a $b'$-balanced cut whose cost is within a factor 
$O\left(\frac{\gamma}{b-b'}\right)$ of the cost of any $b$-balanced cut, for 
$b'<b\leq \frac{1}{2}$.  
The second algorithm works for arbitrary non-negative submodular functions and 
produces a ${b'}/{2}$-balanced cut of cost within 
$O\left(\frac{\gamma}{b-b'}\right)$ of any $b$-balanced cut, for any $b'$ and 
$b$ with $b'<b\leq 1/2$.

\subsubsection{Algorithm for symmetric functions}
The algorithm for SBC on symmetric functions (Algorithm \ref{alg:ssbc}) 
repeatedly finds approximate weighted submodular sparsest cuts 
$(S_i,\bar{S}_i)$ and collects  
their smaller sides into the set $T$, until $(T,\bar{T})$ becomes $b'$-balanced. 
The algorithm and analysis basically follow Leighton and Rao 
\cite{leighton:rao}, with the main difference being that instead of removing 
parts of the graph, we set the weights of the corresponding elements to zero. 
Then the obtained sets $S_i$ are not necessarily disjoint.

\begin{algorithm}[ht] 
  \caption{~~Submodular balanced cut for symmetric functions. ~Input: $V$, $f$, 
$w$, $b'\leq \frac{1}{3}$} \label{alg:ssbc} 
\begin{algorithmic}[1]
\State Initialize $w'=w$, $i=0$, $T=\emptyset$
\While{$w'(V) > (1-b')w(V)$} 
\State Let $S$ be a $\gamma$-approximate weighted SSC on $V$, $f$, and 
weights $w'$ \label{line:ssbc0}
\State Let $S_i=\argmin(w'(S),w'(\bar{S}))$; ~$w'(S_i)\leftarrow 0$; 
~$T\leftarrow T\cup S_i$; ~$i\leftarrow i+1$ \label{line:ssbc1}
\EndWhile
\State {\bf return} $T$
\end{algorithmic}
\end{algorithm}

\begin{theorem}\label{thm:ssbc}
If the system $(V,f,w)$, where $f$ is a symmetric submodular function, contains a $b$-balanced cut of cost $B$, then  
Algorithm \ref{alg:ssbc} finds a $b'$-balanced cut $T$ with  
$f(T) = O\left(\frac{B}{b-b'} \sqrt{\frac{n}{\ln n}}\right)$, for a given 
$b'<b$, $b'\leq \frac{1}{3}$.
\end{theorem}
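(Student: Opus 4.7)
The plan is to follow the Leighton--Rao template: in each iteration, the smaller side $S_i$ of an approximate weighted sparsest cut is charged an $f$-cost proportional to its $w'$-weight, and the final bound on $f(T)$ follows by summing these charges using submodularity. Before analyzing cost, I verify balance. The loop exit condition guarantees $w(T)\ge b'w(V)$ (since $w(T)=w(V)-w'(V)$). For the upper bound $w(T)\le (1-b')w(V)$, observe that just before the final iteration $w(T)<b'w(V)$, and the piece absorbed has $w'$-mass at most $w'(V)/2 \le (1-b')w(V)/2$. Hence after the update $w(T)<(1+b')w(V)/2$, which is at most $(1-b')w(V)$ precisely when $b'\le 1/3$.

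For the cost per iteration, fix a $b$-balanced cut $C$ with $f(C)\le B$. Entering iteration $i$, since $w(T)<b'w(V)$ and $w'$ agrees with $w$ outside $T$, we have $w'(C)\ge w(C)-w(T)\ge (b-b')w(V)$ and similarly $w'(\bar C)\ge (b-b')w(V)$. Thus $C$ is a feasible weighted-SSC solution whose ratio satisfies
$$\frac{f(C)}{w'(C)w'(\bar C)} \le \frac{B}{(b-b')w(V)\cdot\max(w'(C),w'(\bar C))} \le \frac{2B}{(b-b')w(V)w'(V)},$$
using $\max(w'(C),w'(\bar C))\ge w'(V)/2$. The $\gamma$-approximate cut $S$ returned on line~\ref{line:ssbc0} therefore satisfies $f(S)\le 2\gamma B\cdot w'(S)w'(\bar S)/((b-b')w(V)w'(V))$. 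Here I invoke symmetry of $f$: $f(S_i)=f(S)=f(\bar S)$ regardless of which side is lighter, so writing $w'(S)w'(\bar S)=w'(S_i)w'(\bar S_i)\le w'(S_i)\cdot w'(V)$ yields
$$f(S_i) \le \frac{2\gamma B}{(b-b')\,w(V)}\cdot w'(S_i).$$

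Finally I aggregate. Since $f$ is nonnegative submodular it is subadditive, so $f(T)=f(\bigcup_i S_i)\le \sum_i f(S_i)$. The telescoping identity $\sum_i w'(S_i)=w(T)\le w(V)$ holds because the zeroing step in line~\ref{line:ssbc1} removes exactly the $w$-mass of the elements newly absorbed into $T$. Combining these gives $f(T)\le 2\gamma B/(b-b')$, and plugging in $\gamma=O(\sqrt{n/\ln n})$ from Algorithm~\ref{alg:ssc} yields the claimed bound. The main obstacle is the bookkeeping between $w$ and $w'$, in particular making the lower bounds $w'(C),w'(\bar C)\ge (b-b')w(V)$ persist throughout the execution, and ensuring that the total charge $\sum_i w'(S_i)$ does not exceed $w(V)$; symmetry of $f$ enters exactly once, to avoid losing any factor when $S_i$ is the lighter side of $S$.
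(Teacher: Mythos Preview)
Your argument mirrors the paper's Leighton--Rao proof almost exactly: same balance verification, same per-iteration charge $f(S_i)\le \frac{2\gamma B}{(b-b')w(V)}\,w'(S_i)$ (the paper lower-bounds $w'(U^*)w'(\bar U^*)$ by $(b-b')(1-b)\,w(V)^2$ and then uses $1-b\ge 1/2$, which lands at the same place as your route through $\max(w'(C),w'(\bar C))\ge w'(V)/2$ and $w'(\bar S_i)\le w'(V)$), and same telescoping $\sum_i w'(S_i)=w(T)\le w(V)$. Your identification of where symmetry is invoked is also correct.

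One step is wrong as written. In the balance argument you assert $w'(V)/2\le (1-b')w(V)/2$, but the loop condition at the start of the final iteration says precisely the opposite, $w'(V)>(1-b')w(V)$. The conclusion $w(T)<(1+b')w(V)/2$ is nevertheless correct; the fix (which is what the paper does) is to bound the \emph{remaining} mass rather than the absorbed piece: after zeroing $S_i$ the new $w'(V)$ is at least half the old one, hence still $>(1-b')w(V)/2$, so $w(T)=w(V)-w'(V)<(1+b')w(V)/2\le (1-b')w(V)$ for $b'\le 1/3$.
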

\begin{proof}
The algorithm terminates in $O(n)$ iterations, since the weight of at least 
one new element is set to zero on line \ref{line:ssbc1} (otherwise the 
solution to SSC found on line \ref{line:ssbc0} would have infinite cost).

Now we consider $w(T)$. By the termination condition of the while loop, we 
know that when it exits, $w'(V)\leq (1-b')w(V)$, which means that $w'$ has 
been set to zero for elements of total weight at least $b'w(V)$. But those are 
exactly the elements in $T$, so $w(T)\geq b'w(V)$.  Now consider the last 
iteration of the loop. At the beginning of this iteration, we have 
$w'(V)>(1-b')w(V)$, which means that at the end of it we have 
$w'(V)>\frac{1}{2}(1-b')w(V)$, because the weight of the smaller (according to 
$w'$) of $S$ or  
$\bar{S}$ is set to zero. But $w'(V)$ at the end of the algorithm is exactly 
the weight of $\bar{T}$, which means that 
$w(\bar{T})>\frac{1}{2}(1-b')w(V)\geq \frac{1}{3}w(V)\geq b'w(V)$, using the 
assumption $b'\leq 1/3$ twice. So the cut $(T,\bar{T})$ is $b'$-balanced.

Suppose that $U^*$ is a $b$-balanced cut with $f(U^*)=B$. 
In any iteration $i$ of the while loop, we know 
that two inequalities hold: $w'(U^*)+w'(\bar{U}^*)>(1-b')w(V)$ (by the 
loop condition), and $\max(w'(U^*), w'(\bar{U}^*))\leq (1-b)w(V)$ (by $b$-balance). 
Given these inequalities, the minimum value that the product 
$w'(U^*)\cdot w'(\bar{U}^*)$ can have is $(b-b')w(V)\cdot (1-b)w(V)$.
So with weights $w'$, there is a solution to the SSC problem with value 
$$\frac{f(U^*)}{w'(U^*) w'(\bar{U}^*)} ~\leq~ \frac{B}{(b-b')w(V)\cdot (1-b)w(V)},$$
and the set $S_i$ found by the $\gamma$-approximation algorithm satisfies
$$\frac{f(S_i)}{w'(S_i) w'(\bar{S}_i)} ~\leq~ \frac{\gamma B}{(b-b')w(V)\cdot (1-b)w(V)}.$$
Since in iteration $i$, $w'(S_i)=w(S_i\setminus \bigcup_{j=0}^{i-1}S_j)$, 
$w'(\bar{S}_i)\leq w(V)$, and $(1-b)\geq 1/2$,
$$f(S_i)~\leq~ w(S_i\setminus \bigcup_{j=0}^{i-1}S_j) \frac{2B\gamma}{(b-b')w(V)}.$$
Now $f(T)\leq \sum_{i}f(S_i) \leq w(T)\cdot 2B\gamma/(b-b')w(V) =
 B \cdot O(\frac{\gamma}{b-b'})$. 
\end{proof}

\subsubsection{Algorithm for general functions}
The algorithm for general functions (Algorithm \ref{alg:sbc}) also repeatedly 
finds weighted submodular sparsest cuts $(S_i,\bar{S}_i)$, but it uses them to collect two 
sets: either it puts $S_i$ into $T_1$, or it puts $\bar{S}_i$  into  $T_2$.
Thus, the values of $f(T_1)$ and  $\bar{f}(T_2)$ can be bounded using the 
guarantee of the SSC algorithm (where $\bar{f}(S)=f(\bar{S})$).

\begin{algorithm}[ht] 
  \caption{~~Submodular balanced cut. ~Input: $V$, $f$, $w$, $b'$} \label{alg:sbc} 
\begin{algorithmic}[1]
\State Initialize $w'=w$, $i=0$, $T_1=T_2=\emptyset$
\While{$w'(V) > (1-b')w(V)$} 
\State Let $S_i$ be a $\gamma$-approximate weighted SSC on $V$, $f$, and 
weights $w'$
\State {\bf if} $w'(S_i) \leq w'(\bar{S}_i)$ 
{\bf then} set 
$T_1\leftarrow T_1\cup S_i$; ~$w'(S_i)\leftarrow 0$; ~$i\leftarrow i+1$
\State {\bf else} set $T_2\leftarrow T_2\cup \bar{S}_i$; 
~$w'(\bar{S}_i)\leftarrow 0$; ~$i\leftarrow i+1$ 
\EndWhile
\State {\bf if} $w(T_1)\geq w(T_2)$  {\bf then return} $T_1$ {\bf else return} $\bar{T}_2$
\end{algorithmic}
\end{algorithm}

\begin{theorem}
If the system $(V,f,w)$ contains a $b$-balanced cut of cost $B$, then 
Algorithm \ref{alg:sbc} finds a $b'/2$-balanced cut $T$ with  
$f(T) = O\left(\frac{B}{b-b'} \sqrt{\frac{n}{\ln n}}\right)$, for a given 
$b'< b$.
\end{theorem}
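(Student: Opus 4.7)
The plan is to parallel the proof of Theorem \ref{thm:ssbc}, with two adaptations to handle the fact that $f$ is no longer symmetric. First, termination follows as before: in each iteration the SSC subroutine must return a finite-cost cut, which is only possible if at least one previously nonzero $w'$-element gets zeroed out, so the loop runs at most $n$ times.

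For the balance guarantee, I will show that both $w(T_1) + w(T_2) \geq b' w(V)$ and $w(T_1 \cup T_2) < \tfrac{1+b'}{2} w(V)$. The first holds because $w(T_1 \cup T_2)$ equals the total weight zeroed out, which by the loop termination condition is at least $b'w(V)$. For the second, note that on entering the last iteration we had $w'(V) > (1-b')w(V)$, and that iteration zeros out the smaller of $w'(S_i)$ and $w'(\bar{S}_i)$, which is at most $w'(V)/2$, so $w'(V) > (1-b')w(V)/2$ at exit, giving $w(T_1 \cup T_2) = w(V) - w'(V) < \tfrac{1+b'}{2}w(V)$. The set $T$ returned is whichever of $T_1$, $\bar{T}_2$ has weight at least $\max(w(T_1),w(T_2)) \geq (w(T_1)+w(T_2))/2 \geq b' w(V)/2$, and its complement contains $\overline{T_1 \cup T_2}$, whose weight exceeds $\tfrac{1-b'}{2}w(V) \geq b'w(V)/2$ because $b' < b \leq 1/2$. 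Hence $T$ is $b'/2$-balanced.

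For the cost bound, the key trick for the asymmetric case is to use the fact that $\bar{f}(S) := f(\bar{S})$ is also a nonnegative submodular function and therefore subadditive. Writing $I_1, I_2$ for the iteration indices that fall into the two branches, subadditivity of $f$ gives $f(T_1) \leq \sum_{i\in I_1} f(S_i)$, and subadditivity of $\bar{f}$ applied to $T_2 = \bigcup_{i\in I_2} \bar{S}_i$ gives $f(\bar{T}_2) = \bar{f}(T_2) \leq \sum_{i\in I_2} \bar{f}(\bar{S}_i) = \sum_{i\in I_2} f(S_i)$. Thus in either case $f(T) \leq \sum_i f(S_i)$.

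It remains to bound $\sum_i f(S_i)$, which proceeds exactly as in Theorem \ref{thm:ssbc}. At each iteration the optimal $b$-balanced cut $U^*$ is a feasible SSC candidate: the loop condition gives $w'(U^*) + w'(\bar{U}^*) > (1-b')w(V)$, while $b$-balance of $U^*$ gives $\max(w'(U^*), w'(\bar{U}^*)) \leq (1-b)w(V)$, so $w'(U^*)w'(\bar{U}^*) \geq (b-b')(1-b)w(V)^2$. The $\gamma$-approximation for SSC therefore yields $f(S_i) \leq \gamma B \cdot w'(S_i)w'(\bar{S}_i)/((b-b')(1-b)w(V)^2)$. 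Using $w'(S_i)w'(\bar{S}_i) \leq w(V) \cdot w'(\text{smaller side}_i)$ and the fact that the smaller side is exactly the set being zeroed out and added to $T_1$ or $T_2$, the smaller-side weights telescope to $w(T_1 \cup T_2) \leq w(V)$. This gives $\sum_i f(S_i) \leq \gamma B/((b-b')(1-b)) = O(\gamma B/(b-b'))$, completing the bound. I expect the main obstacle to be the asymmetric-$f$ bookkeeping in the third paragraph; the $\bar{f}$-subadditivity trick is the single nontrivial idea that makes the argument for $f(\bar{T}_2)$ go through as cleanly as the $f(T_1)$ side.
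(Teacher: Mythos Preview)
Your approach matches the paper's: the $\bar f$-subadditivity trick for bounding $f(\bar T_2)$ is exactly what the paper has in mind when it says ``$\bar f(T_2) = B\cdot O(\gamma/(b-b'))$ using a proof similar to that of Theorem~\ref{thm:ssbc}'', and your telescoping bound on $\sum_i f(S_i)$ is correct.

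There is a genuine slip in your balance argument for the case $T=\bar T_2$. You claim ``its complement contains $\overline{T_1\cup T_2}$'', but when $T=\bar T_2$ the complement is $\bar T=T_2$, and $T_2\supseteq\overline{T_1\cup T_2}$ would force $T_1\cup T_2=V$, which is generally false. Likewise, the claim that $w(T)\ge\max(w(T_1),w(T_2))$ is not justified when $T=\bar T_2$. The fix is that the two sides swap roles in this case: since $\bar T_2$ is returned only when $w(T_2)>w(T_1)$, we get $w(\bar T)=w(T_2)=\max(w(T_1),w(T_2))\ge b'w(V)/2$; and since $w'(T_2)=0$ at termination, $w(T)=w(\bar T_2)\ge w'(\bar T_2)=w'(V)>\tfrac{1-b'}{2}w(V)\ge \tfrac{b'}{2}w(V)$. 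With this correction your balance proof goes through.
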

\begin{proof}
When the while loop exits, $w'(V) \leq (1-b')w(V)$, so the total weight of 
elements in $T_1$ and $T_2$ (the ones for which $w'$ has been set to zero) is 
at least $b'w(V)$. So $\max(w(T_1), w(T_2))\geq b'w(V)/2$. 
At the beginning of the last iteration of the loop, $w'(V)>(1-b')w(V)$. Since 
the weight of the smaller of $S_i$ and $\bar{S}_i$ is set to zero, at the end 
of this iteration $w'(V)>\frac{1}{2}(1-b')w(V)$. 
Let $T$ be the set output by the algorithm. Since $w'(T)=0$, we have 
$w(\bar{T})\geq w'(V)>\frac{1}{2}(1-b')w(V)\geq b'/2$, using $b'\leq 1/2$. 
Thus we have shown that Algorithm \ref{alg:sbc} outputs a $b'/2$-balanced cut. 

The function values can be bounded as
$f(T_1)=B \cdot O(\frac{\gamma}{b-b'})$ and 
$\bar{f}(T_2)=B \cdot O(\frac{\gamma}{b-b'})$ using a proof similar to that of 
Theorem \ref{thm:ssbc}. 
\end{proof}


\section{Submodular minimization with cardinality lower bound} 
\label{sec:sk}

We start with the lower bound result.
Let $R$ be a random subset of $V$ of size $\alpha = \frac{x\sqrt{n}}{5}$, let 
$\beta=\frac{x^2}{5}$, and $x$ be any parameter satisfying 
$x^2= \omega(\ln n)$ and such that $\alpha$ and $\beta$ are integer.  We use the following two monotone submodular functions:
\begin{eqnarray}
\label{eq:learnlb}
f_3(S) ~=~ \min\left(|S|,\,\alpha\right), & \hspace{1cm} &
f_4(S) ~=~ \min\left( \beta+|S\cap \bar{R}|,~|S|,~\alpha \right).
\end{eqnarray}

\begin{lemma} \label{lem:disting3}
Any algorithm that makes a polynomial number of oracle queries has 
probability $n^{-\omega(1)}$  of distinguishing the functions $f_3$ and $f_4$ above.
\end{lemma}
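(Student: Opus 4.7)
The plan is to invoke Lemma \ref{lem:alg}, which reduces the task to showing that for any fixed $S\subseteq V$ (chosen without knowledge of $R$), the probability over $R$ that $f_3(S)\neq f_4(S)$ is $n^{-\omega(1)}$. Since $f_4(S)\leq f_3(S)$ always, the two values disagree exactly when $\beta+|S\cap\bar R| < \min(|S|,\alpha)$, equivalently when $|S\cap R| > k$, where $k := \max(0,|S|-\alpha)+\beta$. Set $s=|S|$ and $\mu = \mathrm{E}[|S\cap R|] = s\alpha/n$.

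I would then split into two cases and show that in each one the threshold $k$ is both large (of order $\beta = x^2/5$) and bounded well above the mean $\mu$. If $s\leq\alpha$, then $k=\beta=x^2/5$ while $\mu\leq\alpha^2/n=x^2/25$, so $k/\mu\geq 5$. If $s>\alpha$, the event $|S\cap R|>k$ can occur only when $k\leq\alpha$ (since $|S\cap R|\leq|R|=\alpha$), which forces $s\leq 2\alpha-\beta$; in that range $k=s-\alpha+\beta\geq\beta$ while $\mu\leq(2\alpha-\beta)\alpha/n<2\alpha^2/n=2x^2/25$, giving $k/\mu\geq 5/2$. Thus in both cases $k\geq\beta=\omega(\ln n)$ (by the hypothesis $x^2=\omega(\ln n)$) and $k/\mu$ exceeds a fixed constant strictly greater than $1$.

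To bound $\Pr[|S\cap R|>k]$, I would use the same switching device as in the proof of Lemma \ref{lem:disting}: replace the uniform $R$ of size $\alpha$ by an independent sample $R'$ in which each element is included with probability $\alpha/n$, and use $\Pr[|R'|=\alpha]=\Theta(1/\sqrt n)$ to conclude $\Pr[|S\cap R|>k] \leq O(\sqrt n)\cdot\Pr[|S\cap R'|>k]$. Since $|S\cap R'|$ is a sum of independent Bernoullis with mean $\mu$, the multiplicative Chernoff bound together with $k/\mu\geq 5/2$ and $k=\omega(\ln n)$ yields $\Pr[|S\cap R'|>k]\leq\exp(-\Omega(k))=\exp(-\omega(\ln n))=n^{-\omega(1)}$. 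The $O(\sqrt n)$ factor is absorbed, and Lemma \ref{lem:alg} then delivers the claim.

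The main obstacle is the case $s>\alpha$: one must first observe that the event is only possible when $s\leq 2\alpha-\beta$, and then verify the ratio bound $k/\mu\geq 5/2$ uniformly over this entire sub-range. The numerics go through precisely because $\beta$ was chosen to be five times the maximum possible mean $\alpha^2/n$ of $|S\cap R|$, which is exactly what produces a constant multiplicative gap between $k$ and $\mu$ in both cases.
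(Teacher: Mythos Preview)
Your argument is correct and uses the same core ingredients as the paper: invoke Lemma~\ref{lem:alg}, switch from the uniform random subset $R$ of size $\alpha$ to an i.i.d.\ sample $R'$ with inclusion probability $\alpha/n$, and finish with a multiplicative Chernoff bound. The one structural difference is that the paper first argues (by a short monotonicity check, as in the proof of Lemma~\ref{lem:disting}) that $\Pr[f_3(S)\neq f_4(S)]$ is maximized when $|S|=\alpha$, and then treats only that single case, where $k=\beta$ and $\mu=\alpha^2/n=x^2/25$ exactly, so that $k=5\mu$. This sidesteps your two-case split and the side constraint $s\leq 2\alpha-\beta$ entirely. Your direct case analysis works just as well and makes the role of the ratio $k/\mu$ explicit; the weaker bound $k/\mu\geq 5/2$ you obtain in the second case is still ample for an $\exp(-\Omega(k))$ Chernoff tail. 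One small imprecision: $\Pr[|R'|=\alpha]$ is $\Theta(1/\sqrt{\alpha})$ rather than $\Theta(1/\sqrt{n})$ in general; the paper avoids this by using the cruder bound $\Pr[|R'|=\alpha]\geq 1/(n+1)$, which is all that is needed since any polynomial prefactor is absorbed into $n^{-\omega(1)}$.
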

\begin{proof}
By Lemma \ref{lem:alg}, it suffices to prove that for any set $S$, the 
probability that $f_3(S)\neq f_4(S)$ is at most $n^{-\omega(1)}$. It is easy to check (similarly to the proof of Lemma \ref{lem:disting}) that 
$\Pr[f_3(S)\neq f_4(S)]$ is maximized for sets $S$ of 
size $\alpha$. And for a set $S$ with $|S|=\alpha$, $f_3(S)\neq f_4(S)$ if and 
only if $\beta+|S\cap \bar{R}|< |S|$, or, equivalently, $|S\cap R|>\beta$. So 
we analyze the probability that $|S\cap R|>\beta$.

$R$ is a random subset of $V$ of size $\alpha$.  Let us consider a different 
set, $R'$, which is obtained by independently including each element of $V$ 
with probability $\alpha/n$. The expected size of $R'$ is $\alpha$, and the probability 
that $|R'|=\alpha$ is at least $1/(n+1)$. Then
$$\Pr\left[|S\cap R| > \beta\right] ~=~ 
\Pr\left[|S\cap R'| > \beta ~\left|~ |R'|=\alpha \right.\right] ~\leq~
(n+1)\cdot \Pr\left[|S\cap R'| > \beta \right],$$
and it suffices to show that $\Pr\left[|S\cap R'| > \beta \right] = n^{-\omega(1)}$. 
For this, we use Chernoff bounds. The expectation of $|S\cap R'|$ is 
$\mu = \alpha|S|/n = \alpha^2/n = x^2/25$. Then $\beta = 5\mu$. Let $\delta=4$. Then 
$$\Pr\left[|S\cap R'| > (1+\delta) \mu \right] ~<~ 
\left(\frac{e^\delta}{(1+\delta)^{1+\delta}}\right)^\mu ~=~ 
\left(\frac{e^4}{5^5}\right)^{\frac{x^2}{25}} ~\leq~ 0.851^{x^2}.$$
Since $x^2= \omega(\ln n)$, we get that this probability is $n^{-\omega(1)}$.
\end{proof}

\begin{theorem}
\label{thm:sklb}
There is no $(\rho,\sigma)$ bicriteria approximation algorithm
for the SML problem, even with monotone functions, for any $\rho$ and $\sigma$ with 
$\frac{\rho}{\sigma} = o\left(\sqrt{\frac{n}{\ln n}}\right)$.
\end{theorem}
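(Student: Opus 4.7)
The plan is to suppose for contradiction that a polynomial-query $(\rho,\sigma)$ approximation algorithm $A$ for SML exists with $\rho/\sigma = o(\sqrt{n/\ln n})$, and to use $A$ to construct a distinguisher between $f_3$ and $f_4$ that succeeds with constant probability, contradicting Lemma~\ref{lem:disting3}.

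The first step is choosing the parameter $x$ that defines $\alpha = x\sqrt{n}/5$ and $\beta = x^2/5$. I want two conditions: (i) $x^2 = \omega(\ln n)$ so that Lemma~\ref{lem:disting3} guarantees indistinguishability, and (ii) $x < \sigma\sqrt{n}/\rho$ so that the gap $\sigma\alpha - \rho\beta = (\sigma\sqrt{n} - \rho x)x/5$ is strictly positive. Since $\rho/\sigma = o(\sqrt{n/\ln n})$ is exactly the statement $\sigma\sqrt{n}/\rho = \omega(\sqrt{\ln n})$, both conditions can be met simultaneously (with any minor adjustment needed to keep $\alpha,\beta$ integer).

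Next I set the cardinality target to $W = \alpha$. Under $f_4$, the set $R$ itself is feasible because $|R|=\alpha$, and $f_4(R) = \min(\beta+0,\,\alpha,\,\alpha) = \beta$, so the instance admits $B = \beta$. Under $f_3$, any set of size at least $\alpha$ has $f_3$-value equal to $\alpha$, so the instance admits $B = \alpha$.

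The distinguisher $D$ then works as follows: run $A$ on the oracle $f$ with target $\alpha$; let $S$ be its output; make one extra query for $f(S)$; declare ``$f_4$'' if $|S|\geq \sigma\alpha$ and $f(S) \leq \rho\beta$, and ``$f_3$'' otherwise. On a true $f_4$ input, the $(\rho,\sigma)$ guarantee with $B=\beta$ ensures that with at least the success probability $p$ of $A$, both conditions hold, so $D$ outputs ``$f_4$''. On a true $f_3$ input, any output $S$ either has $|S|<\sigma\alpha$ (in which case $D$ says ``$f_3$'') or has $f_3(S) = \min(|S|,\alpha) \geq \sigma\alpha$, which by condition (ii) strictly exceeds $\rho\beta$, so again $D$ says ``$f_3$''. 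Thus $D$ distinguishes with constant probability, violating Lemma~\ref{lem:disting3}.

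The main obstacle is really the parameter juggling in the first step: making sure that a single $x$ can simultaneously place $\beta$ far enough below $\sigma\alpha/\rho$ (to let the distinguisher separate the two function values) while still growing fast enough for indistinguishability. That is where the threshold $\sqrt{n/\ln n}$ comes from, and once the right $x$ is fixed the remainder of the argument is bookkeeping about the minimum structure of $f_3$ and $f_4$.
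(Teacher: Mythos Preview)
Your proposal is correct and follows essentially the same argument as the paper: choose $x$ so that $x^2=\omega(\ln n)$ yet $\rho\beta<\sigma\alpha$ (the paper takes $x=\sigma\sqrt{n}/(\delta\rho)$ with $\delta>1$), set $W=\alpha$, note that $R$ witnesses $B=\beta$ for $f_4$, and observe that no set $S$ with $|S|\ge\sigma\alpha$ can have $f_3(S)\le\rho\beta$, so the algorithm's output distinguishes $f_3$ from $f_4$ in contradiction to Lemma~\ref{lem:disting3}. Your case analysis via $\min(|S|,\alpha)\ge\sigma\alpha$ is in fact a slightly cleaner packaging of the paper's two-case split.
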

\begin{proof}
We assume that any algorithm for this problem outputs a set of elements as 
well as the function value on this set.  Suppose that a bicriteria algorithm with 
$\frac{\rho}{\sigma} = o\left(\sqrt{\frac{n}{\ln n}}\right)$ exists.
Let $f_3$ and $f_4$ be the two monotone functions in (\ref{eq:learnlb}), with 
$x=\frac{\sigma\sqrt{n}}{\delta\rho}$, where $\delta>1$ is a constant that ensures that $\alpha$ and $\beta$ are integer. Then $x$ satisfies $x^2=\omega(\ln n)$.  
Consider the output of the algorithm when given $f_4$ as input and
$W=\alpha$. The optimal solution in this case is the set $R$, with $f(R)=\beta$. 
So the algorithm finds an approximate solution 
 $T$ with $f_4(T)\leq \rho \beta$ and $|T|\geq \sigma \alpha$. 
However, we show that no set $S$ with $f_3(S)\leq \rho \beta$ and $|S|\geq \sigma \alpha$ exists, which means that if the input is the function $f_3$, then the algorithm produces a different answer, thus distinguishing $f_3$ and $f_4$. We assume for contradiction that such a set $S$ exists and  consider two cases. First, suppose $|S|\geq \alpha$.
Then $f_3(S)\leq \rho\beta = \frac{\sigma\sqrt{n}}{\delta x} \frac{x^2}{5} = 
\frac{\sigma\alpha}{\delta} < \alpha$, since $\delta>1$ and by definition 
$\sigma\leq 1$. But this is a contradiction because $f_3(S)=\alpha$ for all $S$ 
with $|S|\geq \alpha$. The second case is $|S|<\alpha$. Then we have 
$|S|\geq \sigma\alpha$ and 
$f_3(S)\leq \rho\beta = \frac{\sigma \alpha}{\delta} \leq |S|/\delta$, 
which is also a contradiction because $|S|\geq \sigma\alpha >0$ and $f_3(S)=|S|$ for $|S|<\alpha$.
\end{proof}

\subsection{Algorithm for SML}
Our relaxed decision procedure for the SML problem with weights 
$\{0,1\}$ (Algorithm \ref{alg:sk}) builds up the solution out of multiple 
sets that it finds using submodular function minimization. 
If the weight requirement $W$ is larger than half the total weight 
$w(V)$, then collecting sets whose ratio of function value to weight of new 
elements is low (less than $2B/W$), until a total weight of at least $W/2$ is 
collected, finds the required approximate solution. In the other case, if $W$ 
is less than $w(V)/2$,  
the algorithm looks for sets $T_i$ with low ratio of  function value to the 
weight of new elements in the intersection of $T_i$ and a random set 
$S_i$. These sets not only have small $f(T_i)/w(T_i)$ 
ratio, but also have bounded function value $f(T_i)$. If such a set 
is found, then it is added to the solution. 

\begin{algorithm}[ht] 
  \caption{~~SML. ~Input: $V$, $f$, $w:V\rightarrow \{0,1\}$, $W$, 
$B$, $p$} \label{alg:sk} 
\begin{algorithmic}[1]
\State Initialize $U_0=\emptyset$; $i=0$
\If{$W\geq w(V)/2$}
\Comment{ case $W \geq \frac{w(V)}{2}$}
\While{$w(U_i) < W/2$} \label{line:wh1}
\State Let $T_i$ be a subset of $V$ minimizing $f(T)-\frac{2B}{W} \cdot w(T \setminus U_i)$ 
\label{line:expr1}
\State {\bf if} {$f(T_i) <\frac{2B}{W}\cdot w(T_i\setminus U_i)$} 
{\bf then} Let  $U_{i+1}=U_i \cup T_i$;~ $i=i+1$ 
{\bf else return} \emph{fail}
\EndWhile
\State {\bf return} $U = U_i$ 
\EndIf
\State Let $\alpha= \frac{2B}{W} \sqrt{\frac{n}{\ln n}}$ 
\Comment{ case $W<\frac{w(V)}{2}$}
\While{$w(U_i)<W/2$} \label{line:wh2}
\State Choose a random $S_i\subseteq V\setminus U_i$, including each element 
with probability  $\frac{W}{w(V)}$ 
\State Let $T_i$ be a subset of $V$ minimizing 
$f(T)-\alpha\cdot w(T\cap S_i)$ \label{line:expr}
\State {\bf if} $f(T_i) \leq \alpha\cdot w(T_i\cap S_i)$ and 
$f(T_i)\leq 4B \sqrt{\frac{n}{\ln n}}$ 
~{\bf then } Let $U_{i+1}=U_i \cup T_i$;~ $i=i+1$  \label{line:cond}
\State {\bf if} the number of iterations exceeds 
$\frac{3n^{9/2}}{c}\ln \left(\frac{n}{1-p}\right)$,~ 
{\bf return} \emph{fail} 
\EndWhile
\State \Return $U = U_i$
\end{algorithmic}
\end{algorithm}

\begin{theorem}
\label{thm:skalg}
Algorithm \ref{alg:sk} is a $(5\sqrt{\frac{n}{\ln n}}, \frac{1}{2})$ 
bicriteria decision procedure for the SML problem. That is, given a feasible 
instance, it outputs a set $U$ with $f(U)\leq 5\sqrt{\frac{n}{\ln n}} B$ and 
$w(U)\geq W/2$ with probability at least $p$.
\end{theorem}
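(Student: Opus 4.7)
The plan is to handle Algorithm~\ref{alg:sk}'s two branches separately, since the case $W \geq w(V)/2$ is deterministic while $W < w(V)/2$ requires the sampling analysis. In either branch, given a feasible instance with witness $U^*$ satisfying $f(U^*) < B$ and $w(U^*) \geq W$, I need to show the procedure returns a set $U$ with $w(U) \geq W/2$ and $f(U) \leq 5B\sqrt{n/\ln n}$ with probability at least $p$.

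For the easy branch $W \geq w(V)/2$, inside the while loop on line~\ref{line:wh1} one has $w(U_i) < W/2$, so $w(U^* \setminus U_i) > W/2$, and plugging $T = U^*$ into the minimization on line~\ref{line:expr1} gives $f(U^*) - (2B/W)\, w(U^* \setminus U_i) < B - B = 0$. Hence the minimizer $T_i$ clears the threshold and the algorithm cannot fail; combined with $f(T_i) \geq 0$, the threshold also forces $w(T_i \setminus U_i) > 0$, so each iteration strictly grows $w(U_i)$ and the loop terminates. Subadditivity telescopes into $f(U) \leq \sum_i f(T_i) < (2B/W) \sum_i w(T_i \setminus U_i) = (2B/W)\, w(U) \leq 4B$, using $w(U) \leq w(V) \leq 2W$ in this regime, which lies well within the claimed bound.

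For the harder branch $W < w(V)/2$, set $\alpha = (2B/W)\sqrt{n/\ln n}$ so that $B/\alpha = (W/2)\sqrt{\ln n/n}$, and isolate per-iteration good events $\mathcal{E}_1 = \{w(S_i \cap (U^* \setminus U_i)) \geq W\sqrt{\ln n/n}\}$ and $\mathcal{E}_2 = \{w(S_i) \leq 2W\}$. On $\mathcal{E}_1$, the set $U^*$ certifies $f(U^*) - \alpha\, w(U^* \cap S_i) < -B$, so the minimizer $T_i$ clears the first check on line~\ref{line:cond}, and rearranging the minimality inequality $f(T_i) - \alpha\, w(T_i \cap S_i) \leq f(U^*) - \alpha\, w(U^* \cap S_i)$ yields $w(T_i \cap S_i) \geq w(U^* \cap S_i) - f(U^*)/\alpha \geq (W/2)\sqrt{\ln n/n}$ — the per-iteration weight gain for $U$. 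On $\mathcal{E}_2$ the second check is automatic because $f(T_i) \leq \alpha\, w(T_i \cap S_i) \leq \alpha\, w(S_i) \leq 2\alpha W = 4B\sqrt{n/\ln n}$. I would lower bound $\Pr[\mathcal{E}_1 \cap \mathcal{E}_2]$ by splitting $S_i$ into its independent restrictions to $U^* \setminus U_i$ and its complement in $V \setminus U_i$: Theorem~\ref{thm:gen} applied to the first piece with $q = W/w(V)$ and target count $W\sqrt{\ln n/n}$ (where $\varepsilon^2 qm/(1-q) = O(\ln n)$ so the exponential factor is only $n^{-O(1)}$) handles $\mathcal{E}_1$, while Markov on $\mathbb{E}[w(S_i)] \leq W$ handles $\mathcal{E}_2$. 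This yields an inverse-polynomial per-iteration success probability; combined with the iteration budget $\tfrac{3 n^{9/2}}{c}\ln(n/(1-p))$ and standard amplification, we obtain the $\sqrt{n/\ln n}$ good iterations needed to drive $w(U) \geq W/2$ with probability $\geq p$.

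The final step, which I expect to be the main technical obstacle, is the upper bound $f(U) \leq 5B\sqrt{n/\ln n}$. Subadditivity of $f$ together with the first check $f(T_i) \leq \alpha\, w(T_i \cap S_i)$ and pairwise disjointness of the sets $T_i \cap S_i$ (each contained in $V \setminus U_i$ and then absorbed into $U_{i+1}$) telescope into $f(U) \leq \alpha\, w(U)$. It then suffices to bound $w(U) \leq 5W/2$, which via $w(U) < W/2 + w(T_{\text{last}} \setminus U_{\text{prev}})$ reduces to bounding $w(T_{\text{last}})$. The key structural observation I would invoke is that since $f$ is monotone, any $v \in T_i \setminus S_i$ can be deleted from $T_i$ leaving the $-\alpha\, w(T \cap S_i)$ term unchanged while weakly decreasing $f(T_i)$, contradicting minimality; hence $T_i \subseteq S_i$ and $w(T_i) \leq w(S_i) \leq 2W$ under $\mathcal{E}_2$. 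Plugging $w(U) \leq 5W/2$ into $f(U) \leq \alpha\, w(U)$ produces $f(U) \leq 5B\sqrt{n/\ln n}$, completing the argument.
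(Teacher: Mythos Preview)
Your treatment of the case $W\ge w(V)/2$ and your progress argument for the sampling branch are essentially the paper's, and are fine. The gap is in the final cost bound for the sampling branch.

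First, the SML problem does \emph{not} assume $f$ is monotone (only nonnegative submodular), so the ``key structural observation'' that one may delete $v\in T_i\setminus S_i$ without increasing $f(T_i)$, and hence take $T_i\subseteq S_i$, is unjustified. For non-monotone $f$ the minimizer of $f(T)-\alpha\,w(T\cap S_i)$ can contain many elements outside $S_i$ (even all of $V$), so you cannot bound $w(T_{\text{last}})$.

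Second, even granting $T_i\subseteq S_i$, your chain ends with $w(T_{\text{last}})\le w(S_{\text{last}})\le 2W$ ``under $\mathcal E_2$'', but nothing forces the last \emph{accepted} iteration to be one where $\mathcal E_2$ holds. The algorithm accepts $T_i$ whenever the two checks on line~\ref{line:cond} pass, which can happen in iterations where $w(S_i)>2W$; your good-event analysis only shows that sufficiently many acceptances occur, not that every acceptance occurs under $\mathcal E_2$.

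The fix is simple and is what the paper does: do not bound $w(U)$ at all. Split off the last accepted set and use the algorithm's checks directly. With $U_{i^*}$ the set just before the last acceptance, subadditivity plus the first check and disjointness of the $T_j\cap S_j$ (for $j<i^*$) inside $U_{i^*}$ give
\[
f(U_{i^*}) \;\le\; \sum_{j<i^*} f(T_j) \;\le\; \alpha \sum_{j<i^*} w(T_j\cap S_j) \;\le\; \alpha\,w(U_{i^*}) \;<\; \alpha\,\tfrac{W}{2} \;=\; B\sqrt{\tfrac{n}{\ln n}},
\]
and the second check (which holds for \emph{every} accepted $T_i$, good iteration or not) gives $f(T_{i^*})\le 4B\sqrt{n/\ln n}$. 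Hence $f(U)\le f(U_{i^*})+f(T_{i^*})\le 5B\sqrt{n/\ln n}$, with no appeal to monotonicity and no assumption on the last sample.
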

\begin{proof}
Assume that the instance is feasible, and let $U^*\subseteq V$ be a set with 
$w(U^*)\geq W$ and $f(U^*) < B$. We consider two cases, $W\geq w(V)/2$ and 
$W <w(V)/2$, which the algorithm handles separately. 

First, assume that $W\geq w(V)/2$ and consider one of 
the iterations of the while loop on line~\ref{line:wh1}. By the loop 
condition, $w(U_i)<W/2$, so  $w(U^*\setminus U_i) > W/2$. As a result, for the 
set $U^*$, the expression  on line \ref{line:expr1}  is negative:
$$f(U^*) - \frac{2B}{W}\cdot w( U^* \setminus U_i) ~< ~ f(U^*) - B ~< ~0.$$
Then for the set $T_i$ which minimizes this expression, it would 
also be negative, implying that $w(T_i\setminus U_i)$ is positive, and so $w(U_i)$ 
increases in each iteration. As a result, if the instance is feasible, then 
after at most $n$ iterations of the 
loop on line \ref{line:wh1}, a set $U$ is found with $w(U)\geq W/2$. For 
the function value, we have 
$$f(U) ~\leq~ \sum_i f(T_i) ~<~ \frac{2B}{W} \sum_i w(T_i\setminus U_i) ~\leq~ 
\frac{2B}{W} \cdot w(V) ~\leq ~ 4B$$
by our assumption about $W$.

The second case is $W<w(V)/2$. Assuming Claim \ref{cl:int} below, which is proved 
later, we show that  in each 
iteration of the while loop on line \ref{line:wh2}, with probability at least 
 $\frac{c}{3n^{7/2}}$,  a new non-empty set $T_i$ is added to $U$. 
This implies that after $\frac{3n^{9/2}}{c}\ln \left(\frac{n}{1-p}\right)$ 
iterations, the loop successfully terminates with probability at least $p$.

\begin{claim} \label{cl:int}
In each iteration of the while loop on line \ref{line:wh2} of Algorithm 
\ref{alg:sk}, both of the following two inequalities hold with probability at least 
$\frac{c}{3n^{7/2}}$.
\begin{equation} \label{eq:si}
w(U^* \cap S_i) ~>~ \frac{B}{\alpha}~=~ \frac{W}{2} \sqrt{\frac{\ln n}{n}} 
 ~~~~~{\rm and}~~~~~ w(\bar{U}^* \cap S_i) ~\leq~ 1.5 W.
\end{equation}
\end{claim}

We show that if inequalities (\ref{eq:si}) hold, then the set $T_i$ found by 
the algorithm on line \ref{line:expr} is non-empty and satisfies the 
conditions on line~\ref{line:cond}, which means that new elements are added to 
$U$. Since $T_i$ is a minimizer of the expression on line 
\ref{line:expr}, and using (\ref{eq:si}),
$$f(T_i)- \alpha \cdot w(T_i\cap S_i)~\leq ~ 
f(U^*) - \alpha \cdot w(U^* \cap S_i) ~<~
f(U^*) -B ~<~ 0,$$
which means that $T_i$ satisfies the first condition on line \ref{line:cond} 
and  is non-empty. Moreover, from the same inequality and the second part of 
(\ref{eq:si}) we have  
$$f(T_i) ~\leq~ f(U^*) + \alpha\cdot (w(T_i\cap S_i) - w(U^* \cap S_i)) ~\leq~
B + \alpha \cdot w(\bar{U}^* \cap S_i) ~\leq ~ B+1.5\alpha W ~\leq~ 
4B\sqrt{\frac{n}{\ln n}},$$
which means that $T_i$ also satisfies the second condition on line 
\ref{line:cond}. 

Now we analyze the function value of the set output by the algorithm.
Let $T_i$ be the last set added to $U$ by the while loop, and consider the set 
$U_i$ just before $T_i$ is added to it to produce $U_{i+1}$. By the loop 
condition, we have $w(U_i)<W/2$. Then, by submodularity and condition on 
line \ref{line:cond},
$$f(U_i)~\leq~ \sum_{j=0}^{i-1} f(T_j) ~\leq~
 \sum_{j=0}^{i-1} \alpha \cdot w(T_j \cap S_j) ~\leq~
\alpha \cdot w(U_i) ~<~ \alpha\cdot\frac{W}{2} ~=~ B\sqrt{\frac{n}{\ln n}}.$$
So for the set $U$ that the algorithm outputs, 
$f(U)\leq f(U_i)+f(T_i) \leq 5B \sqrt{\frac{n}{\ln n}}$. And by the exiting 
condition of the while loop, $w(U)\geq W/2$.
\end{proof}

\medskip

\begin{proof}[Proof of Claim \ref{cl:int}.]
Because the events corresponding to the two inequalities are independent, we 
bound their  
probabilities separately and then multiply. To bound the probability of 
the first one let  $m=w(U^*\setminus U_i)$ be the number of elements of $U^*$ 
with  weight 1 that are  in $V\setminus U_i$. Since $w(U^*)\geq W$  
and $w(U_i)<W/2$ by the condition of the loop, we have $m>W/2$. 
We invoke Theorem \ref{thm:gen} with parameters $m$, $q=W/w(V)$, and 
$\varepsilon=\frac{w(V)}{2m}\sqrt{\frac{\ln n}{n}}$.
To ensure that $\varepsilon< \frac{1-q}{q}$ and this theorem can be 
applied, we assume that $n\geq 9$, so that $\sqrt{\ln n/n} < 1/2$, and get
$$\frac{1-q}{q}-\varepsilon ~=~ 
\frac{w(V)}{W} -1 - \frac{w(V)}{2m} \sqrt{\frac{\ln n}{n}} ~>~
\frac{w(V)}{2W} - 1 ~>~ 0.$$
Thus the inequality 
$w(U^*\cap S_i) \geq \lceil qm(1+\varepsilon)\rceil > qm\varepsilon =
\frac{W}{2}\sqrt{\frac{\ln n}{n}}$ holds with  probability at least 
(simplifying using inequalities $w(V)-W\geq w(V)/2$, $w(V)\leq n$, and 
$1\leq W<2m$)
$$c\,q\, m^{-\frac{3}{2}} \exp\left[\frac{-\varepsilon^2 qm}{1-q}\right] ~=~
c\,\frac{W}{w(V)} \, m^{-\frac{3}{2}} \exp\left[-\frac{w(V)^3 ~W~m~ \ln n}
{4m^2~n~ w(V)~(w(V)-W)}\right]  ~\geq~ cn^{-7/2}.$$

For the second inequality,  we notice that the expectation of 
$w(\bar{U}^* \cap S_i)$ is $w(\bar{U}^*) \cdot \frac{W}{w(V)} \leq W$. 
So by Markov's inequality, the probability that 
$w(\bar{U}^* \cap S_i) \leq 1.5 W$ is at least $1/3$.   
\end{proof}


\section{Submodular load balancing}

\subsection{Lower bound}
We give two monotone submodular functions that are hard to distinguish, 
but whose value of the optimal solution to the SLB problem 
differs by a large factor.  These functions are:
\begin{eqnarray} \label{fns:slb}
f_5(S)~=~\min\left(|S|,\,\alpha\right) & \hspace{1cm} &
f_6(S)~=~\min\left(\sum_i \min\left(\beta, |S\cap V_i|\right),~\alpha\right).
\end{eqnarray}
Here $\{V_i\}$ is a random (unknown to the algorithm) 
partition of $V$ into $m$ equal-sized sets.  We set 
$m=\frac{5\sqrt{n}}{x}$, $\alpha=\frac{n}{m}=\frac{x\sqrt{n}}{5}$, 
$\beta=\frac{x^2}{5}$, with any parameter $x$ satisfying $x^2= \omega(\ln n)$, and values chosen so that $\alpha$ and $\beta$ are integer.
 
\begin{lemma} \label{lem:disting2}
Any algorithm that makes a polynomial number of oracle queries has 
probability $n^{-\omega(1)}$  of distinguishing the functions $f_5$ and $f_6$ above.
\end{lemma}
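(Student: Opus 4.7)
The plan is to apply Lemma \ref{lem:alg}, so it suffices to fix an arbitrary $S\subseteq V$ (chosen without knowledge of the partition $\{V_i\}$) and show that $\Pr[f_5(S)\neq f_6(S)] = n^{-\omega(1)}$, where the probability is over the random partition. I would first characterize the bad event, then use a set-monotonicity reduction to a canonical size, and finish with a Chernoff bound in the style of Lemma \ref{lem:disting3}. The starting identity is $\sum_i \min(\beta,|S\cap V_i|) \leq \sum_i |S\cap V_i| = |S|$, with equality iff $|S\cap V_i|\leq \beta$ for every $i$. This yields a case split: for $|S|\leq \alpha$, $f_5(S)\neq f_6(S)$ iff some $|S\cap V_i|>\beta$; for $|S|\geq \alpha$, $f_5(S)\neq f_6(S)$ iff $\sum_i \min(\beta,|S\cap V_i|) < \alpha$.

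The key observation is that, deterministically for every realization of $\{V_i\}$, the quantity $\sum_i \min(\beta,|S\cap V_i|)$ is monotone non-decreasing in $S$ under set inclusion. Using this I would reduce both cases to a canonical witness $S^*$ with $|S^*|=\alpha$. For $|S|<\alpha$, pick any $S^*\supseteq S$ of size $\alpha$; then $|S\cap V_i|\leq |S^*\cap V_i|$, so the event $\{\exists\, i: |S\cap V_i|>\beta\}$ is contained in $\{\exists\, i: |S^*\cap V_i|>\beta\}$. For $|S|>\alpha$, pick any $S^*\subseteq S$ of size $\alpha$; monotonicity gives $\sum_i\min(\beta,|S^*\cap V_i|)\leq \sum_i\min(\beta,|S\cap V_i|) < \alpha$, and since $|S^*|=\alpha$ this is in turn equivalent to $\exists\, i: |S^*\cap V_i| > \beta$. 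In both situations $\Pr[f_5(S)\neq f_6(S)] \leq \Pr[\exists\, i: |S^*\cap V_i|>\beta]$ for some $S^*$ with $|S^*|=\alpha$. I would bound this by union-bounding over the $m\leq n$ indices, switching from the uniform random subset $V_1$ of size $\alpha$ to independent sampling with inclusion probability $\alpha/n$ (paying a factor of $n+1$, as in the proof of Lemma \ref{lem:disting3}). The resulting binomial variable has mean $\mu = \alpha^2/n = x^2/25 = \beta/5$, so the multiplicative Chernoff bound with $1+\delta=5$ gives $\Pr[|S^*\cap V_1|>\beta]\leq (e^4/5^5)^{\mu}\leq 0.851^{x^2}$, which is $n^{-\omega(1)}$ because $x^2=\omega(\ln n)$; multiplying by $m(n+1)\leq n^2$ preserves this.

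The main obstacle is the case $|S|>\alpha$: the bad event there is a collective deficit in the sum of $m$ truncated intersection sizes, and a naive union bound over $\{\exists\, i: |S\cap V_i|>\beta\}$ does not suffice because many intersections can exceed $\beta$ without the total ever dropping below $\alpha$ (for instance when $|S|$ is close to $n$, all intersections necessarily exceed $\beta$ but the sum is far above $\alpha$). The downward-monotonicity reduction to a size-$\alpha$ subset $S^*$ sidesteps this: inside a set of size exactly $\alpha$ the bad event collapses to the simple condition ``some intersection exceeds $\beta$'', to which the hypergeometric-Chernoff argument from Lemma \ref{lem:disting3} applies verbatim.
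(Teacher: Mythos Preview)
Your proposal is correct and follows essentially the same approach as the paper: both reduce to the case $|S|=\alpha$ via the monotonicity of $S\mapsto \sum_i\min(\beta,|S\cap V_i|)$ (the paper phrases this as ``the probability can only increase when moving toward size $\alpha$'', you phrase it as event inclusion via a fixed witness $S^*$), then apply the union bound over $i$ and the same Chernoff calculation from Lemma~\ref{lem:disting3}. Your explicit framing with a deterministically chosen $S^*$ is slightly more concrete, but the argument is identical in substance.
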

\begin{proof}
By Lemma \ref{lem:alg}, it suffices to bound the probability, over
the random choice of the sets $V_i$, that $f_5(S)\neq f_6(S)$ for any one set  
$S$. Since $f_5 \geq f_6$, this is the same as $\Pr\left[f_6(S)-f_5(S)<0\right]$.
First, we show that this probability is maximized when $|S|=\alpha$.  For 
$|S|\geq \alpha$, 
\begin{eqnarray*}
\Pr\left[f_6(S)-f_5(S)<0\right] 
&=& \Pr\left[\sum_i \min\left(\beta, |S\cap 
V_i|\right) < \alpha \right],
\end{eqnarray*}
and since the sum in this expression can only decrease if an element is removed from $S$, we have that for $|S| \geq \alpha$, this 
probability is maximized at $|S|=\alpha$.
For $|S|\leq \alpha$, 
\begin{eqnarray*}
\Pr\left[f_6(S)-f_5(S)<0\right] 
&=& \Pr\left[\min\left(\sum_i \min\left(\beta, |S\cap 
V_i|\right),\alpha\right) - |S| <0\right]\\
&=& \Pr\left[\min\left(\sum_i \min\left(\beta, |S\cap V_i|\right) - \sum_i 
|S\cap V_i|,~\alpha- |S| \right)  <0\right]\\ 
&=& \Pr\left[\sum_i \min\left(\beta - |S\cap V_i|, 0 \right) <0\right].
\end{eqnarray*}
Since the sum in this expression can only decrease if an element is 
added to $S$, we have that for $|S| \leq \alpha$, the 
probability is maximized at $|S|=\alpha$.

So suppose that $|S|=\alpha$.  We notice that if for all $i$, 
$|S\cap V_i|\leq \beta$, then $f_5(S)=f_6(S)$.  Therefore, a necessary 
condition for the two functions to be different is that $|S\cap V_i|>\beta$  
for some $i$.  
Since $V_1$ is a random  subset of $V$ of size $\alpha$, we can use the same 
calculation as in the proof of Lemma \ref{lem:disting3} to show that 
$\Pr\left[|S\cap V_1|>\beta\right] \leq n^{-\omega(1)}$.
Applying the union bound, we get 
that the probability that  $|S\cap V_i|>\beta$ for \emph{any} $i$ is also 
$n^{-\omega(1)}$. 
\end{proof}

\begin{theorem} \label{thm:slbhard}
The SLB problem is hard to approximate to a factor of 
$o\left(\sqrt{\frac{n}{\ln n}}\right)$. 
\end{theorem}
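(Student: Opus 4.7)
The plan is to mirror the structure of the SSC hardness proof (Theorem~\ref{thm:ssclb}): I will show that the optimal SLB value under $f_5$ is much larger than under $f_6$, and then use Lemma~\ref{lem:disting2} to derive a contradiction with the existence of any $o(\sqrt{n/\ln n})$-approximation.

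First, I compute both optima. For $f_5(S) = \min(|S|,\alpha)$, any partition $P_1,\ldots,P_m$ of $V$ into $m$ parts has $\max_i |P_i| \geq n/m = \alpha$ by pigeonhole, so $\max_i f_5(P_i) = \alpha$ for \emph{every} partition; in particular the optimum is exactly $\alpha$. For $f_6$, the partition $\{V_i\}$ used in its definition satisfies $f_6(V_i) = \min(\min(\beta,\alpha),\alpha) = \beta$ for each $i$ (using $\beta < \alpha$, which holds whenever $x < \sqrt{n}$), so the optimum is at most $\beta$. Hence the ratio of optima is at least $\alpha/\beta = \sqrt{n}/x$.

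Next, assume for contradiction a $\gamma$-approximation algorithm for SLB with $\gamma = o(\sqrt{n/\ln n})$. Choose a constant $\delta>1$ and set $x = \sqrt{n}/(\delta\gamma)$, with minor perturbations to make $\alpha=x\sqrt{n}/5$, $\beta=x^2/5$, and $m=5\sqrt{n}/x$ integers. Then $x^2 = \Theta(n/\gamma^2) = \omega(\ln n)$, so Lemma~\ref{lem:disting2} applies, and $\gamma\beta = \alpha/\delta < \alpha$. Following the convention (as in Theorems~\ref{thm:ssclb} and~\ref{thm:sklb}) that the algorithm outputs a partition together with its objective value: on input $f_5$ the reported value must be $\alpha$ (no partition yields anything else), while on input $f_6$ the $\gamma$-approximation guarantee forces the reported value to be at most $\gamma\beta < \alpha$. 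The two output values differ, so the algorithm distinguishes $f_5$ from $f_6$, contradicting Lemma~\ref{lem:disting2}.

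The main obstacle is largely cosmetic: arranging for $\alpha$, $\beta$, and $m$ to be integers for a choice of $x$ satisfying $x^2 = \omega(\ln n)$. This is handled exactly as in the earlier lower bound proofs by absorbing a small constant-factor slack into $\delta$. A secondary (minor) point is that if one prefers an algorithm that outputs only a partition, one can substitute the above value-based argument with the observation that, with high probability over $\{V_j\}$, any partition $P$ independent of $\{V_j\}$ has some block $P_i$ of size at least $\alpha$ whose intersection with each $V_j$ is at most $\beta$ (by a Chernoff bound as in Lemma~\ref{lem:disting3}), forcing $\max_i f_6(P_i) = \alpha > \gamma\beta$; combined with Lemma~\ref{lem:alg}, this again yields the contradiction.
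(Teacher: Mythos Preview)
Your proof is correct and follows essentially the same approach as the paper: both compute that every partition has value $\alpha$ under $f_5$ while $\{V_i\}$ has value $\beta$ under $f_6$, set $x=\sqrt{n}/(\delta\gamma)$ so that $\gamma\beta=\alpha/\delta<\alpha$, and invoke Lemma~\ref{lem:disting2} to obtain the contradiction. Your additional remark about handling algorithms that output only a partition (not a value) is a nice supplementary observation not present in the paper.
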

\begin{proof}
Suppose that there is a $\gamma$-approximation algorithm for SLB, where 
$\gamma = o\left(\sqrt{\frac{n}{\ln n}}\right)$. Let $x=\sqrt{n}/\delta\gamma$, where $\delta>1$ is such that $\alpha$ and $\beta$ are integer. This satisfies $x^2= \omega(\ln n)$. Now consider running the algorithm with the 
input function $f_6$ and size of partition $m$.  For this input, partition  
$\{V_i\}$ constitutes the optimal solution whose value is 
$f_6(V_i) = \beta$, so the algorithm returns a solution whose value is at 
most $\gamma\beta = \alpha/\delta$. 
However, for the input $f_5$ and $m$, any partition must 
contain a set $S$ with size  $|S|\geq n/m =\alpha$ (since this is the average 
size). For this set, the function value is $f_5(S)=\alpha>\alpha/\delta$. This 
means that for $f_5$ the algorithm produces a different answer than for $f_6$, 
which contradicts Lemma \ref{lem:disting2}. 
\end{proof}

\subsection{Algorithms for SLB}
We note that the technique of Svitkina and Tardos 
\cite{svitkina:tardos:multiway-cuts}  used for min-max multiway cut can be 
applied to the \emph{non-uniform} SLB problem to   
obtain an $O(\sqrt{n \log n})$ approximation algorithm, using the 
approximation algorithm for the SML problem presented in Section \ref{sec:sk} 
as a subroutine. Also, an $O(\sqrt{n} \log n)$ approximation for the non-uniform SLB appears in \cite{goemans:learning}.

In this section we present two algorithms, with improved 
approximation ratios, for the \emph{uniform} SLB problem. We begin by 
presenting a very simple  algorithm that gives a 
$\min(m, \left\lceil{\frac{n}{m}}\right\rceil) = O(\sqrt{n})$ approximation.  
Then we give a more complex algorithm that improves the approximation ratio to 
$O\left(\sqrt{\frac{n}{\ln n}}\right)$, thus matching the lower bound. Our 
first algorithm simply partitions the elements into $m$ sets of roughly equal 
size. 

\begin{theorem}
\label{thm:simplelb}
The algorithm that partitions the elements into $m$ arbitrary sets of size at most 
$\left\lceil{\frac{n}{m}}\right\rceil$ each is a 
$\min(m, \left\lceil{\frac{n}{m}}\right\rceil)$ approximation for the 
SLB problem.  
\end{theorem}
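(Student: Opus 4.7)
The plan is to prove each of the two bounds $m$ and $\lceil n/m\rceil$ separately against the optimal value $\mathrm{OPT}$ of the instance, and then take the minimum. Let $V_1^*,\dots,V_m^*$ denote an optimal partition, so $f(V_j^*)\le \mathrm{OPT}$ for every $j$, and let $V_1,\dots,V_m$ be the partition produced by the algorithm, where $|V_i|\le \lceil n/m\rceil$ for each $i$.

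First I would record the two structural facts about $f$ that drive everything: monotonicity gives $f(S)\le f(T)$ whenever $S\subseteq T$, and submodularity combined with nonnegativity of $f(\emptyset)$ gives subadditivity on disjoint sets, i.e.\ $f(A\cup B)\le f(A)+f(B)$ for disjoint $A,B$. Iterating the disjoint subadditivity yields $f\bigl(\bigcup_{k} A_k\bigr)\le \sum_{k} f(A_k)$ for any partition $\{A_k\}$.

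For the bound $\max_i f(V_i)\le m\cdot \mathrm{OPT}$, I would combine monotonicity with subadditivity applied to the optimal partition: for every $i$,
\[
f(V_i)\;\le\; f(V)\;=\;f\!\left(\textstyle\bigcup_{j} V_j^*\right)\;\le\;\sum_{j=1}^{m} f(V_j^*)\;\le\;m\cdot \mathrm{OPT}.
\]
For the bound $\max_i f(V_i)\le \lceil n/m\rceil\cdot \mathrm{OPT}$, I would first note that every singleton $\{v\}$ lies inside some $V_j^*$, so by monotonicity $f(\{v\})\le f(V_j^*)\le \mathrm{OPT}$. Then subadditivity on the disjoint singletons making up $V_i$ gives
\[
f(V_i)\;\le\;\sum_{v\in V_i} f(\{v\})\;\le\;|V_i|\cdot \mathrm{OPT}\;\le\;\lceil n/m\rceil\cdot \mathrm{OPT}.
\]
Taking the better of the two bounds yields the stated approximation ratio $\min(m,\lceil n/m\rceil)$, which is $O(\sqrt{n})$ since the minimum of $m$ and $\lceil n/m\rceil$ is maximized near $m\approx\sqrt{n}$.

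I do not expect any real obstacle here; the only point that needs a line of care is justifying subadditivity on disjoint sets from the submodular inequality $f(A)+f(B)\ge f(A\cup B)+f(A\cap B)$ together with $f(\emptyset)\ge 0$ (which follows from the global nonnegativity assumption on $f$), after which a straightforward induction extends it to an arbitrary partition.
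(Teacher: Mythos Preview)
Your proof is correct and essentially identical to the paper's: both use monotonicity to get $f(V_i)\le f(V)$ and subadditivity on the optimal partition to get $f(V)\le m\cdot\mathrm{OPT}$, and both use subadditivity on singletons together with $f(\{v\})\le\mathrm{OPT}$ (via monotonicity) to get the $\lceil n/m\rceil$ bound. The paper phrases it as two lower bounds on $\mathrm{OPT}$ ($\max_v f(\{v\})$ and $f(V)/m$) matched against two upper bounds on the algorithm's value, but the content is the same.
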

\begin{proof}
Let $\{U_1^*,...,U_m^*\}$ denote the optimal solution with value $B$, and let 
$A$ be the value of  the solution $\{S_1,...,S_m\}$ found by the algorithm.  
We exhibit two lower bounds on $ B$ and two upper bounds on $A$, and then 
establish the approximation ratio by comparing these bounds. 
For the lower bounds on $ B$, we claim that $B \geq \max_{j\in V} f(\{j\})$ 
and $B\geq f(V)/m$. For the first one, let $j$ be the element maximizing 
$f(\{j\})$, and let $U_i^*$ be the set in the optimal solution that contains 
$j$. Then $B \geq f(U_i^*)\geq f(\{j\})$ by monotonicity. For the second 
bound, by submodularity we have that $f(V)\leq \sum_i f(U_i^*) \leq m B$.
To bound $A$, we notice that $A\leq f(V)$ (by monotonicity), and that 
$A \leq  \left\lceil{\frac{n}{m}}\right\rceil \max_{j\in V} f(\{j\})$, since 
each set $S_i$ contains at most  
$\left\lceil{\frac{n}{m}}\right\rceil$ elements, and 
$f(S_i)\leq \sum_{j\in S_i} f(\{j\})$. Comparing with the lower bounds on $B$, 
we get the result.
\end{proof}

\medskip 

\begin{algorithm}[ht] 
  \caption{~~Submodular load balancing. 
~Input: $V$, $m>\sqrt{\frac{n}{\ln n}}$, monotone $f$, $B$, $p$}\label{alg:slb} 
\begin{algorithmic}[1]
\State {\bf if} for any $v\in V$, ~$f(\{v\})\geq B$,~ 
{\bf  return} \emph{fail}  \label{line:check}
\State Let $\alpha={Bm}/{\sqrt{n\ln n}}$; ~Initialize $V'=V$, ~$i=0$
\While{$|V'| > m\sqrt{\frac{n}{\ln n}}$}
\State Choose a random $S\subseteq V'$, including each element independently 
with probability  $\frac{n}{m|V'|}$ \label{line:prob}
\If{$|S|\leq 2\frac{n}{m}$}
\State Let $T\subseteq S$ be a subset minimizing $f(T)-\alpha\cdot |T|$ \label{step:slbmin}
\State {\bf if}~{$f(T)-\alpha\cdot |T| < 0$ } {\bf then} 
set $T_i = T$;~~$i=i+1$;~~$V' = V'\setminus T$
\EndIf
\State {\bf if} the number of iterations exceeds 
$\frac{2n^3}{c} \ln(\frac{n}{1-p})$,~ \Return \emph{fail}
\EndWhile
\State Let ${\mathcal T}$ be the collection of sets $T_i$ produced by the 
while loop
\State Partition ${\mathcal T}$ into $m$ groups 
${\mathcal T}_1,..., {\mathcal T}_m$, such that 
$\sum_{i:T_i\in {\mathcal T}_j}|T_i| \leq 3\frac{n}{m}$ for each 
${\mathcal T}_j$ \label{step:slbp2}
\State Let $U_1,...,U_m$ be any partition of $V'$ with each set of size at 
most $\sqrt{\frac{n}{\ln n}}$ \label{step:slbp1}
\State For each $j\in \{1,...,m\}$, 
let $V_j = U_j \cup \bigcup_{T_i\in{\mathcal T}_j}T_i$
\State \Return $\{V_1, ..., V_m\}$
\end{algorithmic}
\end{algorithm}

For the more complex Algorithm \ref{alg:slb}, 
we assume that $m > 2\sqrt{\frac{n}{\ln n}}$, because for 
lower values of $m$ the above simple algorithm gives the desired 
approximation. Also, the simple algorithm has better guarantee for 
all $n\leq e^{16}$, so when analyzing Algorithm  \ref{alg:slb}, we can assume 
that $n$ is sufficiently large for certain inequalities to hold, such as 
$\ln^{3} n < {n}$.  
The algorithm finds small disjoint sets of elements that have 
low ratio of  function value to size. Once a 
sufficient number of elements is grouped into such low-ratio sets, these 
sets are combined to form $m$ final sets of the partition, while adding a few 
remaining elements. These final sets have roughly $n/m$ elements each, so 
using submodularity and the low ratio property, we can bound the function 
value for each set in the partition.   

First we describe how some of the steps of algorithm work. The loop 
condition $|V'|>m\sqrt{\frac{n}{\ln n}}$ and our assumptions 
 $m > 2\sqrt{\frac{n}{\ln n}}$ and $\ln^{3} n < {n}$ imply that the probability 
$\frac{n}{m|V'|}$ (used on line \ref{line:prob}) is less than one.
The partition on line \ref{step:slbp1} can 
be found because at this point, the size of $V'$ is at most 
$m\sqrt{\frac{n}{\ln n}}$. For the partitioning done on line 
\ref{step:slbp2}, we note that since each $T_i$ is a subset of a sample set 
$S$ with $|S|\leq 2n/m$, it holds that $|T_i|\leq 2n/m$. Also, the total 
number of elements contained in all sets $T_i$ is at most $n$ (since they are 
disjoint). So  a simple greedy procedure that adds the sets $T_i$ to 
${\mathcal T}_j$ in arbitrary order, until the total number of elements is at 
least $n/m$, will produce at most $m$ groups, each with at most $3n/m$ 
elements. 

\begin{theorem}
\label{thm:slbub}
If given a feasible instance of the SLB problem, Algorithm \ref{alg:slb} 
outputs a solution of value at most $4\sqrt{\frac{n}{\ln n}} \cdot B$ with 
probability at least $p$. 
\end{theorem}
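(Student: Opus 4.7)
The plan is to show two things in sequence: with probability at least $p$ the while loop exits normally (without returning \emph{fail}), and then the final partition $\{V_j\}$ satisfies $\max_j f(V_j) < 4B\sqrt{n/\ln n}$. Fix throughout an optimal partition $\{U_1^*,\dots,U_m^*\}$ with $\max_i f(U_i^*) < B$. First I would note that the sanity check on line~\ref{line:check} passes: every $v \in V$ lies in some $U_i^*$, so monotonicity gives $f(\{v\}) \leq f(U_i^*) < B$.

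The main technical step is to lower-bound, in each iteration, the probability that a new non-empty $T_i$ is added to $\mathcal{T}$. By pigeonhole on $\{U_i^* \cap V'\}_{i=1}^m$, some $U^*$ has $m' := |U^* \cap V'| \geq |V'|/m > \sqrt{n/\ln n}$, using the loop condition. I would argue that with inverse-polynomial probability, both $|U^* \cap S| \geq \sqrt{n\ln n}/m$ \emph{and} $|S| \leq 2n/m$ hold. The first event follows from Theorem~\ref{thm:gen} applied with $q = n/(m|V'|)$ and $\varepsilon$ chosen so that $qm'(1+\varepsilon) = \lceil \sqrt{n\ln n}/m\rceil$; a direct computation shows $\varepsilon^2 qm' \leq \ln n$, so the exponential factor is at least $n^{-O(1)}$. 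The second event I would handle by decomposing $|S| = |U^* \cap S| + |(V' \setminus U^*) \cap S|$, noting these two summands are independent, and applying Markov to the second (whose mean is at most $n/m$) to retain constant probability even after conditioning on the first. When both events occur, the minimizer $T$ on line~\ref{step:slbmin} satisfies
$$f(T) - \alpha |T| \;\leq\; f(U^* \cap S) - \alpha |U^* \cap S| \;\leq\; f(U^*) - \alpha \cdot \tfrac{\sqrt{n\ln n}}{m} \;<\; B - B \;=\; 0,$$
using monotonicity, strict feasibility, and the choice $\alpha = Bm/\sqrt{n\ln n}$, so $T$ is non-empty and gets added, strictly decreasing $|V'|$. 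Since at most $n$ such successes suffice for the loop to exit, a Chernoff-type tail bound over the $\frac{2n^3}{c}\ln(n/(1-p))$ independent trials drives the total failure probability below $1-p$.

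For the approximation guarantee, I would analyze each $V_j = U_j \cup \bigcup_{T_i \in \mathcal{T}_j} T_i$ separately. Subadditivity (from nonnegativity and submodularity) gives $f(V_j) \leq f(U_j) + \sum_{T_i \in \mathcal{T}_j} f(T_i)$. The first summand is at most $\sum_{v \in U_j} f(\{v\}) < \sqrt{n/\ln n} \cdot B$, using $|U_j| \leq \sqrt{n/\ln n}$ (line~\ref{step:slbp1}) and the singleton bound established above. The second is at most $\alpha \sum_{T_i \in \mathcal{T}_j} |T_i| \leq \alpha \cdot 3n/m = 3B\sqrt{n/\ln n}$, from the while-loop guarantee $f(T_i) < \alpha |T_i|$ and the size bound on each group $\mathcal{T}_j$ (line~\ref{step:slbp2}). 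Adding them gives $f(V_j) < 4B\sqrt{n/\ln n}$. The hardest step I anticipate is the per-iteration probability analysis: carefully tuning the parameters of Theorem~\ref{thm:gen} so that $\varepsilon^2 qm'$ is controlled by $\ln n$, and then cleanly combining the two sample-size events via independence, is what makes the allotted budget of $\frac{2n^3}{c}\ln(n/(1-p))$ iterations actually sufficient.
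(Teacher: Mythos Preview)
Your overall structure is right and matches the paper exactly: the singleton check, the pigeonhole step picking a $U^*$ with $|U^*\cap V'|\geq |V'|/m$, the observation that if $|U^*\cap S|\geq\sqrt{n\ln n}/m$ and $|S|\leq 2n/m$ then the minimizer $T$ on line~\ref{step:slbmin} is accepted, and the final subadditivity bound $f(V_j)< B\sqrt{n/\ln n}+3\alpha n/m = 4B\sqrt{n/\ln n}$ are all identical to what the paper does.

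There is, however, a genuine gap in your per-iteration probability analysis. First, you apply Theorem~\ref{thm:gen} with $m'=|U^*\cap V'|$, but you only have a lower bound on this quantity; if $m'$ is large enough that $qm'>\sqrt{n\ln n}/m$ then your $\varepsilon$ is negative and the theorem does not apply. The paper sidesteps this by restricting to an arbitrary subset of $U^*\cap V'$ of size exactly $\lceil |V'|/m\rceil$. Second, and more seriously, the paper does \emph{not} invoke Theorem~\ref{thm:gen} here: its stated prefactor $cq\cdot m'^{-3/2}$ is too weak for this regime. With $m'=n'/m$ and $q=n/(mn')$ one indeed gets $\varepsilon^2 qm'=\ln n\cdot(1-1/z)^2\leq\ln n$ where $z=m\sqrt{\ln n/n}$, so your claim about the exponential factor is correct; but the prefactor is only $c\,n\sqrt{m}/n'^{5/2}$, and optimizing the product over $m$ (worst case near $m\approx 4\sqrt{n\ln n}$, $n'=n$) yields a bound of order $n^{-9/4}$ rather than $c/(2n^2)$. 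That is not enough to make the budget of $\tfrac{2n^3}{c}\ln(n/(1-p))$ iterations work. The paper instead carries out a direct Stirling computation for $\Pr[t=\lceil x\rceil]$ (using the sharper $c/\sqrt{m'}$ that appears inside the proof of Theorem~\ref{thm:gen} rather than the relaxed statement), then minimizes the resulting expression over $m$ analytically to obtain $\geq c/n^2$. For the second event the paper uses a Chernoff bound $\Pr[|S|>2n/m]<(e/4)^{n/m}$ and combines with the first by a union bound; your Markov-plus-independence decomposition would also work for that part.
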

\begin{proof}
By monotonicity of $f$, the algorithm exits on line \ref{line:check} only if 
the instance is infeasible. Assume that the instance is feasible and 
let $\{U_1^*,\dots,U_m^*\}$ denote a solution with 
$\max_j f(U_j^*) < B$.
We consider one iteration of the while loop and show that with 
probability at least $\frac{c}{2n^2}$ it finds a set $T\subseteq S$ satisfying 
$f(T)-\alpha\cdot |T| < 0$. Then the probability that the size of $V'$ is 
reduced to $m\sqrt{\frac{n}{\ln n}}$ after 
$\frac{2n^3}{c} \ln(\frac{n}{1-p})$ iterations is at least $p$.

Assume, without loss of generality, that $U_1^*$ is the set that maximizes 
$|U^*_j \cap V'|$ for this iteration of the loop. If we let $n'=|V'|$, then 
$|U_1^* \cap V'|\geq \lceil n'/m \rceil$. 
Suppose  the sample $S$ found by the algorithm has 
size at most $2n/m$, and let $t=|U^*_1 \cap S|$ denote the size of the overlap 
of $S$ and $U_1^*$.  By monotonicity of $f$, we know that 
$f(U^*_1 \cap S)\leq f(U^*_1) < B$.  Since the 
algorithm finds a set $T\subseteq S$ minimizing the expression 
$f(T)-\alpha |T|$, we know that the value of this expression for $T$ is 
at most that for $U^*_1 \cap S$:
$$f(T)-\alpha |T| ~\leq ~ 
f(U^*_1 \cap S) - \alpha |U^*_1 \cap S|~<~ 
B - \frac{Bmt}{\sqrt{n \ln n}}.$$
In order to have $f(T)-\alpha |T| <0$, we need 
$t\geq \frac{\sqrt{n \ln n}}{m}$.  Next we show that the event that both 
$t\geq \frac{\sqrt{n \ln n}}{m}$ and $|S|\leq 2n/m$ happens with probability 
at least $\frac{c}{2n^2}$.

Let $x=\frac{\sqrt{n \ln n}}{m}$. 
To bound the probability that $t\geq x$, we focus on an arbitrary fixed subset of 
$U^*_1 \cap V'$ of size $\lceil n'/m \rceil$ (which is possible because 
$|U_1^* \cap V'|\geq \lceil n'/m \rceil$), and compute the probability that exactly $\lceil x \rceil$ elements from this subset make it into the sample $S$. In particular, this is the probability that sampling $\lceil n'/m \rceil$ items independently, with probability ${n}/{mn'}$  each, produces a sample of 
size $\lceil x \rceil$. We note that $x \in (1,n'/m)$, so $\lceil x \rceil$ is a valid sample size. These bounds follow because  inside the while loop, 
$m<n' \sqrt{{\ln n}/{n}}\leq \sqrt{n \ln n}$, so $x>1$. Also, 
$n'/m> \sqrt{{n}/{\ln n}}>{\ln n}> \sqrt{n\ln n}/m$ by the loop condition 
and our assumptions on $n$ and $m$, so $x< n'/m$. 
Let $\gamma,\delta \in [1,2)$ be such that $\gamma \cdot n'/m = \lceil n'/m \rceil$ and $\delta \cdot x = \lceil x \rceil$. 
We use an approximation derived from Stirling's formula as in the proof of Theorem \ref{thm:gen}. 

\begin{eqnarray}
\nonumber
\Pr[t=\lceil x \rceil] &\geq& {\gamma n'/m \choose \delta x} \cdot 
\left(\frac{n}{mn'}\right)^{\delta x} \cdot 
\left(1-\frac{n}{mn'}\right)^{\frac{\gamma n'}{m}-\delta x} \\
\nonumber
&\geq& \frac{c}{\sqrt{n}} \cdot 
\frac{\left(\frac{\gamma n'}{m}\right)^{\frac{\gamma n'}{m}} \cdot 
\left(\frac{n}{mn'}\right)^{\delta x} \cdot 
\left(1-\frac{n}{mn'}\right)^{\frac{\gamma n'}{m}-\delta x}}
{\left(\frac{\gamma n'}{m} \frac{\delta \sqrt{n \ln n}}{\gamma n'}\right)^{\delta x} \cdot 
\left(\frac{\gamma n'}{m} \left(1- \frac{\delta \sqrt{n \ln n}}{\gamma n'}\right)\right)^{\frac{\gamma n'}{m}-\delta x}}\\
\label{eqn:sl}
&=&  \frac{c}{\sqrt{n}} \cdot 
\left(\frac{n}{mn'}\frac{\gamma n'}{\delta \sqrt{n \ln n}}\right)^{\delta x} 
\left(\frac{1-\frac{n}{mn'}}{1-\frac{\delta \sqrt{n \ln n}}{\gamma n'}}\right)^{\frac{\gamma n'}{m}-\delta x}\\
\nonumber
&\geq&  \frac{c}{\sqrt{n}} \cdot 
\left(\frac{\gamma}{\delta m} \sqrt{\frac{n}{\ln n}}\right)^{\frac{\delta \sqrt{n \ln n}}{m}},
\end{eqnarray}
where the last inequality comes from observing that our assumption of 
$m> 2\sqrt{\frac{n}{\ln n}}$, together with $\gamma/\delta < 2$, imply that the last term on line (\ref{eqn:sl}) is greater than  1.

If we take a derivative of this bound with respect to $m$, which is 
$$\frac{\partial}{\partial m} \left(
 \frac{c}{\sqrt{n}} \cdot 
\left(\frac{\gamma}{\delta m} \sqrt{\frac{n}{\ln n}}\right)^{\frac{\delta \sqrt{n \ln n}}{m}} \right) =~ 
- \frac{c\,\delta \sqrt{\ln n}}{m^2} \cdot 
\left(\frac{\gamma}{\delta m} \sqrt{\frac{n}{\ln n}}\right) 
^{\frac{\delta \sqrt{n \ln n}}{m}} \cdot \left[\ln \left(\frac{\gamma}{\delta m} \sqrt{\frac{n}{\ln n}}\right)+1\right],
$$
and set it to zero, we find that the bound is minimized when 
$m=\frac{e\, \gamma}{\delta} \sqrt{\frac{n}{\ln n}}$. Substituting this value, 
$$\Pr[t=\lceil x\rceil] ~\geq~ c \cdot n^{-\frac{\delta^2}{e\,\gamma}-\frac{1}{2}} ~\geq~ 
c\cdot n^{-\frac{4}{e}-\frac{1}{2}} ~\geq~ c\cdot n^{-2}.$$

To bound the second probability, that $|S|\leq 2n/m$, we note that 
${\rm E}\left[|S|\right] = n/m$ and use 
Chernoff bound as well as the loop condition that implies
$m<n'\sqrt{\frac{\ln n}{n}}\leq \sqrt{n \ln n}$.  
$$\Pr\left[|S|>2\frac{n}{m}\right] ~<~ \left(\frac{e}{4}\right)^{\frac{n}{m}} ~\leq ~  
\left(\frac{e}{4}\right)^{\sqrt{\frac{n}{\ln n}}}$$
If  $n$ is sufficiently large that
$\left(\frac{e}{4}\right)^{\sqrt{\frac{n}{\ln n}}}\leq \frac{c}{2n^2}$, we can 
use the union bound to get
$$\Pr\left[t\geq x ~~{\rm and} ~~ |S|\leq 2\frac{n}{m}\right] ~\geq~ 
\frac{c}{n^2} - \frac{c}{2n^2} ~=~ \frac{c}{2n^2}.$$

This establishes that on feasible instances, the algorithm successfully 
terminates with probability at least $p$. Let us now consider the function 
value on any of the sets $V_j$ output by the algorithm.  By submodularity, 
$$f(V_j) ~\leq~ \sum_{v\in U_j}f(\{v\}) + \sum_{T_i\in {\mathcal T}_j} f(T_i).$$
For each $T_i$ we know that $f(T_i) < \alpha\cdot{|T_i|}$, and by the check 
performed on line \ref{line:check}, we have   
$f(\{v\})< B$ for each $v\in V$. Using this and the bounds on set sizes, 
$$f(V_j) ~\leq ~ 
 B \sqrt{\frac{n}{\ln n}} + \alpha \sum_{T_i\in {\mathcal T}_j} |T_i| ~\leq~
B \sqrt{\frac{n}{\ln n}} + \alpha \cdot \frac{3n}{m} =
B\cdot \left(\sqrt{\frac{n}{\ln n}}+\frac{m}{\sqrt{n \ln n}}\frac{3n}{m}\right) =
4\sqrt{\frac{n}{\ln n}} \cdot B. $$ 
\end{proof}


\section{Approximating submodular functions everywhere}

We present a lower bound for the problem of approximating submodular functions everywhere, 
which holds even for the special case of monotone functions. We use the same  
functions (\ref{eq:learnlb}) as for the SML lower bound in Section \ref{sec:sk}.

\begin{theorem}
\label{thm:learnlb}
Any algorithm that makes a polynomial number of oracle queries cannot 
approximate monotone submodular functions to a factor 
$o\left(\sqrt{\frac{n}{\ln n}}\right)$.
\end{theorem}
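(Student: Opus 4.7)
The plan is to reduce directly to Lemma~\ref{lem:disting3} using the same monotone functions $f_3$ and $f_4$ from (\ref{eq:learnlb}), exploiting that $f_3$ and $f_4$ exhibit a large multiplicative gap at the hidden random set $R$. The key observation is that $f_3(R) = \min(|R|, \alpha) = \alpha$, while $f_4(R) = \min(\beta + |R \cap \bar R|, |R|, \alpha) = \beta$, so the ratio $f_3(R)/f_4(R) = \alpha/\beta = \sqrt{n}/x$ can be tuned to exceed any $\gamma = o(\sqrt{n/\ln n})$ simply by choosing $x$ small enough.

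First I would fix the parameter. Suppose for contradiction that some polynomial-query (randomized) algorithm outputs a function $\hat f$ satisfying $\hat f(S) \leq f(S) \leq \gamma\,\hat f(S)$ for every $S$ with high probability, with $\gamma = o(\sqrt{n/\ln n})$. Choose $x = \sqrt{n}/(\delta\gamma)$ where $\delta > 1$ is a constant large enough that $\alpha = x\sqrt{n}/5$ and $\beta = x^2/5$ are integers. Then $x^2 = n/(\delta\gamma)^2 = \omega(\ln n)$, so the hypothesis of Lemma~\ref{lem:disting3} is satisfied and the random choice of $R$ is legitimate, and moreover $\alpha/\beta = \delta\gamma$.

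Next I would combine the approximation guarantee with the indistinguishability. On input $f_3$, the algorithm's output must with high probability satisfy $\hat f(R) \in [\alpha/\gamma,\ \alpha]$; on input $f_4$, it must with high probability satisfy $\hat f(R) \in [\beta/\gamma,\ \beta]$. A direct calculation gives $\alpha/\gamma = n/(5\delta\gamma^2) = \delta\beta > \beta$, so the two intervals are disjoint. Hence, with high probability, reading off the value $\hat f(R)$ from the algorithm's output distinguishes the two inputs, contradicting Lemma~\ref{lem:disting3}.

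The main point to get right is what ``distinguishing'' means for an algorithm whose output is a function $\hat f$: I would adopt the same convention used for the earlier lower bounds in the paper, namely that the output representation permits evaluation at any set one cares to inspect (this is the natural requirement for a routine that claims to approximate $f$ everywhere). Given this convention, producing values at $R$ in disjoint intervals counts as producing different outputs, and the contradiction with Lemma~\ref{lem:disting3} is immediate. There is no hard technical obstacle here; the proof is essentially a parameter-matching exercise on top of the indistinguishability lemma already established.
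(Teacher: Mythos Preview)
Your proposal is correct and matches the paper's proof essentially line for line: the same choice $x=\sqrt{n}/(\delta\gamma)$, the same invocation of Lemma~\ref{lem:disting3}, and the same contradiction at the set $R$ via $f_3(R)=\alpha$ versus $f_4(R)=\beta$ with $\alpha/\beta=\delta\gamma>\gamma$. The only cosmetic difference is that the paper phrases the contradiction as a single inequality chain $f_3(R)\leq\gamma\hat f(R)\leq\gamma f_4(R)$ (hence $\gamma\geq\alpha/\beta$), whereas you frame it as $\hat f(R)$ landing in disjoint intervals; these are equivalent, and your discussion of what it means to ``read off'' $\hat f(R)$ is not needed if you adopt the paper's phrasing that a single output $\hat f$ cannot simultaneously approximate both functions at $R$.
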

\begin{proof}
Suppose that there is a  $\gamma$-approximation algorithm 
for the problem, with $\gamma= o\left(\sqrt{\frac{n}{\ln n}}\right)$, 
which makes a polynomial number of oracle queries. 
Let $x=\sqrt{n}/\delta\gamma$, which satisfies $x^2= \omega(\ln n)$.
By Lemma \ref{lem:disting3}, with high probability this algorithm  produces 
the same output (say $\hat{f}$) if given as input either $f_3$ or $f_4$. Thus, by the algorithm's guarantee,  $\hat{f}$
 is simultaneously a $\gamma$-approximation for both $f_3$ and $f_4$.  
For the set $R$ used in $f_4$, this guarantee implies that 
$f_3(R)\leq \gamma\hat{f}(R) \leq \gamma f_4(R)$. Since $f_3(R)=\alpha$ and 
$f_4(R)=\beta$, we have that 
$\gamma \geq {\alpha}/{\beta} = {\sqrt{n}}/{x} = 2\gamma$, which is a 
contradiction. 
\end{proof}

\subsection{Approximating monotone two-partition submodular functions}
Recall that a 2P function is one for which there is a set $R\subseteq V$ such 
that the value of $f(S)$ depends only on 
$|S\cap R|$ and $|S\cap \bar{R}|$.  Our algorithm for approximating monotone 2P 
functions everywhere (Algorithm~\ref{alg:learn}) uses the following observation.

\begin{lemma}\label{lem:diffn}
Given two sets  $S$ and $T$ such that $|S|=|T|$, but $f(S)\neq f(T)$, a 2P 
function can be found exactly using a polynomial  number of oracle queries.
\end{lemma}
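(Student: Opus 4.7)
The plan has three phases: (i) locate a single-element swap that changes $f$, (ii) use it as a reference pair to classify every element of $V$ as belonging to $R$ or $\bar{R}$, and (iii) tabulate $f$ on one representative of each equivalence class.

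For phase (i), enumerate $A := S \setminus T = \{a_1, \ldots, a_m\}$ and $B := T \setminus S = \{b_1, \ldots, b_m\}$ (equal size since $|S|=|T|$, and $m \geq 1$ since $f(S) \neq f(T)$), and build the chain $S_0 = S$, $S_j = (S_{j-1} \setminus \{a_j\}) \cup \{b_j\}$, which terminates at $S_m = T$. Because $f(S_0) \neq f(S_m)$, some consecutive index $i$ satisfies $f(S_i) \neq f(S_{i+1})$; a binary search on the chain finds such an $i$ in $O(\log n)$ queries. Let $a := a_{i+1}$ and $b := b_{i+1}$. Since $f$ is 2P and the swap $a \leftrightarrow b$ changes $f$, exactly one of $a, b$ lies in $R$; fix the labeling by declaring $a \in R$ and $b \in \bar{R}$ (the complementary choice just renames $R$ and $\bar{R}$).

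For phase (ii), for each $v \in V \setminus \{a, b\}$ form a test set $Y_v$ of cardinality $|S_i|$ and query $f(Y_v)$; by the 2P property the answer will equal either $f(S_i)$ or $f(S_{i+1})$, and since these are distinct, this reveals which part $v$ lies in. If $v \notin S_i$, set $Y_v := (S_i \setminus \{a\}) \cup \{v\}$: the $(R,\bar{R})$-decomposition of $Y_v$ matches that of $S_i$ when $v \in R$ and matches that of $S_{i+1}$ when $v \in \bar{R}$. If $v \in S_i$ (so $v \neq a$), set $Y_v := (S_i \setminus \{v\}) \cup \{b\}$: symmetrically, the decomposition matches $S_i$ when $v \in \bar{R}$ and $S_{i+1}$ when $v \in R$. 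Thus one query per element of $V$ suffices.

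For phase (iii), with $R$ now determined, the 2P property tells us that $f(X)$ depends only on $(|X \cap R|, |X \cap \bar{R}|) \in \{0,\ldots,|R|\} \times \{0,\ldots,|\bar{R}|\}$, so querying one representative per pair completely tabulates $f$ in at most $(n+1)^2$ additional queries. The total cost is $O(n^2)$ oracle calls, which is polynomial. The delicate step is phase (ii): one must verify for each of the two sub-cases ($v$ inside or outside $S_i$) that $Y_v$ really has the same cardinality as $S_i$ and that its $(R,\bar{R})$-decomposition lies in $\{(|S_i \cap R|, |S_i \cap \bar{R}|),\, (|S_{i+1} \cap R|, |S_{i+1} \cap \bar{R}|)\}$, so that a single query unambiguously decides $v$'s membership; once the two constructions are set up correctly this is routine.
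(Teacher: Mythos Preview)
Your proof is correct and follows essentially the same approach as the paper's: build a swap chain from $S$ to $T$, locate a single-element exchange $a\leftrightarrow b$ that changes the function value, and then use that reference pair to classify every element via a one-query test (your sets $Y_v$ are exactly the paper's $U\cup\{j\}$ and $U-\{u\}+\{s_i,t_i\}$ after unwinding definitions). The only difference is that you use binary search rather than a linear scan to locate the critical index, which is a minor efficiency gain but does not change the argument.
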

\begin{proof}
 This is done by inferring what the set $R$ is. Using $S$ and $T$, we 
find two sets which differ by exactly one element and have different 
function values. Fix an ordering of the elements of $S$, 
$\{s_1, ..., s_k\}$, and an ordering of elements of $T$, $\{t_1, ..., t_k\}$, 
such that the elements of $S\cap T$ appear last in both orderings, 
and in the same sequence.  
Let $S_0 = S$, and $S_i$ be the set $S$ with 
the first $i$ elements replaced by the first $i$ elements of $T$: 
$S_i = \{t_1,  ..., t_i,  s_{i+1}, ..., s_k\}$. Evaluate $f$ on each of 
the sets $S_i$ in order, until the first time that $f(S_{i-1})\neq f(S_{i})$. 
Such an $i$ must exist since $S_{k}=T$, and by assumption $f(T)\neq f(S)$.  
Let $U=\{t_1, ..., t_{i-1}, s_{i+1}, ..., s_{k}\}$, so that 
$S_{i-1}=U \cup \{s_i\}$ and $S_i = U \cup \{t_i\}$. 

The fact that 
$f(U\cup \{s_i\})\neq f(U\cup \{t_i\})$ tells us that either $s_i\in R$ and 
$t_i\notin R$, or  
vice versa.  Without loss of generality, we assume the former 
(since  the names of $R$ and $\bar{R}$ can be interchanged). 
Now all elements in $V\setminus U$ can be classified as belonging or not 
belonging to $R$.  In particular, if for some element $j\in \bar{U}$, 
$f(U\cup \{j\}) = f(U\cup \{s_i\})$, then $j\in R$; otherwise  
$f(U\cup \{j\}) = f(U\cup \{t_i\})$, and $j\notin R$. To test an element 
$u\in U$, evaluate $f(U - \{u\} + \{s_i, t_i\})$. This is the set $S_{i-1}$ 
with element $u$ replaced by $t_i$.  If $u\in \bar{R}$, then replacing one 
element from $\bar{R}$ by another will have no effect on the function value, 
and it will be equal to $f(S_{i-1})$. If $u\in R$, the we have replaced an 
element from $R$ by an element from $\bar{R}$, and we know that this changes 
the function value to $f(S_i)$. So all elements of $V$ can be tested for their 
membership in $R$, and then all function values can be obtained by querying 
sets $W$ with all possible values of $|W\cap R|$ and $|W\cap \bar{R}|$.
\end{proof}

\begin{algorithm}[ht] 
  \caption{Approximating a monotone 2P function everywhere. Input: $V,f,p$}\label{alg:learn}
\begin{algorithmic}[1]
\State
Query values of $f(\emptyset)$, $f(V)$, and $f(\{j\})$  for each $j\in V$
\State
For each $i\in \{2,..., n-1\}$, independently generate $n^{10} \ln \left(\frac{4n}{1-p}\right)$ random sets by including each element 
of $V$ into each set with probability $\frac{i}{n}$. Query the function value 
for each of these  sets.
\State
If the previous two steps produce any two sets $S_1$ and $S_2$ with $|S_1|=|S_2|$ and $f(S_1)\neq f(S_2)$, then find the function exactly, as described in Lemma \ref{lem:diffn}.
\State
Else, let $j\in V$ be an arbitrary element, and output 
$\hat{f}(S) ~=~ 
\begin{cases}
f(\emptyset) & \mbox{if} ~~ S = \emptyset \\
f(\{j\}) & \mbox{if} ~~ 1\leq |S| \leq 2\sqrt{n}\\
\frac{f(V)}{2\sqrt{n}} & \mbox{if} ~~ |S|> 2\sqrt{n}
\end{cases}
$
\end{algorithmic}
\end{algorithm}

\begin{theorem}
With probability at least $p$, the function $\hat{f}$ returned by  Algorithm 
\ref{alg:learn} satisfies \linebreak 
$\hat{f}(S) \leq f(S) \leq 2\sqrt{n} \cdot \hat{f}(S)$ for  
all sets $S\subseteq V$.
\end{theorem}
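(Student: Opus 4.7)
The plan is to split according to which branch of Algorithm~\ref{alg:learn} produces $\hat f$. If the exact-learning step is invoked, Lemma~\ref{lem:diffn} gives $\hat f = f$, and the bound holds trivially; the entire probabilistic budget is therefore spent on the complementary event, in which $\hat f$ is the piecewise function of the final step.

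In that event, all $n$ singletons (queried at the outset) share a common value $a := f(\{v\})$, for otherwise exact learning would already have fired. Monotonicity and subadditivity then give $a \le f(S) \le |S|\,a$ on every non-empty $S$, which immediately verifies $\hat f(S) \le f(S) \le 2\sqrt n\,\hat f(S)$ for $|S| \le 2\sqrt n$. For $|S| > 2\sqrt n$, monotonicity also gives $f(S) \le f(V) = 2\sqrt n\,\hat f(S)$; only the matching lower bound $f(S) \ge f(V)/(2\sqrt n)$ remains.

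I would establish this via the probabilistic claim that, with probability at least $p$, the piecewise branch fires only when $f$ depends on $S$ through $|S|$ alone. Writing $f(S) = g(|S \cap R|,\,|S \cap \bar R|)$ for the unknown 2P partition $(R,\bar R)$, suppose instead some size $k$ admits $r_1 \ne r_2$ with $g(r_1, k-r_1) \ne g(r_2, k-r_2)$. For an appropriate $i \in \{2,\dots,n-1\}$ (roughly $i \approx k$, tuned so that both $r_1$ and $r_2$ lie in the bulk of the joint Bernoulli distribution of $(|S \cap R|,|S \cap \bar R|)$ at rate $i/n$), a Stirling estimate parallel to Theorem~\ref{thm:gen} shows that a single sample realizes each configuration $(|S|,|S \cap R|) = (k,r_j)$ with probability at least $1/\mathrm{poly}(n)$. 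With $N = n^{10}\ln(4n/(1-p))$ samples per $i$, each configuration is hit except with exponentially small failure probability, and a union bound over the $O(n^3)$ candidate triples $(k;r_1,r_2)$ keeps total failure below $1-p$. Whenever both configurations are hit, the algorithm switches to exact learning, so on the piecewise branch we may assume $f(S) = h(|S|)$.

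With $f(S) = h(|S|)$, monotonicity makes $h$ non-decreasing, and submodularity applied to $A \subset A \cup \{w\} \subset A \cup \{w,v\}$ gives $h(k+1) - h(k) \le h(k) - h(k-1)$, so $h$ is concave. Since $h(0) = f(\emptyset) \ge 0$, concavity yields $h(k) \ge (k/n)\,h(n)$, and for $k > 2\sqrt n$ this gives $f(S) \ge (2/\sqrt n)\,f(V) \ge f(V)/(2\sqrt n)$, completing the proof. The main obstacle is the probabilistic claim: the joint Bernoulli distribution of $(|S \cap R|,|S \cap \bar R|)$ must be controlled uniformly over all possible $(R,\bar R)$ and all discrepancy-witnessing triples $(k;r_1,r_2)$. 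The delicate cases are extreme values of $|R|$ or $k$, but in those regimes the set of distinct $g(r,k-r)$ values at fixed $k$ collapses to $O(1)$, so only a handful of triples need attention and the generous sample budget $n^{10}\ln(4n/(1-p))$ suffices.
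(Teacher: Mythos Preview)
Your overall structure matches the paper for the easy cases ($S=\emptyset$ and $|S|\le 2\sqrt n$), but the central probabilistic claim --- that with probability at least $p$, reaching step~4 forces $f$ to depend on $S$ only through $|S|$ --- is false, and the concavity argument you build on it does not go through.

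The obstruction is that not every configuration $(r,k-r)$ is reachable by Bernoulli sampling with inverse-polynomial probability. At rate $q$, the pair $(|S\cap R|,\,|S\cap\bar R|)$ concentrates near $(qK,qL)$; conditioned on $|S|=k$, the probability of $|S\cap R|=r$ is the hypergeometric weight $\binom{K}{r}\binom{L}{k-r}/\binom{n}{k}$, which is exponentially small once $r$ is far from $kK/n$, and no choice of $i$ repairs this. Concretely, take $|R|=K$ and set $g(r,l)=\min(r+l,K)$ for $l\ge 1$ and $g(r,0)=\min(r,K-1)$; this is monotone and submodular. The \emph{only} size-$k$ discrepancy occurs at $k=K$, between $(K,0)$ and $(K-1,1)$, and for $K=\sqrt n$ both of these configurations have probability $n^{-\Omega(\sqrt n)}$ under every sampling rate. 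With overwhelming probability the algorithm proceeds to step~4 even though $f$ is not a function of $|S|$ alone, so there is no well-defined $h$ on which to run your concavity argument. Your remark that ``only a handful of triples need attention'' does not help: the difficulty is not the number of witnessing triples but that each one individually is exponentially rare.

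What the paper does instead is weaker but sufficient. It shows (Lemma~\ref{lem:diag}) that with probability $\ge p$ the sampling hits every \emph{balanced} pair --- those $(r,l)$ with $|r-lK/L|\le 2$ --- so that on balanced pairs $f$ depends only on $r+l$ and defines a concave function $F$. The remaining work (Lemma~\ref{lem:walk}) lower-bounds the genuinely unbalanced values $f(k,0)$ and $f(0,l)$ by $\tfrac{k}{2n}f(V)$ and $\tfrac{l}{2n}f(V)$: one walks from $(0,0)$ toward a balanced point along a ``$K$-biased'' (resp.\ ``$L$-biased'') sequence of unit steps, uses diminishing returns to compare each increment $f(j,0)-f(j-1,0)$ to the corresponding $K$-step increment of $F$ along the walk, and then invokes concavity of $F$. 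This walk argument is where the real content lies, and it cannot be shortcut by the stronger claim you propose.
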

\begin{proof}
If the algorithm finds two sets $S_1$ and $S_2$ such that $|S_1|=|S_2|$ and 
$f(S_1)\neq f(S_2)$ during the sampling stage (steps 1 and 2), then the 
correctness of the output is implied by Lemma \ref{lem:diffn}.  If it does not 
find such sets, then it outputs the function $\hat{f}$ shown in step 4. It 
obviously satisfies the inequality for the case that $S=\emptyset$. For the 
case that $1\leq |S|\leq 2\sqrt{n}$, we observe that if the algorithm reaches step 4, 
it must be that the value of $f$ is identical for all singleton sets,  
i.e. $f(\{j\})=f(\{j'\})$ for all $j, j'\in V$. Now, $f(S)\geq f(\{j\}) = \hat{f}(S)$ by 
monotonicity. Also, by submodularity, 
$f(S)\leq \sum_{j\in S} f(\{j\}) = |S|\cdot \hat{f}(S)\leq 2\sqrt{n}\cdot \hat{f}(S)$, 
establishing the correctness for the case that $|S|\leq 2\sqrt{n}$. For the 
last case, $|S|> 2\sqrt{n}$, the inequality 
$f(S)\leq f(V) = 2\sqrt{n}\cdot \hat{f}(S)$ follows by monotonicity. For the 
other one, $\hat{f}(S)\leq f(S)$, we need an additional nontrivial lemma.

Since the 2P function $f(S)$ depends only on two values, $|S\cap R|$ and 
$|S\cap \bar{R}|$, let us denote by $f(k,l)$ the value of the function $f$ on 
a set $S$ with $|S\cap R|=k$ and $|S\cap \bar{R}|=l$. We say that such a 
set $S$ corresponds to the pair $(k,l)$.
We assume that $0<|R|<n$, because if $|R|=0$ or $|R|=n$, then $f(S)$ is a 
function that depends only on $|S|$, and it equally well can be represented as 
a 2P function with any other set $\hat{R}$.  Furthermore, we 
assume without loss of generality that $|R|\leq |\bar{R}|$ (otherwise 
interchange $R$ and $\bar{R}$), and let $K=|R|$ and $L=|\bar{R}|$ (which 
are not known to the algorithm).

\begin{lemma} \label{lem:walk}
For any $k$ and any $l$, $f(k,0)\geq \frac{k}{2n}  f(V)$ and 
$f(0,l)\geq \frac{l}{2n}  f(V)$.
\end{lemma}

Using this lemma to finish the proof,
let $k=|S\cap R|$ and 
$l= |S\cap \bar{R}|$. We observe that by monotonicity, 
$f(S)\geq f(k,0)$ and $f(S) \geq f(0,l)$. Moreover, 
since $|S|=k+l \geq 2\sqrt{n}$, we have $\max(k,l)\geq {\sqrt{n}}$. 
So by Lemma~\ref{lem:walk}, 
$f(S)\geq \frac{\max(k,l)}{2n} f(V)\geq \frac{f(V)}{2\sqrt{n}} = \hat{f}(S)$. 
\end{proof} 

\medskip 

The proof of Lemma~\ref{lem:walk} is involved, and we first sketch the main ideas.  We call a pair $(k,l)$ \emph{balanced} if 
$k/l$ is close to $K/L$. Then, with significant probability, the algorithm 
samples  sets corresponding to all balanced pairs. 
Since the algorithm checks for sets of the same size with  
different function values, we can assume that if it
proceeds to step 4, then for  sets $S$  corresponding to balanced pairs, 
$f(S)$ is a function $F$ that depends only on $|S|$. We use submodularity to 
show that  $F$ is concave.   Then we decompose $f(k,0)$ as 
$\sum_{i=1}^k [f(i,0)-f(i-1,0)]$ and lower-bound each term in this sum 
separately by comparing it to an increment
$f(i,j)-f(i-1,j)$ for some $j$ with $(i,j)$ balanced.
Then, using  concavity of $F$, we lower-bound their sum.

To prove Lemma~\ref{lem:walk}, we use a definition and several preliminary lemmas.

\begin{wrapfigure}{R}{5cm}
  \centerline{ \includegraphics[draft=false,scale=0.5]{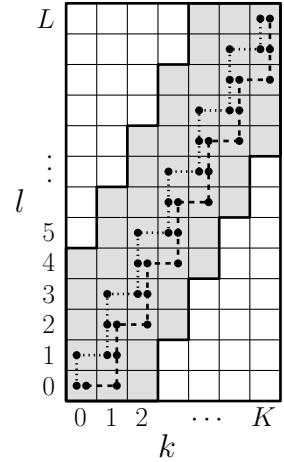} }
  \caption{In the table of pairs $(k,l)$, the shaded cells 
 correspond to  balanced pairs, and the $K$-biased (dashed) and $L$-biased 
 (dotted) walks are shown.}
\label{fig:walk}
\end{wrapfigure}

\begin{definition}
A pair of integers $(k,l)$ with $k\leq K$ and $l\leq L$ is said to be \emph{balanced} if 
it satisfies 
\begin{equation} \label{eqn:kl}
l\cdot \frac{K}{L}-2 ~\leq~ k ~\leq~ l\cdot \frac{K}{L}+2.
\end{equation}
\end{definition}

Intuitively, 
in a set corresponding to a balanced pair, the 
numbers of elements from $R$ and $\bar{R}$ are proportional to the sizes of the two sets (see Figure \ref{fig:walk}).

\begin{lemma} \label{lem:adpr}
Suppose that $m\leq n$ elements are selected independently with probability 
$q\in [\frac{1}{n}, \frac{n-1}{n}]$ 
each, and let $X$ denote the total number of selected elements.  Then for any 
integer $x\in [0,m-1]$, 
$$\frac{1}{n^2}~~ \leq ~~\frac{\Pr[X=x+1]}{\Pr[X=x]} ~~\leq~~ n^2.$$
\end{lemma}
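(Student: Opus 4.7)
The plan is to write the ratio of two consecutive binomial probabilities in closed form and then bound each factor using the stated ranges on $q$, $m$ and $x$. Since the $m$ trials are independent Bernoulli$(q)$, $X$ is Binomial$(m,q)$, and a direct computation gives
$$\frac{\Pr[X=x+1]}{\Pr[X=x]} \;=\; \frac{\binom{m}{x+1}\,q^{x+1}(1-q)^{m-x-1}}{\binom{m}{x}\,q^{x}(1-q)^{m-x}} \;=\; \frac{m-x}{x+1}\cdot\frac{q}{1-q}.$$
So the whole lemma reduces to bounding the two factors on the right.

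For the combinatorial factor $(m-x)/(x+1)$: since $x\in[0,m-1]$ and $m\le n$, we have $m-x\in[1,m]\subseteq[1,n]$ and $x+1\in[1,m]\subseteq[1,n]$. Hence $1/n \le (m-x)/(x+1) \le n$. For the odds factor $q/(1-q)$: since $q\in[1/n,(n-1)/n]$, we get $1-q\in[1/n,(n-1)/n]$, and so $1/(n-1) \le q/(1-q) \le n-1$. Multiplying the two pairs of bounds yields
$$\frac{1}{n(n-1)} \;\le\; \frac{m-x}{x+1}\cdot\frac{q}{1-q} \;\le\; n(n-1),$$
and since $1/(n(n-1))\ge 1/n^{2}$ and $n(n-1)\le n^{2}$, the claimed inequality follows.

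There is no real obstacle here: the only thing to be slightly careful about is verifying that the denominators $x+1$ and $1-q$ are strictly positive, which is guaranteed by the hypotheses $x\ge 0$ and $q\le (n-1)/n<1$. The lemma is used later inside the random-walk/concavity argument for Lemma~\ref{lem:walk}, where one needs to compare probabilities of hitting adjacent integer values when sampling with probability $q=i/n$; having a crude polynomial bound on the ratio (rather than a sharp one) is exactly the right form for a union-bound-style argument there.
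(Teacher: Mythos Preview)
Your proof is correct and essentially identical to the paper's: both compute the closed-form ratio $\dfrac{(m-x)\,q}{(x+1)(1-q)}$ and then bound it using the given ranges on $x$, $m$, and $q$. The only cosmetic difference is that the paper identifies the extremal points ($x=m-1,\,q=1/n$ and $x=0,\,q=(n-1)/n$) directly rather than bounding the two factors separately, and as a minor aside, the lemma is actually invoked in the proof of Lemma~\ref{lem:diag} (the sampling-coverage argument) rather than in Lemma~\ref{lem:walk} itself.
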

\begin{proof}
$$\frac{\Pr[X=x+1]}{\Pr[X=x]} ~=~ \frac{{m \choose x+1} ~q^{x+1}~ (1-q)^{m-x-1} }
{{m \choose x}~ q^{x} ~(1-q)^{m-x}} ~=~ 
\frac{(m-x)~q}{(x+1)(1-q)},$$
with the minimum value of $1/m(n-1)\geq 1/n^2$ achieved 
 at $x=m-1$ and $q=\frac{1}{n}$,  and the  maximum value of $m(n-1)\leq n^2$ 
achieved at $x=0$ and $q=\frac{n-1}{n}$.
\end{proof}

\begin{lemma} \label{lem:diag}
If Algorithm \ref{alg:learn} reaches step 4, then with probability at least $p$,  
for all balanced $(k_1,l_1)$ and $(k_2,l_2)$ such that $k_1+l_1=k_2+l_2$, 
it holds that $f(k_1, l_1) = f(k_2, l_2)$. In other words, for all balanced 
pairs $(k,l)$, the value of  $f(k,l)$ depends only on $k+l$.
\end{lemma}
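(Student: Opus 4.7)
The plan is to show that with probability at least $p$, the samples generated in step 2 of Algorithm \ref{alg:learn} include, for every balanced pair $(k,l)$, at least one set $S$ with $|S\cap R| = k$ and $|S\cap \bar{R}| = l$. Once this event holds, the conclusion is immediate: for any two balanced pairs $(k_1,l_1)$ and $(k_2,l_2)$ with $k_1+l_1 = k_2+l_2$, the corresponding sampled sets $S_1, S_2$ satisfy $|S_1| = |S_2|$, so if $f(k_1,l_1) \neq f(k_2,l_2)$ the algorithm would have detected a same-size pair with unequal $f$-values in step 3 and proceeded to reconstruct $f$ exactly, contradicting the hypothesis that step 4 is reached. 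The boundary sizes $k+l \in \{0,1,n\}$ are covered by the deterministic queries in step 1 (namely $\emptyset$, $V$, and all singletons, which realize $(0,0)$, $(K,L)$, $(1,0)$, $(0,1)$), so we need only analyze the random sampling for $i = k+l \in \{2,\ldots,n-1\}$.

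To bound the probability of realizing a given balanced pair, first observe that unpacking $|kL-lK| \le 2L$ together with $K+L = n$ gives $|k - Kq| \le 2L/n \le 2$ and, symmetrically, $|l - Lq| \le 2$, where $q = (k+l)/n$ is the sampling probability used in batch $i=k+l$. Since $X_R = |S\cap R| \sim \mathrm{Bin}(K,q)$ and $X_L = |S\cap \bar{R}| \sim \mathrm{Bin}(L,q)$ are independent, it suffices to lower-bound $\Pr[X_R=k]$ and $\Pr[X_L=l]$ separately. In the main regime $Kq \ge 2$ and $q \le 1/2$, applying Theorem \ref{thm:gen} with $\varepsilon = (k-Kq)/(Kq)$ so that $\lceil Kq(1+\varepsilon)\rceil = k$ and $\varepsilon^2 Kq/(1-q) = O(1)$ gives $\Pr[X_R=k] \ge c\,q\,K^{-3/2}\cdot\Omega(1) \ge \Omega(n^{-5/2})$; the subcase $k<Kq$ is reduced to this one by applying the same theorem to $K-X_R \sim \mathrm{Bin}(K,1-q)$, and the case $q > 1/2$ by considering the complement sample $\bar{S}$. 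In the degenerate regime $Kq<2$, we have $K \le 2/q \le 2n$, but more usefully $k$ takes only $O(1)$ possible values and the crude bound $\Pr[X_R=k] \ge \min(q,1-q)^K \ge n^{-K}$ already suffices because the balanced condition forces $K$ to be small; the analogous bound holds for $X_L$. Combining, we obtain $\Pr[(X_R,X_L)=(k,l)] \ge n^{-C}$ for some constant $C$.

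With $n^{10}\ln\bigl(4n/(1-p)\bigr)$ independent samples in batch $i$, the probability that none realizes a particular balanced $(k,l)$ with $k+l=i$ is at most $\exp\bigl(-n^{10-C}\ln(4n/(1-p))\bigr) \le (1-p)/(4n^2)$ for $n$ sufficiently large. Since the bound $|k - iK/n| \le 2$ allows at most five integer values of $k$ per sum $i$, there are at most $5n$ balanced pairs overall, and a union bound yields the good event with probability at least $1 - (1-p)/4 \ge p$. The main technical obstacle is the joint lower bound on $\Pr[(X_R,X_L)=(k,l)]$ in the degenerate regime where $Kq$ or $Lq$ is close to zero, since Theorem \ref{thm:gen} does not apply uniformly; fortunately, in that regime $K$ (or $L$) is itself tiny, so a direct binomial computation delivers a polynomial lower bound that is more than enough given the $n^{10}$ sampling budget.
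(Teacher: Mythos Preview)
Your overall plan matches the paper's: show that with probability at least $p$ the sampling in step~2 realizes every balanced pair, and then use step~3 to force equality of $f$-values along each diagonal. The deviation estimate $|k-Kq|\le 2$, $|l-Lq|\le 2$ and the count of $O(1)$ balanced pairs per diagonal are also the same as in the paper.

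The difference is in how you lower-bound $\Pr[X_R=k]$. The paper does this in one stroke, uniformly over all $(k,l)$: it observes that the mode $\mu_k$ of $\mathrm{Bin}(K,q)$ lies within~2 of $k$, that $\Pr[X_R=\mu_k]\ge 1/(K+1)\ge 1/n$, and then applies the adjacent-ratio bound (Lemma~\ref{lem:adpr}) at most twice to get $\Pr[X_R=k]\ge n^{-5}$. No case split is needed.

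Your route through Theorem~\ref{thm:gen} works in the main regime, but the degenerate case $Kq<2$ has a genuine gap. You write that the bound $\Pr[X_R=k]\ge\min(q,1-q)^K\ge n^{-K}$ suffices ``because the balanced condition forces $K$ to be small.'' This is false: the balanced condition constrains $k$ (indeed $k\le Kq+2<4$), not $K$. One can have $K=n/2$ and $q=3/n$, so $Kq=3/2<2$ while $K$ is linear in $n$; then $n^{-K}$ is superpolynomially small and your union bound fails. The fix is easy once you use the right quantity: with $q\le 1/2$ and $Kq<2$ one has $(1-q)^{K}\ge e^{-2Kq}>e^{-4}$, and since $k\le 3$ and $\binom{K}{k}\ge 1$, $q\ge 2/n$, one gets $\Pr[X_R=k]\ge (2/n)^{k}e^{-4}=\Omega(n^{-3})$. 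So the idea is salvageable, but as written the argument is incorrect, and the paper's mode-plus-ratio approach avoids the case analysis entirely.
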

\begin{proof}
The lemma follows if we show that with probability at least $p$, for each 
balanced $(k,l)$ with $k+l<n$, the algorithm samples at least one set $S$ corresponding to $(k,l)$.
This is because the algorithm verifies that the function value for the sets 
that it samples depends only on the set size. 

So consider a specific balanced pair $(k,l)$ and one random 
set $S$ generated by the 
iteration $i=k+l$ of step 2 of the algorithm. The probability of 
sampling each element in this iteration is $q=\frac{i}{n}=\frac{k+l}{K+L}$. Using 
(\ref{eqn:kl}) and its equivalent $(k-2)L/K \leq l \leq (k+2)L/K$,
 we see that this probability satisfies the following:
$$\frac{k}{K}-\frac{2L}{Kn} ~\leq ~ q ~\leq ~ \frac{k}{K}+\frac{2L}{Kn}~~~ {\rm and}~~~
\frac{l}{L}-\frac{2}{n} ~\leq ~ q ~\leq ~\frac{l}{L}+\frac{2}{n}.$$
So the expected value of $|S\cap R|$ is 
$qK\in [k-2L/n, k+2L/n] \subseteq [k-2,k+2]$. Similarly, the expected value of 
$|S\cap \bar{R}|$ is $qL \in [l-2, l+2]$.  
Let $\mu_k$ be the most likely number of sampled elements when independently sampling $K$ 
elements with probability $q$ each. Then $\mu_k$ is equal to either 
$\lfloor qK \rfloor$ or $\lceil qK \rceil$. From above considerations and 
because $k$ is an integer, we have that $\mu_k\in [k-2,k+2]$. 
Now, since $\mu_k$ is the most likely value, we know that 
$\Pr[|S\cap R|=\mu_k]\geq 1/(K+1) \geq 1/n$. By Lemma \ref{lem:adpr} (with $m=K$), 
$$\Pr[|S\cap R|=k] ~\geq~ \Pr[|S\cap R|=\mu_k] \cdot n^{-2\cdot|k-\mu_k|} ~\geq~ n^{-5}.$$
We similarly define $\mu_l$, observe that $\mu_l\in [l-2,l+2]$, and conclude 
that $\Pr[|S\cap \bar{R}|=l]\geq n^{-5}$. Since the two events are 
independent, the probability that both of them occur, and thus that $S$ 
corresponds to $(k,l)$, is at least $n^{-10}$. 

We observe that for any $i$, there are at most four balanced pairs $(k,l)$ such that 
$k+l=i$.  This is because if some pair $(k,l)$ 
satisfies (\ref{eqn:kl}), then the pair $(k-4, l+4)$ doesn't satisfy it:
$$k-4 ~\leq ~ \left(l\frac{K}{L} +2\right) -4 ~=~ l\frac{K}{L}-2 ~<~ (l+4)\frac{K}{L}-2.$$
So there is a total of at most $4n$ pairs $(k,l)$ for which we would like the 
algorithm to sample their corresponding sets. 
Since the number of trials for each value of $k+l$ is 
$n^{10} \ln \left(\frac{4n}{1-p}\right)$, the probability that 
a set corresponding to any particular pair $(k,l)$ is \emph{not} sampled is at most
$$ \left(1-n^{-10}\right)^{n^{10} \ln \left(\frac{4n}{1-p}\right)}
~\leq~ e^{-\ln \left(\frac{4n}{1-p}\right)} ~=~ \frac{1-p}{4n}.$$
Since there are at most $4n$ pairs of interest, by union bound we have that 
the probability that at least one of them remains unsampled is at most $(1-p)$. 
\end{proof}

\medskip

Suppose the condition in Lemma \ref{lem:diag} holds. Let us define a function 
$F(i)$ to be equal to $f(k,l)$ such that $k+l=i$ and $(k,l)$ is balanced. 
$F(i)$ is defined for all $i\in \{0,...,n\}$, since for any such $i$ there is 
at least one balanced pair $(k,l)$ with $k+l=i$.

\begin{lemma}
$F(i)$ is a non-decreasing concave function.
\end{lemma}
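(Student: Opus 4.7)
Monotonicity is the easy half. For any $i\in\{0,\dots,n-1\}$, I fix any balanced representative $(k,l)$ with $k+l=i$ and verify from the inequalities defining balance that at least one of $(k+1,l)$ and $(k,l+1)$ is again balanced; this uses only that $K/L\in[0,1]$ and the $\pm 2$ slack in the balance definition. Since a representative set for such a neighbor extends a representative set for $(k,l)$ by a single element, monotonicity of $f$ yields $F(i)\le F(i+1)$.

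For concavity I want $F(i+1)-F(i)\ge F(i+2)-F(i+1)$. My plan is to realize all three values along an ascending chain of balanced pairs $(k_0,l_0)\subset(k_1,l_1)\subset(k_2,l_2)$ in the lattice $\{0,\dots,K\}\times\{0,\dots,L\}$ obtained by repeatedly incrementing a single coordinate, so that each $F$-difference becomes the marginal of adding one specific element to a concrete set. If both increments are of the same type (both add an $R$-element or both add an $\bar R$-element), the two differences equal $g_R(k_0,l_0)$ and $g_R(k_1,l_1)$ (or the analogous $g_{\bar R}$ quantities), and submodularity of $f$ at nested sets immediately gives the desired decrease of marginals.

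The only delicate case is a mixed chain, in which $F(i+1)-F(i)$ and $F(i+2)-F(i+1)$ are marginals of different types and cannot be compared directly. Here I exploit Lemma~\ref{lem:diag}: since all balanced pairs at the same level share a common $F$-value, I have freedom in choosing representatives. I will split into two subcases. If both $(k_0+1,l_0)$ and $(k_0,l_0+1)$ are balanced, equating their $h$-values lets me re-express $F(i+1)$ so that both differences become same-type marginals, reducing to the easy case above. Otherwise, exactly one of $(k_0+1,l_0)$ and $(k_0,l_0+1)$ is balanced; in that event I will push the balance inequalities one more step to show that the ``doubly extended'' pair in the same unbalanced direction---namely $(k_0+2,l_0)$ or $(k_0,l_0+2)$ respectively---must be balanced at level $i+2$. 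Re-expressing $F(i+2)$ via this alternative representative again produces a same-direction chain, and the easy submodularity argument finishes concavity.

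The main obstacle is the bookkeeping in this last subcase: verifying that whenever one of the two coordinate extensions from $(k_0,l_0)$ is forced out of balance, the double extension in the opposite forced direction still falls within the $\pm 2$ slack. This is a short but slightly fiddly computation that relies on $K/L\le 1$ and on the fact that failure of balance at level $i+1$ pins $k_0$ to an extreme end of the balance band for $(k_0,l_0)$, which in turn forces the double-extension to land inside the balance band.
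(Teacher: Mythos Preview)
Your approach is correct, but it takes a different and more labor-intensive route than the paper. The paper sidesteps your entire mixed-chain case analysis by fixing, once and for all, the canonical representative $(k_i,l_i)=(\lfloor iK/n\rfloor,\lceil iL/n\rceil)$ at each level $i$ and verifying two facts: both $(k_i,l_i)$ and $(k_i+1,l_i)$ are balanced, and the sequences $\{k_i\}$, $\{l_i\}$ are nondecreasing. Consequently every increment $\Delta(i)=F(i+1)-F(i)$ is realized as the \emph{same} type of marginal, namely $f(k_i+1,l_i)-f(k_i,l_i)$, the marginal of adding one $R$-element. Since $(k_i,l_i)\le (k_{i+1},l_{i+1})$ coordinatewise, decreasing marginal returns of $f$ give $\Delta(i+1)\le\Delta(i)$ in one line, and monotonicity of $f$ gives $\Delta(i)\ge 0$. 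What your argument buys is that it never commits to a particular representative, so it would generalize more readily if the balance window were irregular; what the paper's argument buys is that the canonical choice makes the mixed-chain issue disappear entirely, replacing your two-level case split (and the associated boundary checks $k_0+2\le K$, $l_0+2\le L$) with a single direct application of submodularity. Incidentally, your phrase ``doubly extended pair in the same unbalanced direction'' is a slip---you mean the balanced direction, as your final paragraph (``opposite forced direction'') and your listed pairs make clear.
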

\begin{proof}
Let $\Delta(i) = F(i+1)-F(i)$. It suffices to show that the sequence of increments 
$\Delta(i)$ is non-negative and non-increasing.
For any $i$, we define a pair $(k_i,l_i) = 
\left(\left\lfloor \frac{iK}{n} \right\rfloor, \left\lceil \frac{iL}{n} 
\right\rceil \right)$. It can be verified that all pairs $(k_i, l_i)
$ as well as $(k_i+1, l_i)$ are balanced. Furthermore, $k_i+l_i=i$ 
 (and consequently  $k_i+1+l_i=i+1$), so that $f(k_i+1,l_i)-f(k_i,l_i)=\Delta(i)$. 
Also, both  $\{k_i\}$ and $\{l_i\}$ are non-decreasing sequences. The 
decreasing marginal values of the submodular function $f$ imply that 
$\Delta(i+1)= f(k_{i+1}+1,l_{i+1})-f(k_{i+1},l_{i+1}) \leq 
f(k_i+1,l_i)-f(k_i,l_i) = \Delta(i)$, 
showing that $\Delta(i)$'s are non-increasing. The monotonicity of $f$ implies that 
they are also non-negative.
\end{proof}

\medskip

We next define two sequences of pairs, $(k^K_i, l^K_i)$ and $(k^L_i, l^L_i)$, 
ranging from $i=0$ to $i=n$, 
which we call the \emph{$K$-biased} sequence (or walk) and the \emph{$L$-biased} 
sequence, respectively (see Figure \ref{fig:walk}). The properties of these 
two sequences will be used in  
the remainder of the proof. The definitions are inductive, with both sequences 
starting at $(0,0)$.
\begin{eqnarray*}
(k^K_{i+1}, ~l^K_{i+1}) &=& 
\begin{cases}
(k^K_i +1, ~l^K_i) & {\rm if }~~ k^K_i \leq l^K_i \cdot \frac{K}{L}\\
(k^K_i, ~l^K_i+1) & {\rm if }~~ k^K_i > l^K_i \cdot \frac{K}{L}
\end{cases}
\\
(k^L_{i+1}, ~l^L_{i+1}) &=& 
\begin{cases}
(k^L_i +1, ~l^L_i) & {\rm if }~~ k^L_i < l^L_i \cdot \frac{K}{L}\\
(k^L_i, ~l^L_i+1) & {\rm if }~~ k^L_i \geq l^L_i \cdot \frac{K}{L}
\end{cases}
\end{eqnarray*}
Let us call the change from $(k_i,l_i)$ to $(k_{i+1}, l_{i+1})$ in either of 
the two sequences a \emph{$K$-step} 
if the first component of the pair increases by one, and an \emph{$L$-step} if 
the second component increases.  The only difference between the two sequences 
is that when equality $k=l\cdot K/L$ holds, we take a $K$-step in the case of 
the $K$-biased sequence, and an $L$-step in the case of the $L$-biased 
sequence. 
For both sequences it holds that $k^K_i+l^K_i=k^L_i+l^L_i = i$,~ $k^K_i$ and 
$k^L_i$ range between $0$ and $K$, and $l^K_i$ and 
$l^L_i$ range between $0$ and $L$.

\begin{lemma}
All pairs in the $K$-biased and $L$-biased sequences are balanced.
\end{lemma}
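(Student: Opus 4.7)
The plan is to analyze each walk through the one-dimensional signed deviation
$$\delta_i \;=\; k_i - l_i \cdot \tfrac{K}{L}$$
from the diagonal line $k = l K/L$. A $K$-step increments $k$ by one and leaves $l$ fixed, so it adds $+1$ to $\delta$; an $L$-step increments $l$ and leaves $k$ fixed, so it subtracts $K/L$. Since the paper normalizes to $K \leq L$, each step changes $\delta$ by at most $1$ in absolute value.

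I would then prove by induction on $i$ that $\delta_i^K \in (-1, 1]$ along the $K$-biased walk. The base case $\delta_0^K = 0$ is in range. For the inductive step, the decision rule takes a $K$-step exactly when $\delta_i^K \leq 0$: if $\delta_i^K \in (-1, 0]$ then $\delta_{i+1}^K = \delta_i^K + 1 \in (0, 1]$, and if $\delta_i^K \in (0, 1]$ then $\delta_{i+1}^K = \delta_i^K - K/L \in (-K/L,\, 1 - K/L] \subseteq (-1, 1]$. The $L$-biased walk is identical except that equality $\delta = 0$ triggers an $L$-step instead of a $K$-step, so the same induction gives $\delta_i^L \in [-1, 1)$. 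In either case $|k_i - l_i K/L| \leq 1$, comfortably inside the balance window $|k - lK/L| \leq 2$ required by~(\ref{eqn:kl}).

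A small amount of book-keeping is needed to confirm that the walks never leave the rectangle $[0,K] \times [0,L]$ (so that the condition ``$k \leq K$ and $l \leq L$'' in the balance definition is satisfied). This is forced by the decision rule: if $k_i = K$ and $l_i < L$ then $\delta_i = K(1 - l_i/L) > 0$, which selects an $L$-step, and symmetrically if $l_i = L$ and $k_i < K$ then $\delta_i = k_i - K < 0$, which selects a $K$-step. Hence neither coordinate ever steps past its ceiling, and at $i = n$ both walks end at $(K, L)$.

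I do not expect a real obstacle here; the proof is essentially a potential-function argument on the single quantity $\delta_i$. The only place that requires care is handling the equality case $\delta_i = 0$ consistently in the two walks (this is precisely the reason two walks are defined, so that either a $K$-step or an $L$-step at equality is available when needed in the subsequent decomposition argument for $f(k,0)$ and $f(0,l)$ in Lemma~\ref{lem:walk}).
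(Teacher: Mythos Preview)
Your proof is correct and follows essentially the same inductive step-by-step case analysis as the paper. Your framing via the single potential $\delta_i = k_i - l_i K/L$ is a bit cleaner and in fact yields the sharper invariant $|\delta_i|\le 1$ (the paper only re-establishes $|\delta_i|\le 2$ at each step); you also supply the short argument that neither coordinate exceeds its ceiling, which the paper asserts just before the lemma but does not prove.
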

\begin{proof}
The proof is by induction, and it is the same for both sequences, so we denote 
either sequence by $(k_i,l_i)$. The first pair $(0,0)$ is balanced. Now we assume 
that the pair $(k_i,l_i)$ is balanced, and would like to show that the pair 
$(k_{i+1},l_{i+1})$ is also balanced.  Suppose 
$(k_{i+1}, l_{i+1}) = (k_i+1, l_i)$. Then it must be that 
$k_i \leq l_i\cdot \frac{K}{L}$. Then
$$l_i\cdot \frac{K}{L} - 2 ~\leq~ k_i ~\leq~~ k_i +1 ~~\leq ~ l_i\cdot \frac{K}{L} +1.$$
If $(k_{i+1}, l_{i+1}) = (k_i, l_i+1)$, then it must be that $k_i \geq l_i\cdot \frac{K}{L}$. Then
$$(l_i+1)\cdot \frac{K}{L} -2 ~\leq~ l_i\cdot\frac{K}{L}~ \leq ~~k_i ~~\leq ~l_i\cdot\frac{K}{L}+2 ~\leq~ (l_i+1)\cdot \frac{K}{L} +2,$$
with the leftmost inequality following because $K/L\leq 1$.
\end{proof}

\begin{lemma} \label{lem:alt}
In the $K$-biased sequence, every $K$-step is followed by at most 
$\left\lceil \frac{L}{K} \right\rceil$ $L$-steps. In the $L$-biased sequence, every $L$-step is 
followed by at most one $K$-step.
\end{lemma}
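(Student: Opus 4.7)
The plan is to verify each claim by tracking exactly the inequality that triggered the preceding step, and then see how many steps of the opposite type the walk can take before the triggering inequality flips back.

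For the $K$-biased sequence, suppose a $K$-step is taken from some balanced pair $(k_i,l_i)$ to $(k_i+1,l_i)$. By definition, this required $k_i \le l_i \cdot K/L$. I would then count the consecutive $L$-steps following it: after $t$ of them, the walk sits at $(k_i+1, l_i+t)$, and the next step is a $K$-step precisely when $k_i+1 \le (l_i+t)\cdot K/L$, equivalently $t \ge (k_i+1)L/K - l_i$. Using the triggering inequality $k_i \le l_i K/L$, I would bound $(k_i+1)L/K - l_i \le L/K$, so the smallest nonnegative integer $t$ satisfying the inequality is at most $\lceil L/K \rceil$. This gives the first claim.

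For the $L$-biased sequence, suppose an $L$-step is taken from $(k_i,l_i)$ to $(k_i,l_i+1)$, which requires $k_i \ge l_i \cdot K/L$. If the very next step is a $K$-step, the walk moves to $(k_i+1, l_i+1)$. I would then compare $k_i+1$ against $(l_i+1)K/L$: using the triggering inequality and $K/L \le 1$ (which holds since we assumed $K \le L$), I get $k_i+1 \ge l_i K/L + 1 \ge l_i K/L + K/L = (l_i+1)K/L$. Hence the condition for a $K$-step, $k < l\cdot K/L$, fails at $(k_i+1,l_i+1)$, so the following step must be an $L$-step. This gives the second claim.

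The argument is essentially a one-line algebraic check in each case, so I do not expect a genuine obstacle; the only subtle point is being careful with the strict versus non-strict inequality in the two definitions (the $K$-biased walk uses $k \le l\cdot K/L$ for a $K$-step, while the $L$-biased walk uses $k < l\cdot K/L$), which is exactly what forces the asymmetry $\lceil L/K\rceil$ versus $1$ between the two conclusions.
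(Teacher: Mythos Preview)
Your proof is correct and follows essentially the same approach as the paper: in each case you start from the inequality that triggered the initial step and verify algebraically that after the claimed number of opposite-type steps the rule forces a switch back, using $K/L \le 1$ for the $L$-biased case. The paper checks directly that after $\lceil L/K\rceil$ $L$-steps the $K$-step condition holds again, whereas you parametrize by $t$ and bound it, but the underlying computation is identical.
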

\begin{proof}
Suppose that the $K$-biased sequence, after some point $(k,l)$, takes one $K$-step 
followed by $\left\lceil \frac{L}{K} \right\rceil$ $L$-steps, reaching the 
point $(k+1, l+\left\lceil \frac{L}{K} \right\rceil)$. 
Since the step after $(k,l)$ is 
a $K$-step, it must be that $k\leq lK/L$. So
$$\left(l+\left\lceil \frac{L}{K} \right\rceil \right)\cdot \frac{K}{L} ~\geq~
l\cdot \frac{K}{L} + 1 ~\geq~ k+1,$$
which means that the next step in the $K$-biased sequence will be a $K$-step.

Similarly, for the $L$-biased walk, suppose that from some point $(k,l)$, the 
sequence takes an $L$-step, followed by a $K$-step, reaching the point 
$(k+1,l+1)$. Then $k\geq lK/L$ implies that
$$(l+1)\cdot \frac{K}{L} ~=~ l\cdot\frac{K}{L}+\frac{K}{L} ~\leq~ 
l\cdot\frac{K}{L}+1 ~\leq~ k+1,$$
and thus the next step is an $L$-step.
\end{proof}

\medskip

\begin{proof}[Proof of Lemma \ref{lem:walk}.]
To lower-bound the value of $f(k,0)$, we consider the $K$-biased walk from 
$(0,0)$ to a point $(k,l')$ which is the last point before the $K$-step to 
$(k+1,\cdot)$.  We let $f(k,0)= F(0)+\sum_{j=1}^k \delta(j)$,  
where $\delta(j) = f(j,0)-f(j-1,0)$. For each $K$-step in the $K$-biased walk, 
where $k^K_{i-1}=j-1$ and $k^K_{i}=j$, let 
$\Delta^K(j) = f(k^K_i, l^K_i)-f(k^K_{i-1}, l^K_{i-1}) = 
f(j, l^K_i)-f(j-1, l^K_{i-1})$. By submodularity of $f$ it follows that 
$\Delta^K(j)\leq \delta(j)$.  

We claim that $\sum_{j=1}^k \Delta^K(j) 
\geq [f(k,l')-F(0)]/(1+ \left\lceil \frac{L}{K} \right\rceil )$.  
In other words, at least $1/(1+ \left\lceil \frac{L}{K} \right\rceil )$
fraction of the increase in $F(\cdot)$, as we proceed in the $K$-biased walk, 
is due to the $K$-steps. This follows from several observations. First, 
the $K$-biased walk starts with a $K$-step. Second, by Lemma \ref{lem:alt}, 
each $K$-step is followed by no more than 
$\left\lceil \frac{L}{K} \right\rceil$ $L$-steps. And third, $\Delta^K(j)$ is a 
decreasing sequence (by concavity of $F$). 

Further, by concavity of $F$, we have that 
$f(k,l') \geq \frac{k+l'}{n} F(n)$. By definition of $l'$, 
we have $l'\geq kL/K$. Also, $1+\left\lceil \frac{L}{K} \right\rceil \leq 2(L/K+1)$.
Putting everything together, we have
\begin{eqnarray*}
f(k,0) &=& F(0)+\sum_{j=1}^k \delta(j) ~\geq ~
F(0) + \sum_{j=1}^k\Delta^K(j) ~\geq~
F(0) + \frac{f(k,l')-F(0)}{1+ \left\lceil \frac{L}{K} \right\rceil } ~\geq~
\frac{f(k,l')}{1+ \left\lceil \frac{L}{K} \right\rceil}\\
&\geq & \frac{k+l'}{n} \frac{F(n)}{2 (L/K+1)} 
~\geq~ \frac{k(L/K+1)}{n} \frac{F(n)}{2 (L/K+1)} 
~=~ \frac{k}{2n}F(n)
\end{eqnarray*}

To bound $f(0,l)$, we consider the $L$-biased walk from $(0,0)$ to $(k',l)$ 
for some $k'$. Because of concavity of $F$, the $L$-steps in the walk account 
for at least half the increase in $f$, yielding 
$f(0,l)\geq \frac{1}{2} f(k',l)$. Also, 
$f(k',l)\geq \frac{k'+l}{n} F(n) \geq \frac{l}{n} F(n)$. So we get that 
$f(0,l)\geq \frac{l}{2n} F(n)$.
\end{proof}


\section{Acknowledgements}
We thank Mark Sandler for his help with some of the calculations and Satoru Iwata for useful discussions.


{\small
\bibliographystyle{plain}
\bibliography{../../bib/names,../../bib/conferences,../../bib/bibliography}
}


\end{document}